\newtheorem{thm}{Theorem}[section]{\bfseries}{\itshape}
\newtheorem{lma}[thm]{Lemma}{\bfseries}{\itshape}
{\bfseries}{\itshape}
{\bfseries}{\itshape}
{\bfseries}{\itshape}
\newtheorem{prop}[thm]{Proposition}{\bfseries}{\itshape}
\newtheorem{adef}{Definition}{\bfseries}{\itshape}
\DeclareMathOperator{\sig}{sig}
\DeclareMathOperator{\csig}{csig}
\DeclareMathOperator{\del}{del}
\DeclareMathOperator{\cdel}{cdel}
\DeclareMathOperator{\st}{st}
\DeclareMathOperator{\cst}{cst}
\DeclareMathOperator{\cintr}{c-intr}
\DeclareMathOperator{\fgt}{fgt}
\DeclareMathOperator{\cfgt}{c-fgt}
\DeclareMathOperator{\join}{join}
\DeclareMathOperator{\cjoin}{c-join}
\DeclareMathOperator{\image}{Im}
\DeclareMathOperator{\dom}{Dom}
\newcommand{\compprobname}{$\mathcal{T}_{h+1}$-\textsc{Free Edge Deletion}}
\newcommand{\genprobname}{$\mathcal{F}$-\textsc{Free Edge Deletion}}
\newcommand{\genruntime}{$2^{O(|\mathcal{F}|w^r)}n$}
\begin{document}

\title{Deleting edges to restrict the size of an epidemic: a new application for treewidth}
\author{Jessica Enright\thanks{Computing Science and Mathematics, University of Stirling \texttt{jae@cs.stir.ac.uk}} \and Kitty Meeks\thanks{School of Computing Science, University of Glasgow \texttt{kitty.meeks@glasgow.ac.uk}}}
\date{April 2017}
\maketitle

\maketitle

\begin{abstract}
Motivated by applications in network epidemiology, we consider the problem of determining whether it is possible to delete at most $k$ edges from a given input graph (of small treewidth) so that the resulting graph avoids a set $\mathcal{F}$ of forbidden subgraphs; of particular interest is the problem of determining whether it is possible to delete at most $k$ edges so that the resulting graph has no connected component of more than $h$ vertices, as this bounds the worst-case size of an epidemic.  While even this special case of the problem is NP-complete in general (even when $h=3$), we provide evidence that many of the real-world networks of interest are likely to have small treewidth, and we describe an algorithm which solves the general problem in time \genruntime ~on an input graph having $n$ vertices and whose treewidth is bounded by a fixed constant $w$, if each of the subgraphs we wish to avoid has at most $r$ vertices.  For the special case in which we wish only to ensure that no component has more than $h$ vertices, we improve on this to give an algorithm running in time $O((wh)^{2w}n)$, which we have implemented and tested on real datasets based on cattle movements.
\end{abstract}

\section{Introduction}

Network epidemiology seeks to understand the dynamics of disease spreading over a network or graph, and is an increasingly popular method of modelling real-world diseases.  The rise of network epidemiology corresponds to a rapid increase in the availability of contact network datasets that can be encoded as networks or graphs: typically, the vertices of the graph represent agents that can be infected and infectious, such as individual humans or animals, or appropriate groupings of these, such as cities, households, or farms. The edges are then the potentially infectious contacts between those agents.   Considering the contacts within a population as the edges of a graph can give a large improvement in disease modelling accuracy over mass action models, which assume that a population is homogeneously mixing.  For example, if we consider a sexual contact network in which the vertices are people and the edges are sexual contacts, the heterogeneity in contacts is very important for explaining the pattern and magnitude of an AIDS epidemic \cite{andersonAIDS}.

Our work has been especially motivated by the idea of controlling diseases of livestock by preventing disease spread over livestock trading networks. As required by European law, individual cattle movements between agricultural holdings in Great Britain are recorded by the 
British Cattle Movement Service (BCMS) \cite{bcmsDescription}; in early 2014, this dataset contained just under 300 million trades and  just over 133,000 agricultural holdings.  
For modelling disease spread across the British cattle industry, it is common to create vertices from farms, and edges from trades of cattle between those farms: a disease incursion starting at a single farm could spread across this graph through animal trades, as is thought to have happened during the economically-damaging 2001 British foot-and-mouth disease crisis \cite{fmdNetwork}.  

We are interested in controlling or limiting the spread of disease on this sort of network, and so have focussed our attention on edge deletion, which might correspond to forbidden trade patterns or, more reasonably, extra vaccination or disease surveillance along certain trade routes.  Introducing extra controls of this kind is costly, so it is important to ensure that this is done as effectively as possible.  Many properties that might be desirable from the point of view of restricting the spread of disease can be expressed in terms of forbidden subgraphs: edge-deletion to achieve a maximum degree of at most $d$ is equivalent to edge-deletion to a graph avoiding the star $K_{1,d+1}$, and edge-deletion to maximum component size at most $h$ is equivalent to edge-deletion to a graph avoiding all trees on $h+1$ vertices.  Clearly we can also combine criteria of this kind, for example edge-deletion to a graph which has maximum component size at most $h$ and maximum degree at most $d$.  We are therefore concerned with the following general problem.
\\

\hangindent=\parindent
$\mathcal{F}$-\textsc{Free Edge Deletion}\\
\textit{Input:} A graph $G = (V,E)$ and an integer $k$.\\
\textit{Question:} Does there exist $E' \subseteq E$ with $|E'| = k$ such that $G \setminus E'$ does not contain any $F \in \mathcal{F}$ as a subgraph?\\

This is in fact a special case of the more general problem in which we seek to avoid a set $\mathcal{F}$ of graphs as \emph{induced} subgraphs (which corresponds to edge-deletion to a hereditary class of graphs).  We have chosen to focus on the special case of edge-deletion to a monotone class of graphs (that is, a class closed under deletion of vertices and edges) as it is reasonable to assume in epidemiological applications that if we wish to avoid some subgraph $F$ then we also wish to avoid any graph $F'$ obtained from $F$ by adding edges.  Moreover, this assumption improves the running time of the algorithm by decreasing the size of the family $\mathcal{F}$ we wish to avoid, compared with expressing our target class in terms of forbidden induced subgraphs (for example, only one forbidden subgraph is required to define the class of graphs with maximum degree at most $d$, but to express this in terms of forbidden induced subgraphs we would have to forbid every induced subgraph on $d + 2$ vertices that contains a vertex of degree $d+1$).  However, it is straightforward to adapt the algorithm described in Section \ref{sec:gen-alg} to consider induced subgraphs.  The algorithm can also easily be adapted to deal with different costs associated with the deletion of different edges (so as to decide whether it is possible to delete edges with a total cost of at most $k$ to remove all copies of subgraphs from $\mathcal{F}$).

A special case of particular interest is the situation in which $\mathcal{F}$ is the set of all trees on $h+1$ vertices, so that we are deleting edges in order to obtain a graph in which every connected component contains at most $h$ vertices; $h$ is then an upper bound on the number of vertices which may, in the worst case, be infected from a single initially infected vertex.  We denote by $\mathcal{T}_{h+1}$ the set of all trees on $h+1$ vertices, so this special case is the problem \compprobname.  For this case, we also consider two straightforward extensions of the problem that are relevant for real-world applications: 
\begin{itemize}
\item assigning different weights to different vertices (e.g. corresponding to the number of animals in a particular animal holding), and seeking to bound the total weight of each connected component;
\item imposing different limits on the size of components containing individual vertices (for example, we might want to enforce a smaller size limit for components containing certain vertices considered to be of particularly high risk).
\end{itemize}

Even the special case \compprobname ~of our general problem is intractable in general for constant $h$, as demonstrated in the following proposition.  The reduction relies on the observation that the maximum number of edges in a graph having maximum component size $h$ is obtained if the graph is a disjoint union of $h$-cliques. 
\begin{prop}
	\compprobname ~is NP-complete for every $h \geq 3$.
	\label{prop:hardness}
\end{prop}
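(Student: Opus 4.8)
The plan is to verify membership in NP and then reduce from the classical \textsc{Partition Into Triangles} problem: given a graph $H$ on $3q$ vertices, does $V(H)$ admit a partition into $q$ triples each inducing a triangle? This is a well-known NP-complete problem. Membership of \probname{} in NP is immediate: a set $E'$ with $|E'| = k$ serves as a certificate, since one can compute the connected components of $G \setminus E'$ in linear time and check that none has more than three vertices. (Recall that a graph contains no tree on four vertices as a subgraph precisely when every component has at most three vertices, because any connected graph on $\ge 4$ vertices has a spanning tree, which is a tree on $\ge 4$ vertices and hence contains $P_4$ or $K_{1,3}$.)

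For the reduction, take an instance $H = (V,E)$ of \textsc{Partition Into Triangles} with $|V| = 3q$. If $|E| < 3q$ then $H$ has no triangle partition --- such a partition uses $3q$ distinct edges --- so output a fixed no-instance of \probname{}, e.g.\ $(K_4, 0)$. Otherwise output $(H, k)$ with $k = |E| - 3q \ge 0$; this is clearly computable in polynomial time.

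Correctness rests on the extremal fact (the $h=3$ case of the observation preceding the proposition): a graph on $3q$ vertices in which every component has at most three vertices has at most $3q$ edges, with equality if and only if it is a vertex-disjoint union of $q$ triangles. Indeed, if $a, b, c$ count the components on $1, 2, 3$ vertices, then $a + 2b + 3c = 3q$ while the number of edges is at most $b + 3c \le a + 2b + 3c = 3q$; equality throughout forces $a = b = 0$ and every $3$-vertex component to be a triangle, so $c = q$. Using this: if $H$ has a triangle partition, deleting the $|E| - 3q$ edges outside the partition leaves exactly its $q$ triangles, so $(H,k)$ is a yes-instance. Conversely, if $|E'| = k$ and $H \setminus E'$ has all components of size at most $3$, then $H \setminus E'$ has $|E| - k = 3q$ edges on $3q$ vertices, hence by the extremal fact it is a disjoint union of $q$ triangles, whose vertex sets give a triangle partition of $H$.

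I do not expect a genuine obstacle here: the only points needing care are the degenerate cases (non-divisibility by $3$, or too few edges, both absorbed into the fixed no-instance) and confirming that the ``exactly $k$'' wording of the problem is harmless --- in the yes-case we delete precisely $k = |E| - 3q$ edges, and more generally the target property can only improve under further deletions. The same argument with $K_h$ in place of $K_3$ (reducing from \textsc{Partition Into $K_h$}) yields NP-completeness of \probname{} for every fixed $h \ge 3$.
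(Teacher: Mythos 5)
Your proof is correct and follows essentially the same route as the paper's: both reduce from the problem of partitioning the vertex set into triangles (called \textsc{Perfect Triangle Cover} in the paper) with $k = |E| - n$, and both rely on the extremal observation that a graph on $n$ vertices with all components of size at most $3$ has at most $n$ edges, with equality only for a disjoint union of triangles. Your write-up is somewhat more careful than the paper's (you make the pigeonhole step explicit and handle the degenerate cases), but the underlying argument is the same.
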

\begin{proof}
	First, observe that an edge set of size $k$ to be deleted suffices as a certificate for this problem, therefore \compprobname ~is in NP. We prove NP-hardness by means of a reduction from the following problem, shown to be NP-hard in \cite{gareyJohnson}:\\
	
	\hangindent=\parindent
	\textsc{Perfect Triangle Cover}\\
	\textit{Input:} A graph $G = (V,E)$.\\
	\textit{Question:} Does there exist a set of vertex-disjoint triangles that cover all vertices in the graph?\\
	
	\hangindent=0pt
	
	Starting with an instance $G = (V, E)$ of \textsc{Perfect Triangle Cover} (where $G$ has $n$ vertices), we will produce an instance of $\mathcal{T}_{h+1}$-\textsc{Free Edge Deletion} for arbitrary $h \geq 3$ that is a \textbf{yes} instance if and only if $G$ is a \textbf{yes} instance of \textsc{Perfect Triangle Cover}.
	
	We do this via an intermediate problem, which is a generalisation of \textsc{Perfect Triangle Cover} to perfect arbitrarily-sized clique covers: \\
	
	\hangindent=\parindent
	\textsc{Perfect $K_h$ Cover}\\
	\textit{Input:} A graph $G = (V,E)$.\\
	\textit{Question:} Does there exist a set of vertex-disjoint cliques of size $h$ that cover all vertices in the graph?\\
	
	\hangindent=0pt
	
	Note that this problem is in NP: a set of covering cliques constitutes a certificate.  In all instances of \textsc{Perfect $K_h$ Cover}, we will assume the graph has a number of vertices divisible by $h$: otherwise this is trivially a \textbf{no} instance.  

	We show how to transform an instance of \textsc{Perfect $K_h$ Cover} to \textsc{Perfect $K_{h+1}$ Cover}: let $G = (V, E)$ be an instance of \textsc{Perfect $K_h$ Cover}: produce graph $G'$  by adding to $G$ an independent set of size $\frac{|V|}{h}$, with every element of the independent set adjacent to all vertices in $V$.  We claim that $G'$ is a \textbf{yes} instance of \textsc{Perfect $K_{h+1}$ Cover} if and only if $G$ is a \textbf{yes} instance of \textsc{Perfect $K_h$ Cover}.  First, suppose that $G'$ is a \textbf{yes} instance of \textsc{Perfect $K_{h+1}$ Cover}: then the intersection with $V$ of the $K_{h+1}$ that perfectly cover the vertices of $G'$ are a set of $K_h$ that perfectly cover the vertices of $G$.  Conversely, suppose that $G$ is a \textbf{yes} instance of \textsc{Perfect $K_h$ Cover}, then if we extend each of the $K_h$ that perfectly cover the vertices of $G$ by exactly one of the vertices in the new independent set, we have a set of $K_{h+1}$ that perfectly cover the vertices of $G'$.
	
	Note that \textsc{Perfect Triangle Cover} is exactly \textsc{Perfect $K_3$ Cover}: then by iteration we have that \textsc{Perfect $K_h$ Cover} is NP-complete for any $i \geq 3$. 
	
	We now reduce \textsc{Perfect $K_h$ Cover} to $\mathcal{T}_{h+1}$-\textsc{Free Edge Deletion}.  If $G = (V, E)$ is an instance of \textsc{Perfect $K_h$ Cover},  let $G = (V, E)$ and $k = |E| - \frac{1}{2}(h-1)n$  be an instance of $\mathcal{T}_{h+1}$-\textsc{Free Edge Deletion}; we claim that $G$ is a \textbf{yes} instance of \textsc{Perfect $K_h$ Cover} if and only if $(G,k)$ is a \textbf{yes} instance of $\mathcal{T}_{h+1}$-\textsc{Free Edge Deletion}.
	
	Suppose first that $G = (V, E)$ is a \textbf{yes} instance of \textsc{Perfect $K_h$ Cover}, so there exists a set of vertex-disjoint $K_h$ that cover all vertices of $G$; let $E'$ be the set of edges induced by those $K_h$.  Note that $E'$ contains exactly $\frac{n}{h}\binom{h}{2} =  \frac{1}{2}(h-1)n$ edges, so the edge-set $E \backslash E'$ is of size $|E| - \frac{1}{2}(h-1)n = k$; moreover, as there are no edges in $E'$ between distinct $K_h$, the graph $G \backslash (E \backslash E')$ contains no connected component on more than $h$ vertices.  Thus $(G,k)$ is a \textbf{yes} instance of $\mathcal{T}_{h+1}$-\textsc{Free Edge Deletion}.
	
	Conversely, suppose that $(G,k)$ is a \textbf{yes} instance of $\mathcal{T}_{h+1}$-\textsc{Free Edge Deletion}. Then there exists an edge set $F$ such that $|F| = k = |E| - \frac{1}{2}(h-1)n$ and  every connected component of $G \backslash F$ is of size $h$ or less; by a pigeonhole argument, these components must all be $K_h$.  Therefore $G$ is a \textbf{yes} instance of  \textsc{Perfect $K_h$ Cover}, as claimed. 

\end{proof}

In order to develop useful algorithms for real-world applications, we therefore need to exploit structural properties of the input network.  In Section \ref{sec:real-tw} we provide evidence that many animal trade networks of interest are likely to have small treewidth, and in Section \ref{sec:gen-alg} we describe an algorithm to solve \genprobname ~whose running time on an $n$-vertex graph of treewidth $w$ is bounded by \genruntime, if every graph in $\mathcal{F}$ has at most $r$ vertices; this algorithm is easily adapted to output an optimal solution.  In Section \ref{sec:cpt-alg} we then improve on this for the special case of \compprobname, describing an algorithm to solve this problem in time $O((wh)^{2w}n)$ on an $n$-vertex graph of treewidth $w$.  Many problems that are thought to be intractable in general are known to admit polynomial-time algorithms when restricted to graphs of bounded treewidth, often by means of a dynamic programming strategy similar to that used to attack the problem considered here; however, to the best of the authors' knowledge, the usefulness of such algorithms for solving real-world network problems has yet to be investigated thoroughly. 

In reality, policy decisions about where to introduce controls are likely to be influenced by a range of factors, which cannot all be captured adequately in a network model.  Thus, the main application of our algorithm will be in comparing any proposed strategy with the theoretical optimum: a policy-maker can determine whether there is a solution with the same total cost that results in a smaller maximum component size.  We provide an example of an experimental application of our algorithm to cattle trading networks in Section \ref{sec:expt}.

In the remainder of this section, we begin by reviewing previous related work in Section \ref{sec:previous} before introducing some important notation in Section \ref{sec:notation} and reviewing the key features of tree decompositions in Section \ref{sec:treewidth}.

\subsection{Review of previous work}
\label{sec:previous}

From a combinatorial perspective, we are concerned here with \emph{edge-deletion problems}.  An edge-deletion problem asks if there is a set of at most $k$ edges that can be deleted from an input graph to produce a graph in some target class.  In contrast to the related well-characterised vertex-deletion problems \cite{Lewis80}, there is not yet a complete characterisation of the hardness of edge-deletion problems by target graph class.

Yannakakis \cite{yannaSTOC1978} gave early results in edge-deletion problems, showing that edge-deletion to planar graphs,
outer-planar graphs, line graphs, and  transitive digraphs is NP-complete.  Subsequently, Watanabe, Ae and Nakamura \cite{watanabe1983} showed that edge-deletion problems are NP-complete if the target graph class can be finitely characterised by 3-connected graphs.  There are a number of further hardness results known for edge-deletion to well-studied graph classes, including for interval and unit interval graphs \cite{goldberg1993}, cographs \cite{elmallah1988}, and threshold graphs \cite{margot1994} and, as noted in \cite{natanzon2001}, hardness of edge-deletion to bipartite graphs follows from the hardness of a MAX-CUT problem.  Natanzon, Shamir and Sharan \cite{natanzon2001} further showed NP-completeness of edge-deletion to disjoint unions of cliques, and perfect, chain, chordal, split, and asteroidal-triple-free graphs, but also give polynomial-time algorithms, in the special case of the input graph having bounded degree, for edge-deletion to chain, split, and threshold graphs.  

Given the large number of hardness results in the literature, it is natural to consider the parameterised complexity of these problems.  Cai \cite{caiApprox} initiated this investigation, showing that edge-deletion to a graph class characterisable by a finite set of forbidden \emph{induced} subgraphs is fixed-parameter tractable when parameterised by $k$ (the number of edges to delete): he gave an algorithm to solve the problem in time $O(r^{2k}\cdot n^{r+1})$, where $n$ is the number of vertices in the input graph and $r$ is the maximum number of vertices in a forbidden induced subgraph.  Further fpt-algorithms have been obtained for edge-deletion to split graphs \cite{ghoshSplit2012} and to chain, split, threshold, and co-trivially perfect graphs \cite{guoKernels}.  When considering graphs of small treewidth, our algorithm for \genprobname ~(and indeed its adaptation to deal with forbidden induced subgraphs) represents a significant improvement on Cai's algorithm, with our running time of \genruntime.  While the fixed parameter tractability of this problem (parameterised by $r$, the maximum number of vertices in any element of $\mathcal{F}$) restricted to graphs of bounded treewidth does follow from the optimization version of Courcelle's Theorem \cite{arnborg91,courcelle93}, this does not lead to a practical algorithm for addressing real-world problems.  Note that Proposition \ref{prop:hardness} implies that parameterisation by $r$ alone will not be sufficient to give an fpt-algorithm. 

The specific problem of modifying a graph to bound the maximum component size has previously been studied both in the setting of epidemiology \cite{li11} and in the study of network vulnerability \cite{gross13,drange14}.  The edge-modification version we consider here appears in the literature under various names, including the \emph{component order edge connectivity problem} \cite{gross13} and the \emph{minimum worst contamination problem} \cite{li11}.  Li and Tang \cite{li11} show that it is NP-hard to approximate the minimisation version of the problem to within $2 - \epsilon$, while Gross et.~al.\cite{gross13} describe a polynomial-time algorithm to solve the problem when the input graph is a tree.

\subsection{Notation and definitions}
\label{sec:notation}

Unless otherwise stated, all graphs are simple, undirected, and loopless.  For graph $G = (V, E)$, $V=V(G)$ is the vertex set of $G$, and $E=E(G)$ the edge set of $G$.  We denote the sizes of the edge and vertex sets of $G$ as $e(G) = |E(G)|$ and $v(G) = |V(G)|$.  For a vertex $v \in V(G)$, we say that vertex $u \in V(G)$ is a \emph{neighbour} of $v$ if $(u, v) \in E(G)$, and write $N_G(v)$ for the set of neighbours of $v$ in $G$.  If $U \subseteq V(G)$, we write $G[U]$ for the subgraph of $G$ induced by the vertex-set $U$.  Given a graph $G = (V,E)$ and a vertex $v \in V(G)$, we write $G \setminus v$ for the graph $G[V \setminus \{v\}]$.  Given a set of edges $E' \subseteq E(G)$ and a vertex $v \in V(G)$, we write $E' \setminus v$ for the set of edges in $E'$ that are not incident with $v$.  For any vertex $v \in V(G)$, we write $d_G(v)$ for the degree of $v$ in $G$; when the graph $G$ is clear from the context we may omit the subscript.  For further general graph notation, we direct the reader to \cite{golumbicBook}.

Given two graphs $H_1$ and $H_2$ with $v(H_1) \leq v(H_2)$, an \emph{embedding} of $H_1$ into $H_2$ is an injective function $\theta: V(H_1) \rightarrow V(H_2)$ such that $\theta(u)\theta(v) \in E(H_2)$ whenever $uv \in E(H_1)$.  Thus the graph $G$ contains the graph $F$ as a subgraph if and only if there is an embedding of $F$ into $G$.  We say that the embedding $\theta$ is a \emph{strong embedding} (or \emph{induced embedding}) if we have $\theta(u)\theta(v) \in E(H_2)$ if and only if $uv \in E(H_1)$ (so $\theta$ preserves non-adjacency, as well as adjacency).  

Given a function $\theta$ which maps a subset of $X$ to a subset of $Y$, we write $\dom(\theta)$ and $\image(\theta)$ for the domain and image of $\theta$ respectively.  Given a subset $X' \subseteq \dom(\theta)$, we write $\theta|_{X'}$ for the restriction of $\theta$ to $X'$.

A \emph{partition} $\mathcal{P}$ of a set $X$ is a collection of disjoint, non-empty sets whose union is $X$.  We call each set in the partition a \emph{block} of the partition, and every partition corresponds to a unique equivalence relation on $X$ where $x \sim y$ if and only if $x$ and $y$ belong to the same block of $X$.  If $\mathcal{P}$ and $\mathcal{P}'$ are partitions of $X$, we say that $\mathcal{P}'$ \emph{refines} $\mathcal{P}$ if every block of $\mathcal{P}'$ is contained in a single block of $\mathcal{P}$.  If $\mathcal{P}$ is a partition of $X$, and $y \in X$, we write $\mathcal{P} \setminus y$ for the partition of $X \setminus \{y\}$ obtained by removing the occurrence of $y$ from $\mathcal{P}$ (and, if this results in an empty set in the partition, also removing this empty set).

\subsection{Tree decompositions}
\label{sec:treewidth}

In this section we review the concept of a tree decomposition (introduced by Robertson and Seymour in \cite{robertsonSeymourGM2}) and introduce some of the key notation we will use throughout the rest of the paper.

Given any tree $T$, we will assume that it contains some distinguished vertex $r(T)$, which we will call the \emph{root} of $T$.  For any vertex $v \in V(T) \setminus r(T)$, the \emph{parent} of $v$ is the neighbour of $v$ on the unique path from $v$ to $r(T)$; the set of \emph{children} of $v$ is the set of all vertices $u \in V(T)$ such that $v$ is the parent of $u$.  The \emph{leaves} of $T$ are the vertices of $T$ whose set of children is empty.  We say that a vertex $u$ is a \emph{descendant} of the vertex $v$ if $v$ lies somewhere on the unique path from $u$ to $r(T)$ (note therefore that every vertex is a descendant of the root).  Additionally, for any vertex $v$, we will denote by $T_v$ the subtree induced by $v$ together with the descendants of $v$.

We say that $(T,\mathcal{D})$ is a \emph{tree decomposition} of $G$ if $T$ is a tree and $\mathcal{D} = \{\mathcal{D}(t): t \in V(T)\}$ is a collection of non-empty subsets of $V(G)$ (or \emph{bags}), indexed by the nodes of $T$, satisfying:
\begin{enumerate}
\item $V(G) = \bigcup_{t \in V(T)} \mathcal{D}(t)$,
\item for every $e=uv \in E(G)$, there exists $t \in V(T)$ such that $u,v \in \mathcal{D}(t)$,
\item for every $v \in V(G)$, if $T(v)$ is defined to be the subgraph of $T$ induced by nodes $t$ with $v \in \mathcal{D}(t)$, then $T(v)$ is connected.
\end{enumerate}
The \emph{width} of the tree decomposition $(T,\mathcal{D})$ is defined to be $\max_{t \in V(T)} |\mathcal{D}(t)| - 1$, and the \emph{treewidth} of $G$ is the minimum width over all tree decompositions of $G$.

We will denote by $V_t$ the set of vertices in $G$ that occur in bags indexed by the descendants of $t$ in $T$.  Thus, $V_t = \bigcup_{t' \in V(T_t)} \mathcal{D}(t')$.  

Later in this paper, we will exploit two useful properties that follow from the definition of a tree decomposition:
\begin{enumerate}
\item If $v \in \mathcal{D}(t)$ and $t'$ is a child of $t$ with $v \notin \mathcal{D}(t')$, then any path in $G$ from $v$ to a vertex $w \in V_{t'}$ must include at least one vertex of $\mathcal{D}(t) \setminus \{v\}$.
\item If $t \in T$ and $t_1,t_2$ are children of $t$, then any path in $G$ from a vertex $v_1 \in V_{t_1}$ to a vertex $v_2 \in V_{t_2}$ must contain at least one vertex from $\mathcal{D}(t)$.
\end{enumerate}

Although it is NP-hard to determine the treewidth of an arbitrary graph \cite{arnborg87}, it is shown in \cite{bodlaender93} that the problem of determining whether a graph has treewidth at most $w$, and if so computing a tree-decomposition of width at most $w$, can be solved in linear time for any constant $w$ (although the running time depends exponentially on $w$).

\begin{thm}[Bodlaender \cite{bodlaender93}]
For each $w \in N$, there exists a linear-time algorithm, that tests whether a given graph $G = (V, E)$ has treewidth at most $w$, and if so, outputs a tree decomposition of $G$ with treewidth at most $w$.
\end{thm}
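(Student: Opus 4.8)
This is a deep result; the plan is to reconstruct Bodlaender's recursive, shrink-by-a-constant-factor argument. Fix $w$. We build an algorithm $\mathcal{A}_w$ that on an $n$-vertex graph $G$ either returns a tree decomposition of width at most $w$ or correctly reports $\mathrm{tw}(G) > w$, in time $O(n)$. A cheap preprocessing comes first: any graph with $\mathrm{tw}(G) \le w$ satisfies $e(G) \le w\,v(G)$, so if $e(G) > wn$ we output ``no'' immediately; from now on $e(G) = O(n)$, and every graph operation below (computing a maximal matching, degrees, contractions, deletions) runs in $O(n)$ time. For $n$ below a threshold depending only on $w$ we solve the problem by brute force in $O(1)$ time.

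The heart of the recursion is a linear-time \emph{reduction step}. I would prove: if $\mathrm{tw}(G)\le w$ and $n$ exceeds a threshold depending on $w$, then one can find in time $O(n)$ a graph $G'$ with $v(G') \le \alpha\, v(G)$ for a fixed constant $\alpha = \alpha(w) < 1$ such that (i) $\mathrm{tw}(G') \le \mathrm{tw}(G)$, and (ii) any tree decomposition of $G'$ of width at most $w$ can be converted, in time $O(n)$, into a tree decomposition of $G$ of width at most some $\ell = \ell(w) = O(w)$. The construction splits on a dichotomy. Compute a maximal matching $M$; if $|M| \ge n/c(w)$, take $G'$ to be $G$ with every edge of $M$ contracted: this is a minor (so treewidth does not grow), and re-expanding each contracted vertex into its two endpoints turns a width-$\le w$ decomposition of $G'$ into one of $G$ of width at most $2w+1$. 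Otherwise the $\ge n - 2|M|$ vertices missed by $M$ form an independent set $I$, and a degree count based on $e(G) \le wn$ exhibits $\Omega(n)$ pairwise non-adjacent vertices of degree $O(w)$ inside $I$ which, after first adding to $G$ every ``forced'' edge between a pair of vertices having more than $w$ common neighbours --- such a pair must occur together in some bag of every width-$\le w$ decomposition, so adding the edge preserves $\mathrm{tw} \le w$ --- become simplicial; we delete these vertices (a subgraph, so treewidth does not grow) and, after the recursion, reattach each as a new leaf bag hanging off a bag containing its now-clique neighbourhood, which keeps the width $O(w)$.

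Given the reduction, $\mathcal{A}_w$ runs as follows: form $G'$; recursively either obtain a width-$\le w$ decomposition of $G'$ or learn that $\mathrm{tw}(G') > w$, which by (i) implies $\mathrm{tw}(G) > w$, so we output ``no''; in the good case use (ii) to lift to a tree decomposition of $G$ of width at most $\ell = O(w)$; finally apply the Bodlaender--Kloks ``improvement'' dynamic programme, which takes $G$ together with any tree decomposition of constant width $\ell$ and, in time $O(n)$, decides whether $\mathrm{tw}(G) \le w$ and if so outputs one of width at most $w$. That procedure I would cite rather than reprove: it processes the given decomposition bottom-up, maintaining at each node a set of \emph{characteristics} of partial tree decompositions whose cardinality is bounded by a function of $w$ and $\ell$ alone, so each node costs $O(1)$ and the whole pass costs $O(n)$. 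The running time of $\mathcal{A}_w$ then satisfies $T(n) \le T(\alpha n) + O(n)$, hence $T(n) = O(n)$.

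The main obstacle is the reduction step --- in particular, proving that one of the two structures is always present with size $\Omega(n)$, and that the reattachment in the independent-set case genuinely produces width $O(w)$ rather than something uncontrolled; the linchpin there is the bag-sharing fact for pairs of high-codegree vertices, which both justifies the forced edges and makes the relevant low-degree vertices simplicial. The second heavy ingredient, the Bodlaender--Kloks improvement procedure with its constant-size characteristic state space, is itself a long argument that I would import wholesale. Everything else --- the $e(G) \le wn$ bound, maximal matching and contraction/deletion in linear time, the base case, the verification that the lifting operations really yield tree decompositions, and solving the recurrence --- is routine.
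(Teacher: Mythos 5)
The paper offers no proof of this theorem: it is imported verbatim from Bodlaender \cite{bodlaender93}, so there is nothing in the text to compare your argument against. Judged on its own terms, your sketch is a faithful roadmap of Bodlaender's actual algorithm: the $e(G)\le wn$ pre-test, the self-reduction to a graph with a constant fraction fewer vertices via the matching-contraction / improved-graph-simplicial-vertex dichotomy, the lift back to a width-$O(w)$ decomposition, the Bodlaender--Kloks characteristic-based improvement pass, and the recurrence $T(n)\le T(\alpha n)+O(n)$. Two caveats keep this from being a proof rather than an outline. First, the claim that, when the maximal matching is small, $\Omega(n)$ of the unmatched low-degree vertices become simplicial in the improved graph is not a ``degree count'': it is the central counting lemma of Bodlaender's paper and genuinely requires the hypothesis $\mathrm{tw}(G)\le w$ (in the original the dichotomy is phrased via ``friendly'' vertices, which is equivalent up to constants, but the lemma itself is several pages of work). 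Second, the improvement procedure you import wholesale is the other half of the difficulty. You identify both gaps explicitly, which is the right thing to do, but as written the proposal defers essentially all of the technical content of the theorem to these two cited black boxes.
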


A special kind of tree decomposition, known as a \emph{nice tree decomposition}, was introduced by Kloks \cite{kloks94}.  The nodes in such a decomposition can be partitioned into four types (examples in Figure \ref{fig:nodeTypes}):
\begin{description}
\item[Leaf nodes:] $t$ is a leaf in $T$.
\item[Introduce nodes:] $t$ has one child $t'$, such that $\mathcal{D}(t') \subset \mathcal{D}(t)$ and $|\mathcal{D}(t)| = |\mathcal{D}(t')| + 1$.
\item[Forget nodes:] $t$ has one child $t'$, such that $\mathcal{D}(t') \supset \mathcal{D}(t)$ and $|\mathcal{D}(t)| = |\mathcal{D}(t')| - 1$.
\item[Join nodes:] $t$ has two children, $t_1$ and $t_2$, with $\mathcal{D}(t_1) = \mathcal{D}(t_2) = \mathcal{D}(t)$.
\end{description}

\begin{figure}[h!]
\centering
\scalebox{0.7}{\includegraphics{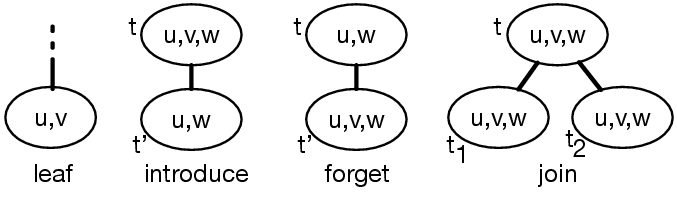}}
\caption{The four types of node in a nice tree decomposition.  From left to right: a leaf, an introduce node, a forget node, and a join node.}
\label{fig:nodeTypes}
\end{figure}

Any tree decomposition can be transformed into a nice tree decomposition in linear time:

\begin{lma}[\cite{kloks94}]
For constant $k$, given a tree decomposition of a graph $G$ of width $w$ and $O(n)$ nodes, where $n$ is the number of vertices of $G$, one can find a nice tree decomposition of $G$ of width $w$ and with at most $4n$ nodes in $O(n)$ time. 
\end{lma}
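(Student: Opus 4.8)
\emph{Proof proposal.} I would obtain the nice tree decomposition from the given one by a short sequence of local surgeries, each preserving the three defining axioms and never increasing the width, and then bound the number of nodes by a charging argument. The first step is a preprocessing pass that makes the tree small: root $T$ arbitrarily, and as long as some edge $tt'$ has comparable endpoint bags ($\mathcal{D}(t) \subseteq \mathcal{D}(t')$ or the reverse), contract it, keeping the larger bag. This is visibly still a tree decomposition of the same width, and when it halts every edge of $T$ joins two incomparable bags; hence every non-root node $t$ contains a vertex $v \notin \mathcal{D}(\mathrm{parent}(t))$, and since the set of nodes whose bag contains $v$ is connected (axiom~3) but misses $\mathrm{parent}(t)$, this $t$ must be the unique topmost node whose bag contains $v$. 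The map sending each non-root node to such a vertex is therefore an injection into $V(G)$, so $T$ now has at most $n$ nodes.

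Next I would make adjacent bags differ by exactly one vertex, and then make the tree binary. For the first: along an edge from $t$ to a child $c$ with $\mathcal{D}(t) \neq \mathcal{D}(c)$, subdivide it into a path that starts at bag $\mathcal{D}(t)$, deletes the vertices of $\mathcal{D}(t) \setminus \mathcal{D}(c)$ one at a time --- each such node being an introduce node, since the parent's bag is the larger --- and then adds the vertices of $\mathcal{D}(c) \setminus \mathcal{D}(t)$ one at a time (forget nodes) until it reaches $\mathcal{D}(c)$. Every bag on this path has size at most $\max\{|\mathcal{D}(t)|,|\mathcal{D}(c)|\} \le w+1$, so the width is unchanged, and since the bag at each internal node is exactly the set of vertices ``still present'' there, each vertex occupies a contiguous subpath, so all occurrence sets stay connected; axiom~3 survives, and axioms~1 and~2 are immediate. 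Care is needed here when $t$ has several children: a deletion common to all of them should be performed once, just below $t$, with the children hung underneath, rather than repeated once per child --- otherwise a single node of degree $d$ with a bag of size $w+1$ can spawn $\Theta(dw)$ introduce nodes, which is too many, and keeping bags as large as is permitted (only removing a vertex when about to add another) is what controls this. For the binarisation: whenever a node still has two or more children, replace it by a caterpillar of join nodes all carrying its bag, with the original children hung off the caterpillar so that each new node has exactly two children; a handful of trivial clean-ups then leave every non-leaf node an introduce, forget, or join node. Every operation is local, touches $O(w)=O(1)$ vertices, and is applied $O(n)$ times, so the procedure runs in $O(n)$ time.

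The remaining, and main, task is the node count, which is where the constant $4$ has to be earned. By definition every node of a nice tree decomposition is a leaf, introduce, forget, or join node, so it suffices to bound each of these four classes by $n$. I would charge each leaf to the private vertex it owns (a vertex in its bag and in no other, which exists by the same incomparability argument as above), giving at most $n$ leaves; each join node to a distinct leaf, which works because after binarisation every branching node has exactly two children and none of the later steps creates a leaf, so the join nodes are one fewer than the leaves and hence number at most $n-1$; each forget node to the unique vertex whose removal it performs, and since a vertex is removed going up only along the edge leaving the topmost node of its occurrence set, there are at most $n$ forget nodes; and each introduce node to the vertex it adds, bounding the total by $n$ by invoking the ``keep bags large / share common deletions'' discipline together with the connectivity of occurrence sets. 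Summing the four classes gives at most $4n$ nodes. The parts I expect to take the most work are making that last sharing discipline precise enough to pin the introduce count to $n$, and checking after each surgery that no vertex's occurrence set has been disconnected; the rest is routine bookkeeping.
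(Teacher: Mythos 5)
The paper does not prove this lemma---it is quoted from Kloks---so your proposal can only be measured against the standard argument. Your overall plan (contract comparable adjacent bags, subdivide edges into introduce/forget chains, binarise branchings, then count each of the four node types separately) is the right shape, and three of your four counts are sound: leaves and forget nodes each inject into $V(G)$, and join nodes number one fewer than leaves. The gap is exactly where you suspected it: the bound of $n$ on introduce nodes. Charging each introduce node to the vertex it adds is not injective. A vertex $v$ is introduced once for \emph{every} downward exit of its occurrence subtree $T(v)$, i.e.\ once for every tree edge leaving $T(v)$ towards a descendant, and a single branching inside $T(v)$ can create several such exits. Your ``share common deletions among the children'' discipline only removes the multiplicity coming from vertices dropped by \emph{all} children of a node; it does nothing for a vertex retained by one child and dropped by another, which must still be deleted separately on each branch that drops it. A bag $\{a_1,\dots,a_{w+1}\}$ whose $i$-th child retains only $a_i$ forces $\Theta(w^2)$ introduce nodes under your scheme with no shared deletions available, and chaining such gadgets gives $\Theta(wn)$ introduce nodes---fine for $O(n)$ with $w$ constant, but not for the stated ``at most $4n$.''

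The missing ingredient is the normalisation Kloks actually uses: first convert to a \emph{smooth} tree decomposition, in which every bag has exactly $w+1$ vertices and adjacent bags share exactly $w$. This is the rigorous version of your parenthetical ``only removing a vertex when about to add another,'' but it has to be established as a lemma (one must show the bags can always be padded without violating the connectivity axiom), and it changes what the charging argument is. In a smooth decomposition each non-root bag contains exactly one vertex absent from its parent, so the tree has at most $n-w$ nodes; each edge then needs exactly one intermediate bag, contributing exactly one introduce and one forget, so the number of introduce nodes is bounded by the number of \emph{edges}---each charged to the unique vertex newly appearing in the child bag, i.e.\ effectively to a forget---rather than by the number of distinct vertices being introduced. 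That is what pins $I\le n$ and yields $4n$ after binarisation. Without smoothness (or an equivalent accounting, e.g.\ the identity $I-F=\sum_{\text{join}}|\mathcal{D}(t)|-\sum_{\text{leaf}}|\mathcal{D}(t)|+|\mathcal{D}(r)|$, which still requires controlling the join bags), your fourth count does not close.
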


\section{Treewidth of real networks}
\label{sec:real-tw}

While the overall graph of cattle trades in Great Britain from 2001 to 2014 is fairly dense, many of the edges are repeated or parallel trades: that is, a farm sending animals over time to the same place, or many individual animals being moved at the same time; if we restrict our attention to a limited time frame, and ignore movements that would generate multiple edges (that is, we require our graph to be simple), the graph is quite sparse: for example, the aggregated graph of cattle trades in Scotland in 2009 has an edge-to-vertex ratio of approximately 1.15 when aggregating over January alone, and approximately 1.34 when aggregating over the entire year.  When considering an epidemic, it is much more relevant only to consider trades occurring within some restricted time frame (whose precise duration depends on the disease under consideration).  

We have investigated the treewidth of real networks arising from two kinds of cattle trade data.  First of all, for years between 2009 and 2014, we generated a graph from a type of persistent trade link recorded by BCMS in Scotland.  The largest of these is derived from the trades in 2013, and includes approximately 7,000 nodes and 6,000 edges (this lower density is typical when considering only persistent trade links, or trades over a restricted time period).  None of these graphs has treewidth more than fifteen, with most having treewidth less than four.  Secondly, in addition to these persistent trade links, we have computed an upper bound of the treewidth of the largest component of an aggregated, undirected version of the overall network of cattle trades in Scotland in 2009 over a variety of time windows, as illustrated in Figure \ref{fig:allScotlandTreewidth09}.  The treewidths of these components remains low even for large time windows: for an aggregation of all movements in a 200-day window the treewidth is below 10, and for all movements over the year it is below 18.  It is unlikely to be necessary to include a full year of movements in the analysis of any single epidemic, as the time scale of most exotic epidemics is much shorter.   

\begin{figure}
\centering
\includegraphics[width=0.7 \linewidth]{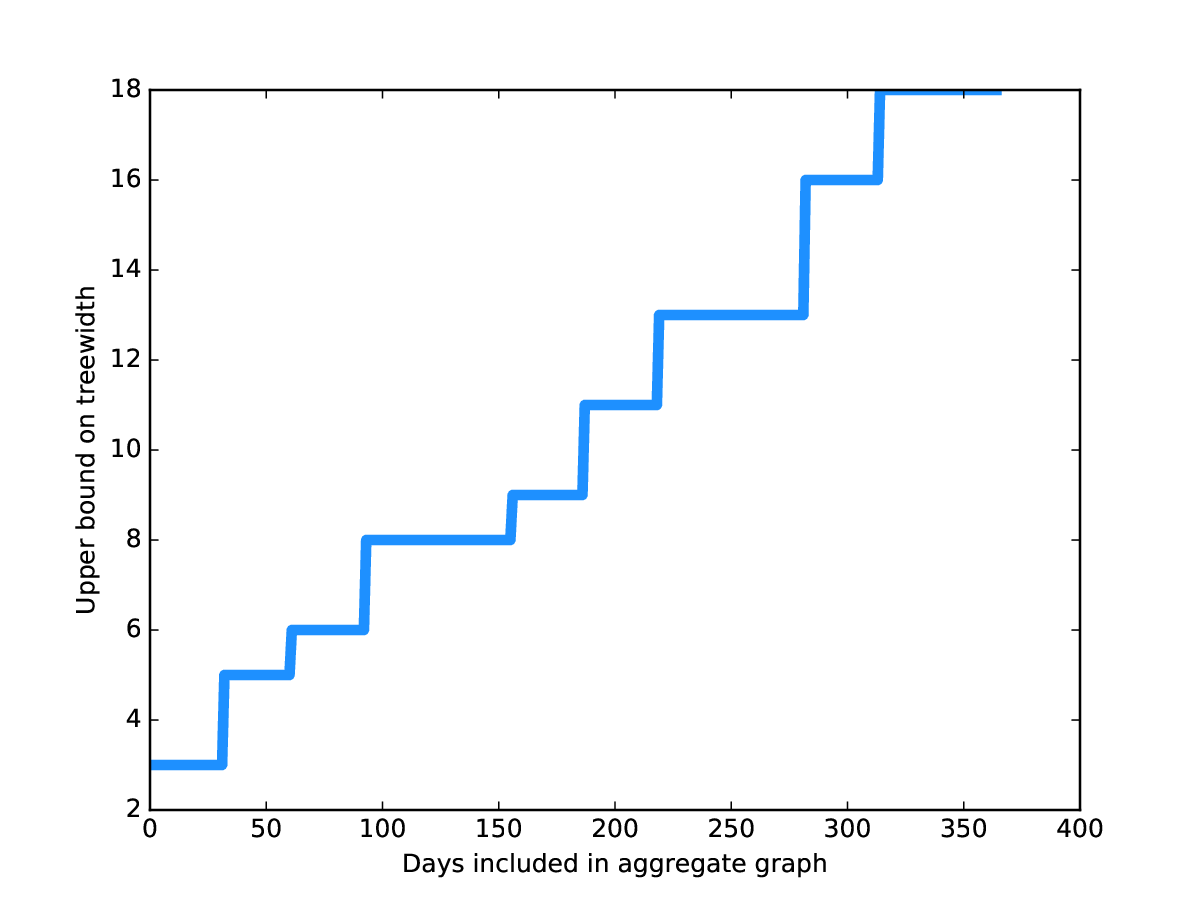}
\caption{A plot of an upper bound treewidth of the largest component in an aggregated, undirected version of the cattle movement graph in Scotland in 2009 over a number of different days included: all day sets start on January 1, 2009. Treewidths below eight are exact, treewidths over eight are upper bounds of the true treewidth.}
\label{fig:allScotlandTreewidth09}
\end{figure}

When modelling disease processes on graphs, it is common to model contact networks with any of a variety of random graph models, including Erd\H{o}s-R\'enyi random graphs or random graphs with scale-free degree distributions.  The cattle trading graphs considered here exhibit much lower treewidth than we would expect from an Erd\H{o}s-R\'enyi graph with the same density: Gao \cite{Gao2012566} has shown that Erd\H{o}s-R\'enyi graphs with an edge-to-vertex ratio of at least 1.073 (which our graphs satisfy), have treewidth linear in the number of vertices with high probability.  Scale-free random graphs have also been shown to have high treewidth: Gao also shows that scale-free graphs produced by preferential attachment with at least 11 edges added at each step have treewidth linear in the number of vertices with high probability; however, our graphs have lower density than this.

We have excluded markets from the data used above (so that if animals are sold from farm $A$ to farm $B$ via a market, this is considered as a direct trade from $A$ to $B$), so the low treewidth cannot be explained simply by the fact that most trades take place through a relatively small number of markets.  One possible explanation for the low treewidth is the structure of the industry: farms can sometimes be characterised by ``type'', with breeders producing calves who then might be grown at one or two other farms before eventual slaughter, meaning that cycles are unlikely to occur frequently in the network.  A more thorough investigation of underlying processes that lead to trade networks of low treewidth, and their influence on other graph parameters, seems a fruitful direction for future research.

While we have by no means completed an exhaustive study of the structural properties of real-world livestock trade networks, the evidence given here seems sufficient to suggest that algorithms which achieve a good running time on graphs of bounded treewidth will be useful for this application in practice.  Indeed, the benefits of exploiting the treewidth of the input graph are demonstrated by our initial experimental results in Section \ref{sec:expt}.

\section{The general algorithm}
\label{sec:gen-alg}

In this section, we describe an algorithm which, given a graph $G$ together with a nice tree decomposition $(T,\mathcal{D})$ of $G$ of width at most $w$, solves \genprobname ~on input $G$ in time \genruntime.  Since there exist linear-time algorithms both to compute a tree-decomposition of any graph $G$ of fixed treewidth $w$, and to transform an arbitrary tree decomposition into a nice tree decomposition, this in fact gives an algorithm which takes as input just a graph $G$ of treewidth at most $w$.  Thus, we prove the following theorem.

\begin{thm}
There exists an algorithm to solve \genprobname ~in time \genruntime ~on an input graph with $n$ vertices whose treewidth is at most $w$, if every element of $\mathcal{F}$ has at most $r$ vertices.
\label{thm:gen-alg}
\end{thm}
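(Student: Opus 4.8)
The plan is to carry out a standard dynamic programming procedure over the nice tree decomposition $(T,\mathcal{D})$, where the table at each node $t$ records, for each way of "resolving" the bag $\mathcal{D}(t)$, the minimum number of edges that must be deleted from $G[V_t]$ so that no forbidden subgraph from $\mathcal{F}$ survives inside $V_t$ and so that the interface at $\mathcal{D}(t)$ is consistent with what can be completed higher up in the tree. The first task is to decide what a table entry must remember. Since any copy of an $F \in \mathcal{F}$ lives on at most $r$ vertices, and (by the connectivity properties of tree decompositions quoted in Section~\ref{sec:treewidth}) any such copy that is not entirely contained in $V_t$ must pass through $\mathcal{D}(t)$, it suffices to remember, for each subset $S \subseteq \mathcal{D}(t)$ of size at most $r$, enough information to know which supergraphs of $G[S]$ (on up to $r$ vertices, using vertices not yet introduced) could later be completed to a forbidden subgraph. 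Concretely I would index table entries by a choice, for each vertex subset of $\mathcal{D}(t)$ of size $\le r-1$ together with each "pattern" of how those vertices might extend, of whether the corresponding partial embedding has been "killed" by an already-deleted edge or is still "live". The number of such indices is $|\mathcal{F}| \cdot (\text{poly in } w, r)$-many choices per bag, giving a table of size $2^{O(|\mathcal{F}| w^r)}$, which is where the claimed running time comes from; the key point is that $r$ is a constant for a fixed family $\mathcal{F}$, so $w^r$ is polynomial in $w$ and the exponent is controlled.

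Next I would give the update rules for the four node types. \textbf{Leaf nodes} are trivial: the bag is a single vertex (or small bag), there are no edges, and the only table entry records zero deletions. \textbf{Introduce nodes}: when a vertex $v$ is added to the bag, I enumerate all subsets of edges between $v$ and $\mathcal{D}(t')$ that are kept versus deleted; for each choice, I update the "live/killed" status of every partial pattern involving $v$ (a pattern becomes killed exactly when one of its required edges is deleted), add the number of newly deleted edges to the cost, and take the minimum over choices consistent with a given target state. \textbf{Forget nodes}: when $v$ leaves the bag, I must first verify that no forbidden subgraph is entirely contained in $V_{t'}$ in a way that involves $v$ and is still "live" — i.e., I discard any table entry in which some completed embedding of an $F \in \mathcal{F}$ using only vertices of $V_{t'}$ remains unkilled — and then project the surviving states down by forgetting all information referring to $v$, keeping the minimum cost among states that project to the same residual state. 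The crucial correctness observation here is that, by property~1 of tree decompositions, once $v$ is forgotten it can never again be on a path to a newly introduced vertex, so any forbidden subgraph through $v$ must already be fully present, justifying the final check at this point. \textbf{Join nodes}: the two children share the bag $\mathcal{D}(t)$, and $V_{t_1} \cap V_{t_2} = \mathcal{D}(t)$ with no edges between $V_{t_1} \setminus \mathcal{D}(t)$ and $V_{t_2} \setminus \mathcal{D}(t)$ (property~2); so I combine a state from $t_1$ with a state from $t_2$ only when they agree on the deletion pattern within $\mathcal{D}(t)$, take the componentwise "OR" of the killed/live flags (a pattern is killed in the merged state iff it is killed on at least one side), and add the costs while subtracting the doubly-counted deletions inside $\mathcal{D}(t)$. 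Finally, the answer to $\mathcal{F}$-\textsc{Free Edge Deletion} is \textbf{yes} iff the minimum cost over all valid states at the root is at most $k$; running the recurrence instead in its optimization form gives an optimal deletion set, and multiplying the per-node work $2^{O(|\mathcal{F}| w^r)}$ by the $O(n)$ nodes of the nice decomposition yields the overall bound $2^{O(|\mathcal{F}| w^r)}n$.

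The main obstacle I expect is formulating the table precisely enough that the forget-node check is both sound and complete: I need to argue carefully that every copy of every $F \in \mathcal{F}$ in $G$ gets detected at exactly the moment its last vertex is forgotten (so nothing is missed), while also ensuring the state description is coarse enough that two histories of deletions that behave identically with respect to all future forbidden-subgraph completions are identified (so the table does not blow up beyond $2^{O(|\mathcal{F}| w^r)}$). Getting this equivalence relation right — essentially, "two partial solutions are equivalent if for every partial embedding of every $F \in \mathcal{F}$ with domain meeting $\mathcal{D}(t)$, the embedding is killed in one iff killed in the other" — is the heart of the argument; once it is pinned down, the four update rules and the running-time count are routine bookkeeping.
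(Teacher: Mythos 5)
Your overall architecture is the same as the paper's: a dynamic program over the nice tree decomposition whose state at a node $t$ records, for each partial embedding of (an induced subgraph of) each $F \in \mathcal{F}$ into the bag $\mathcal{D}(t)$, whether that partial embedding extends to a copy realized inside the already-processed graph $G[V_t]$ after the deletions made so far; the state count $2^{O(|\mathcal{F}|w^r)}$ per bag and the overall bound \genruntime{} then come out exactly as in the paper, and your forget-node soundness argument and the closing remarks about the right equivalence relation on partial solutions match the paper's conditions on valid states.

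The genuine gap is your join rule. You combine the two children by declaring that ``a pattern is killed in the merged state iff it is killed on at least one side,'' i.e.\ a pattern is live after the join only if the \emph{same} pattern is live in both subtrees. This misses forbidden subgraphs that are assembled at the join from two proper pieces, one in each subtree. Concretely, let $F$ be the path $a\,b\,c\,d$, let $\theta$ map $b,c$ into $\mathcal{D}(t)$, and suppose the image of $a$ lies in $V_{t_1}\setminus\mathcal{D}(t)$ while the image of $d$ lies in $V_{t_2}\setminus\mathcal{D}(t)$. Then the full pattern $(F,\theta)$ is realized in neither $G[V_{t_1}]$ nor $G[V_{t_2}]$ alone, so under your rule it is ``killed'' on both sides and hence killed in the merge --- yet the copy of $F$ is present in $G[V_t]$, so your algorithm would accept deletion sets that are not valid solutions. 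The correct combination is not a componentwise Boolean operation on a single flag but a convolution over splits: $(F',\theta)$ is live after the join if and only if there exist induced subgraphs $F_1,F_2$ of $F'$ with $V(F_1)\cup V(F_2)=V(F')$ and $V(F_1)\cap V(F_2)=\dom(\theta)$ such that $(F_1,\theta|_{V(F_1)})$ is live at $t_1$ and $(F_2,\theta|_{V(F_2)})$ is live at $t_2$. This is precisely the condition the paper imposes on its join-inherited states, and it is the one genuinely non-routine step of the bookkeeping (it also costs an extra factor of about $3^r$ per pattern, which is absorbed into the exponent); the rest of your outline --- leaf, introduce, forget, and the test at the root --- is sound.
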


As with many algorithms that use tree decompositions, our algorithm works by recursively carrying out computations for each node of the tree, using the results of the same computation carried out on any children of the node in question.  In this case, we recursively compute the \emph{signature} of each node: we define the signature of a node in Section \ref{sec:gen-signature}.  It is then possible to determine whether we have a yes- or no-instance to the problem by examining the signature of the root of $T$.  

We may assume, without loss of generality, that every element of $\mathcal{F}$ contains at least one edge: if some $F \in \mathcal{F}$ has no edges, then any graph on at least $v(F)$ vertices will contain a copy of $F$ (no matter how many edges we delete) so there cannot be a yes-instance with $r$ or more vertices, and a brute-force approach will achieve the desired time bound on input graphs having fewer than $r$ vertices.

In Section \ref{sec:gen-recursive}, we describe how we compute the signature of a bag indexed by a given node from the signatures of its children, before discussing the running time and a number of extensions in Section \ref{sec:gen-runtime}.

\subsection{The signature of a node}
\label{sec:gen-signature}

In this section we describe the information we must compute for each node, and define the signature of a node.  Throughout the algorithm, we need to record the possible states corresponding to a given bag.  A valid \emph{state} of a bag $\mathcal{D}(t)$ is a pair consisting of:
\begin{enumerate}
\item a spanning subgraph $H$ of $G[\mathcal{D}(t)]$ which does not contain any $F \in \mathcal{F}$ as a subgraph, and
\item a function $\phi: X_{\mathcal{F},H} \rightarrow \{0,1\}$ (where $X_{\mathcal{F},H}$ consists of all pairs $(F',\theta)$ such that $F'$ is an induced subgraph of some $F \in \mathcal{F}$ and $\theta$ is an embedding into $H$ of some induced subgraph $F''$ of $F'$) which satisfies the following conditions:
\begin{enumerate}
\item $\phi(F',\theta) = 1$ whenever $\theta$ is an embedding of $F'$ into $H$;
\item if $\phi(F_1,\theta) = 1$ and $F_2$ is an induced subgraph of $F_1$ then $\phi(F_2,\theta|_{V(F_2)}) = 1$;
\item if $\phi(F_1,\theta) = 1$ (where $F_1$ is an induced subgraph of $F \in \mathcal{F}$ and $\theta$ is an embedding of $F_1'$ into $H$) then, if there exist $v \in V(H) \setminus \image(\theta)$ and $u \in V(F) \setminus V(F_1)$ such that, for every $w \in V(F_1)$ with $uw \in E(F)$,
\begin{itemize}
\item $w \in F_1'$, and
\item $\theta(w) v \in E(H)$, 
\end{itemize}
then $\phi(F[V(F_1) \cup \{u\}],\widehat{\theta}) = 1$, where $\widehat{\theta}$ extends $\theta$ by mapping $u$ to $v$.
\item for every $F \in \mathcal{F}$ and every $\theta$, $\phi(F,\theta) = 0$.
\end{enumerate}
\end{enumerate}

Intuitively, the state of a node tells us which forbidden subgraphs appear in $G[V_t]$ (and where these appear), so we know which partial forbidden subgraphs we must avoid extending to full copies of these graphs.  Conditions 2(b) and 2(c) are to ensure consistency, so that if a particular partial embedding of a forbidden subgraph is present then the appropriate extensions and restrictions of this embedding are also present.

For any bag $\mathcal{D}(t)$, we denote by $\st(t)$ the set of possible states of $\mathcal{D}(t)$.  Note that there are at most $2^{w^2}$ possible spanning subgraphs $H$, since $|\mathcal{D}(t)| \leq w+1$ (as the treewidth of $G$ is $w$) and $2^{\binom{w+1}{2}} \leq 2^{w^2}$.  For any given spanning subgraph $H$ and any $F''$ which is an induced subgraph of some $F \in \mathcal{F}$ (with $v(F'') \leq v(H)$), the number of embeddings of $F''$ into $H$ is at most the number of injective functions from $V(F'')$ to $V(H)$, which is equal to $\frac{v(H)!}{\left(v(H)-v(F'')\right)!} < v(H)^{v(F'')} \leq (w+1)^{v(F'')}$; summing over all possibilities for $F''$, we have that the total number of pairs $(F',\theta)$ such that $F''$ is an induced subgraph of $F'$ and $\theta$ is a an embedding of $F''$ into $H$ is at most
\begin{align*}
\sum_{F \in \mathcal{F}} \quad \sum_{\substack{F' \text{ an induced}\\ \text{subgraph of } F}} \quad \sum_{\substack{F'' \text{ an induced}\\ \text{subgraph of } F'}} (w+1)^{v(F'')}  & = \sum_{F \in \mathcal{F}} \sum_{i=1}^{v(F)} \sum_{j=1}^i \binom{v(F)}{i} \binom{i}{j} (w+1)^j \\
		& \leq \sum_{F \in \mathcal{F}} \sum_{i=1}^{v(F)} \binom{v(F)}{i} (w+2)^i \\
		& \leq \sum_{F \in \mathcal{F}} (w+3)^{v(F)} \\
		& \leq |\mathcal{F}| (w+3)^r.
\end{align*}
Not all such pairs will belong to $X_{\mathcal{F},H}$, as they may not obey the condition on edges, but this certainly provides an upper bound on $|X_{\mathcal{F},H}|$.  The number of possibilities for the function $\phi$ is therefore at most $2^{|\mathcal{F}| (w+3)^r}$.  Overall, this gives an upper bound of $2^{w^2 + |\mathcal{F}| (w+3)^r}$ on the number of valid states for each bag.  Note that, since every element of $\mathcal{F}$ has at most $r$ vertices, the cardinality of $\mathcal{F}$ is at most $\sum_{i=2}^r 2^{\binom{i}{2}} < 2^{r^2}$, so we can bound the number of valid states by $2^{w^2 + 2^{r^2} (w+3)^r}$.

Given any state, we make the following definition.
\begin{adef}
\label{def:gen-E}
Let $\sigma = (H, \phi) \in \st(t)$.  Then $\mathcal{E}(t,\sigma)$ is be the set of edge-sets $E' \subset E(G[V_t])$ such that $\widetilde{G_t} = G[V_t] \setminus E'$ has the following properties:
\begin{enumerate}
\item $\widetilde{G_t}[\mathcal{D}(t)] = H$,
\item $\widetilde{G_t}$ does not contain any $F \in \mathcal{F}$ as a subgraph, and
\item for every $F \in \mathcal{F}$ and every subgraph $F'$ of $F$, if $\theta$ is an embedding of $F'$ into $\widetilde{G_t}$ and $\theta(V(F')) \cap \mathcal{D}(t) \neq \emptyset$ then $\phi(F', \theta|_{\mathcal{D}(t)}) = 1$.
\end{enumerate}
\end{adef}
Note that, whenever $\sigma$ is a valid state for $t$, the set $\mathcal{E}(t,\sigma)$ will be non-empty: setting $E' = E(G[V_t])$ will always satisfy all three conditions, since every $F \in \mathcal{F}$ contains at least one edge.

Since we are interested in determining whether it is possible to delete at most $k$ edges to obtain a graph that does not contain any element of $\mathcal{F}$ as a subgraph, we will primarily be interested in a subset of $\mathcal{E}(t,\sigma)$: for any node $t$ and $\sigma \in \st(t)$ we define this subset as
$$\mathcal{E}_k(t,\sigma) = \{E' \in \mathcal{E}(t,\sigma): |E'| \leq k\}.$$
We then define
$$\del_k(t,\sigma) = \min_{E' \in \mathcal{E}_k(t,\sigma)} |E'|,$$
adopting again the convention that the minimum, taken over an empty set, is equal to infinity.  To simplify notation, given any $a,b \in \mathbb{N}$, we define $[a]_{\leq b}$ to be equal to $a$ if $a \leq b$, and equal to $\infty$ otherwise.  Finally, we define the \emph{signature} of a node $t$ to be the function $\sig_t: \st(t) \rightarrow \{0,1,\ldots,k,\infty\}$ such that $\sig_t(\sigma) = \del_k(t,\sigma)$.  

Our input graph is then a yes-instance to \genprobname ~if and only if there exists some $\sigma \in \st(t_r)$ such that $\sig_{t_r}(\sigma) \leq k$, where $t_r$ is the root of the tree indexing the decomposition.

\subsection{Computing signatures recursively}
\label{sec:gen-recursive}

In this section we describe how the signature of any given node in a nice tree decomposition can be determined from the signatures of its children (if any).  This relationship depends on the type of node under consideration, and we will again discuss each type of node in turn, describing how to compute $\del_k(t,\sigma)$ for an arbitrary state $\sigma$.

\subsubsection*{Leaf nodes}

Note that, in this case, we have $V_t = \mathcal{D}(t)$, and hence for any state $\sigma = (H, \phi)$ a graph $\widetilde{G_t}$ satisfies the conditions of Definition \ref{def:gen-E} if and only if $H = \widetilde{G_t}$.  Thus 
$$\del_k(t,\sigma) = \left[ e(\widetilde{G_t}) - e(H)\right]_{\leq k}.$$

\subsubsection*{Introduce nodes}

Suppose that the introduce node $t$ has child $t'$, and that $\mathcal{D}(t) \setminus \mathcal{D}(t') = \{v\}$.  Given a state $\sigma \in \st(t)$, we define the \emph{introduce-inherited state} of $t'$ with respect to $\sigma$ to be the state $\sigma' = (H',\phi')$, where $H' = H \setminus v$ and for all $(F',\theta) \in X_{\mathcal{F},H'}$ we have $\phi'(F',\theta) = \phi(F',\theta)$.  It is clear that $\sigma'$ does indeed belong to $\st(t')$.

We make the following claim concerning the signatures of $t$ and $t'$; recall that $d_G(v)$ denotes the degree of the vertex $v$ in the graph $G$.

\begin{lma}
Let $t$ be an introduce node with child $t'$, and let $\sigma = (H,\phi) \in \st(t)$. Then
$$\del_k(t,\sigma) = \left[ \del_k(t',\sigma') + \left(d_{G[\mathcal{D}(t)]}(v) - d_H(v)\right) \right]_{\leq k},$$
where $\sigma'$ is the introduce-inherited state of $t'$ with respect to $\omega$.
\label{lma:gen-introduce}
\end{lma}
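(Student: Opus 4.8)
The plan is to prove the claimed recurrence by establishing a correspondence between edge-sets in $\mathcal{E}_k(t,\sigma)$ and edge-sets in $\mathcal{E}_k(t',\sigma')$, taking care of the edges incident with the newly introduced vertex $v$. First I would observe that, since $t$ is an introduce node with $\mathcal{D}(t) \setminus \mathcal{D}(t') = \{v\}$, we have $V_t = V_{t'} \cup \{v\}$, and by the connectivity property of tree decompositions, the only neighbours of $v$ in $G[V_t]$ lie in $\mathcal{D}(t) \setminus \{v\}$ (any path from $v$ into $V_{t'}$ must pass through $\mathcal{D}(t)\setminus\{v\}$, and since $v$ occurs in no bag below $t'$ except via $t$, there is no edge of $G[V_t]$ from $v$ to $V_{t'}\setminus \mathcal{D}(t')$). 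Hence the edges of $G[V_t]$ incident with $v$ are exactly the $d_{G[\mathcal{D}(t)]}(v)$ edges from $v$ to its neighbours in $\mathcal{D}(t)$, and these are disjoint from $E(G[V_{t'}])$.

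Next I would set up the two directions of the correspondence. Given $E' \in \mathcal{E}(t,\sigma)$, let $E'_v$ be the set of edges of $E'$ incident with $v$ and let $E'' = E' \setminus v$ (in the notation of Section~\ref{sec:notation}); then $E'' \subseteq E(G[V_{t'}])$. Since $\widetilde{G_t}[\mathcal{D}(t)] = H$, the edges from $v$ retained in $\widetilde{G_t}$ are exactly the $d_H(v)$ edges of $H$ at $v$, so $|E'_v| = d_{G[\mathcal{D}(t)]}(v) - d_H(v)$, a quantity that does not depend on $E'$. It remains to check that $E''$ lies in $\mathcal{E}(t',\sigma')$: condition (1) holds because $\widetilde{G_{t'}}[\mathcal{D}(t')] = (\widetilde{G_t} \setminus v)[\mathcal{D}(t')] = H \setminus v = H'$; condition (2) holds because $\widetilde{G_{t'}}$ is a subgraph of $\widetilde{G_t}$, which contains no $F \in \mathcal{F}$; and condition (3) for $\sigma'$ follows from condition (3) for $\sigma$ together with the definition $\phi'(F',\theta) = \phi(F',\theta)$ on $X_{\mathcal{F},H'}$, noting that an embedding into $\widetilde{G_{t'}}$ meeting $\mathcal{D}(t')$ is in particular an embedding into $\widetilde{G_t}$ meeting $\mathcal{D}(t)$ (since $\mathcal{D}(t') \subseteq \mathcal{D}(t)$). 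Conversely, given $E'' \in \mathcal{E}(t',\sigma')$, I would define $E' = E'' \cup E'_v$ where $E'_v$ is the fixed set of $v$-edges not present in $H$, and verify that $E' \in \mathcal{E}(t,\sigma)$; the only slightly delicate point is condition (3) for $\sigma$, which requires checking embeddings $\theta$ of some $F'$ into $\widetilde{G_t}$ with $\theta(V(F')) \cap \mathcal{D}(t) \neq \emptyset$. Such an embedding either avoids $v$ entirely in its image — in which case it is an embedding into $\widetilde{G_{t'}}$ meeting $\mathcal{D}(t')$ (again using that $v$ has no neighbours in $V_{t'}$, so nothing in the image touches $v$) and we appeal to condition (3) for $\sigma'$ — or it has $v$ in its image, in which case $\theta|_{\mathcal{D}(t)}$ is an embedding into $H$ meeting $\mathcal{D}(t)$, and we need $\phi$ to be $1$ on it; here I would invoke the consistency conditions 2(a)--2(c) in the definition of a valid state to propagate the fact that the relevant restriction/extension is forced, together with condition 2(d) which guarantees $\phi(F,\theta)=0$ so no full copy of an $F \in \mathcal{F}$ can arise. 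This bijection between $\mathcal{E}(t,\sigma)$ and $\mathcal{E}(t',\sigma')$ shifts every cardinality by the constant $d_{G[\mathcal{D}(t)]}(v) - d_H(v)$, so $\del_k(t,\sigma) = [\del_k(t',\sigma') + (d_{G[\mathcal{D}(t)]}(v) - d_H(v))]_{\leq k}$ as claimed, where the truncation $[\cdot]_{\leq k}$ accounts for the restriction to edge-sets of size at most $k$ (and the convention $\min \emptyset = \infty$ makes both sides $\infty$ simultaneously).

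I expect the main obstacle to be condition (3) in the case where the embedding's image contains $v$: one must argue carefully that whenever a partial copy of a forbidden subgraph touches $v$ in $\widetilde{G_t}$, the function $\phi$ already records it, and that this is exactly what the validity conditions 2(a)--2(d) were designed to guarantee — in particular that no such partial copy can be completed to a full $F \in \mathcal{F}$ within $\widetilde{G_t}$, using that $v$'s only neighbours are in $\mathcal{D}(t)$ so any completion would have to extend an embedding already visible in $H$. A secondary nuisance is purely notational: the lemma statement writes $\sigma = (H,\mathcal{P},c)$ and refers to ``the introduce-inherited state of $t'$ with respect to $\omega$,'' whereas the surrounding text uses $\sigma = (H,\phi)$ and defines the introduce-inherited state with respect to $\sigma$; in the proof I would simply adopt the $(H,\phi)$ notation consistent with the definition above and read $\omega$ as $\sigma$.
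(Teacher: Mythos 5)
Your proposal follows essentially the same route as the paper's proof: identify the forced set $E_v$ of deleted edges at $v$ (of size $d_{G[\mathcal{D}(t)]}(v)-d_H(v)$), establish the correspondence $\mathcal{E}(t,\sigma)=\{E_v\cup E'':E''\in\mathcal{E}(t',\sigma')\}$ by checking the three defining conditions in each direction, and in the delicate case where an embedding's image contains $v$, restrict to $V(F')\setminus\{\theta^{-1}(v)\}$ to land in the child and then use condition 2(c) (with the fact that $v$'s neighbours all lie in $\mathcal{D}(t)$) to extend, and 2(d) to exclude full copies of any $F\in\mathcal{F}$. This matches the paper's argument step for step, including the observation about the typographical slip ($\omega$ for $\sigma$ and the component-state notation) in the lemma statement.
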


\begin{proof}
Observe first that the edges incident with $v$ in any set $E' \in \mathcal{E}(t,\sigma)$ must be precisely the set $E_v = \{uv \in G[V_t]: uv \notin E(H) \}$.  We shall in fact demonstrate that
$$\mathcal{E}(t,\sigma) = \left\lbrace E_v \cup E': E' \in  \mathcal{E}(t',\sigma')\right\rbrace,$$
and hence that
$$\mathcal{E}_k(t,\sigma) = \left\lbrace E_v \cup E': E' \in \mathcal{E}_k(t',\sigma'), \text{ and } |E_v \cup E'| \leq k \right\rbrace.$$
The result will then follow immediately by taking the minimum cardinality of an element in this set, using the observation that $E_v$ is disjoint from every $E' \in  \mathcal{E}_k(t',\sigma')$.

We first show that $\mathcal{E}(t,\sigma) \subseteq \{E_v \cup E': E' \in \mathcal{E}(t',\sigma') \}$.  Let $\widehat{E} \in \mathcal{E}(t,\sigma)$: we claim $\widehat{E} \setminus E_v \in \mathcal{E}(t',\sigma')$, where $\sigma' = (H',\phi')$ is the introduce-inherited state of $t'$ with respect to $\sigma$.  Set $\widetilde{G_t} = G[V_t] \setminus \widehat{E}$.  It suffices to check that $\widetilde{G_t}[V_{t'}]$ satisfies the three conditions of Definition \ref{def:gen-E}.  The first condition is trivially satisfied, by definition of $\sigma'$, and the second follows immediately from the fact that $\widehat{E} \in \mathcal{E}(t,\sigma)$ and hence $\widetilde{G_t}$ does not contain any $F \in \mathcal{F}$ as a subgraph.  For the third condition, suppose that $F'$ is an induced subgraph of some $F \in \mathcal{F}$ and that $\theta$ is an embedding of $F'$ into $\widetilde{G_t}[V_{t'}]$.  In this case $\theta$ is also an embedding of $F'$ into $\widetilde{G_t}$ and so (since $\widehat{E} \in \mathcal{E}(t,\sigma)$) we must have $\phi(F',\theta) = 1$; hence, by definition of $\sigma'$, we see that $\phi'(F',\theta) = 1$, as required.  Thus $\widetilde{G_t}[V_{t'}]$ satisfies the conditions of Definition \ref{def:gen-E} and we conclude that $\mathcal{E}(t,\sigma) \subseteq \{E_v \cup E': E' \in \mathcal{E}(t',\sigma') \}$, as required.

Conversely, we now show that $\{E_v \cup E': E' \in  \mathcal{E}(t',\sigma')\} \subseteq \mathcal{E}(t,\sigma)$.  Let $E' \in \mathcal{E}(t',\sigma')$, where $\sigma' = (H',\phi')$ is the introduce-inherited state of $t'$ with respect to $\sigma$;  we claim that $E_v \cup E' \in \mathcal{E}(t,\sigma)$.  Set $\widetilde{G_t}' = G[V_{t'}] \setminus E'$, and let $\widetilde{G_t} = G[V_t] \setminus (E' \cup E_v)$.  It suffices to demonstrate that $\widetilde{G_t}$ satisfies the conditions of Definition \ref{def:gen-E}.  

The first condition holds by definition.  For the second condition, suppose for a contradiction that $\widetilde{G_t}$ contains a copy of some $F \in \mathcal{F}$; let $\psi$ be an embedding of $F$ into $\widetilde{G_t}$.  Since $E' \in \mathcal{E}(t',\sigma')$, we know that $F$ is not a subgraph of $\widetilde{G_t}'$, so $\psi$ must map some vertex of $F$ to $v$.  It follows that $\psi|_{V_{t'}}$ is an embedding of some $F' \subset F$ into $\widetilde{G_t}'$ and hence by definition of $\mathcal{E}(t',\sigma')$ we must have $\phi'(F',\psi|_{V_{t'}}) = 1$.  By definition of $\sigma'$ we then have $\phi(F',\psi|_{V_t'}) = 1$ and hence, by condition 2(c) in the definition of a valid state, we would also have $\phi(F,\psi) = 1$ (since the properties of the tree decomposition ensure that every neighbour of $\psi^{-1}(v)$ in $F$ must belong to $\mathcal{D}(t')$).  It then follows that $\sigma$ is not a valid state for $t$ (by condition 2(d) of the definition of a valid state), giving the required contradiction.

For the third condition, suppose that $F \in \mathcal{F}$, $F'$ is an induced subgraph of $F$, and $\theta$ is an embedding of $F'$ into $\widetilde{G_t}$.  If $v \notin \theta(F')$ then $\theta$ is an embedding of $F'$ into $\widetilde{G_t}'$ so, since $E' \in \mathcal{E}(t',\sigma')$, we know that $\phi'(F',\theta) = 1$; by the definition of $\sigma'$ it follows that $\phi(F',\theta) = 1$, as required.  Thus we may assume that $v \in \theta(F')$; suppose that $u = \theta^{-1}(v)$.  In this case we see that $\theta|_{V(F') \setminus \{u\}}$ is an embedding of $F' \setminus u$ into $\widetilde{G_t}'$, so by definition of $\mathcal{E}(t',\sigma')$ we must have $\phi'(F' \setminus u, \theta|_{V(F')\setminus \{u\}}) = 1$ and hence by definition of $\sigma'$ we have $\phi(F' \setminus u, \theta|_{F'\setminus u}) = 1$.  Since we know $\sigma$ is a valid state for $t$, it then follows from condition 2(c) in the definition of a valid state that we must also have $\phi(F',\theta) = 1$, as required.

This completes the argument that $\{E_v \cup E': E' \in  \mathcal{E}(t',\sigma'), E_v \subseteq \widetilde{E_v}\} \subseteq \mathcal{E}(t,\sigma)$, and hence the proof.
\end{proof}

\subsubsection*{Forget nodes}

Suppose that the forget node $t$ has child $t'$, and that $\mathcal{D}(t') \setminus \mathcal{D}(t) = \{v\}$.  In this case, we need to consider the set of \emph{forget-inherited states} for $t'$.  Given $\sigma = (H,\phi) \in \st(t)$, we write $\sigma[t']_{\fgt}$ for the set of states $(H',\phi') \in \st(t')$ satisfying:
\begin{enumerate}
\item $H'[\mathcal{D}(t)] = H$, and
\item $\phi^{-1}(1) = \{(F',\theta|_{\mathcal{D}(t)}): (F',\theta) \in (\phi')^{-1}(1)\}$.
\end{enumerate}

We make the following claim concerning the signatures of $t$ and $t'$.

\begin{lma}
Let $t$ be a forget node with child $t'$, let $\sigma = (H,\phi) \in \st(t)$, and let $\sigma[t']_{\fgt}$ be the set of forget-inherited states of $t'$. Then
$$\del_k(t,\sigma) = \min_{\sigma' \in \sigma[t']_{\fgt}} \del_k(t',\sigma').$$
\label{lma:gen-forget}
\end{lma}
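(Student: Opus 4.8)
The plan is to follow the template of the leaf and introduce cases and establish the set identity
\[
\mathcal{E}(t,\sigma) \;=\; \bigcup_{\sigma' \in \sigma[t']_{\fgt}} \mathcal{E}(t',\sigma');
\]
intersecting both sides with $\{E' : |E'| \le k\}$ then gives $\mathcal{E}_k(t,\sigma) = \bigcup_{\sigma' \in \sigma[t']_{\fgt}} \mathcal{E}_k(t',\sigma')$, and taking the minimum cardinality over this union (with $\min \emptyset = \infty$ as before) yields the claimed recurrence for $\del_k(t,\sigma)$. Two elementary facts make the forget case manageable relative to $t'$: since $t$ is a forget node with child $t'$ we have $V_t = V_{t'}$, so $G[V_t] = G[V_{t'}]$ and every edge-set in sight lives in $2^{E(G[V_t])}$; and $\mathcal{D}(t) \subseteq \mathcal{D}(t')$, so for any spanning subgraph $K$ of $G[\mathcal{D}(t')]$ the graph $K[\mathcal{D}(t)]$ is simply the part of $K$ on $\mathcal{D}(t)$, and every embedding into $K$ restricts to one into $K[\mathcal{D}(t)]$.

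The inclusion $\bigcup_{\sigma'}\mathcal{E}(t',\sigma') \subseteq \mathcal{E}(t,\sigma)$ is a straightforward restriction argument. Fix $\sigma' = (H',\phi') \in \sigma[t']_{\fgt}$, take $E' \in \mathcal{E}(t',\sigma')$, and set $\widetilde{G} = G[V_t]\setminus E'$. Condition~1 in the definition of $\mathcal{E}(t,\sigma)$ holds because $\widetilde{G}[\mathcal{D}(t)] = \big(\widetilde{G}[\mathcal{D}(t')]\big)[\mathcal{D}(t)] = H'[\mathcal{D}(t)] = H$, using condition~1 for $\mathcal{E}(t',\sigma')$ and the first defining property of $\sigma[t']_{\fgt}$; condition~2 is word-for-word the same for $t$ and $t'$; and for condition~3, any embedding $\theta$ of a subgraph of some $F \in \mathcal{F}$ into $\widetilde{G}$ whose image meets $\mathcal{D}(t)$ also meets $\mathcal{D}(t')$, so condition~3 for $\mathcal{E}(t',\sigma')$ gives $\phi'(F',\theta|_{\mathcal{D}(t')}) = 1$, and the second defining property of $\sigma[t']_{\fgt}$ then gives $\phi(F',\theta|_{\mathcal{D}(t)}) = 1$.

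For the reverse inclusion I would, given $E' \in \mathcal{E}(t,\sigma)$ and $\widetilde{G} = G[V_t]\setminus E'$, construct a witnessing $\sigma' = (H',\phi') \in \sigma[t']_{\fgt}$ with $E' \in \mathcal{E}(t',\sigma')$. The choice $H' = \widetilde{G}[\mathcal{D}(t')]$ is forced; it makes condition~1 of $\mathcal{E}(t',\sigma')$ hold automatically and, via condition~1 of $\mathcal{E}(t,\sigma)$, gives $H'[\mathcal{D}(t)] = \widetilde{G}[\mathcal{D}(t)] = H$, the first property of $\sigma[t']_{\fgt}$. For $\phi'$ I would take the function for which $(\phi')^{-1}(1)$ is the smallest set that (i) contains every pair of $\phi^{-1}(1)$, now regarded as a pair over $H'$ (legitimate since these map into $\mathcal{D}(t) \subseteq \mathcal{D}(t')$), (ii) contains $\big(F', \theta|_{\mathcal{D}(t')}\big)$ for every embedding $\theta$ into $\widetilde{G}$ of an induced subgraph $F'$ of some $F \in \mathcal{F}$ whose image meets $\mathcal{D}(t')$, and (iii) is closed under the consistency conditions 2(b) and 2(c) of a valid state. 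Part (ii) is exactly what is needed for condition~3 of $\mathcal{E}(t',\sigma')$, and it already delivers condition 2(a) of validity; conditions~1 and~2 of $\mathcal{E}(t',\sigma')$ are immediate. What remains is to check that this $\sigma'$ is a \emph{valid} state and that it lies in $\sigma[t']_{\fgt}$.

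I expect that last verification to be the crux. For validity the delicate point is condition 2(d): closing under 2(c) must never record a copy of some $F \in \mathcal{F}$ --- impossible for a \emph{full} copy, since that would embed $F$ into $H' \subseteq \widetilde{G}$, contradicting condition~2 of $\mathcal{E}(t,\sigma)$, and the partial case has to be ruled out too. For membership in $\sigma[t']_{\fgt}$ the nontrivial half of the second property is that restricting every pair of $(\phi')^{-1}(1)$ to $\mathcal{D}(t)$ lands back in $\phi^{-1}(1)$: for the type-(ii) pairs this is condition~3 of $\mathcal{E}(t,\sigma)$, while for pairs produced by the closure it uses that $\sigma$ is itself a valid state (so $\phi^{-1}(1)$ is already closed under 2(b) and 2(c) inside $H$) together with the fact that the forgotten vertex $v$ occurs only in bags of the subtree rooted at $t'$, so that its neighbours outside $\mathcal{D}(t')$ are separated from $\mathcal{D}(t)$ by $\mathcal{D}(t')$ --- which is what prevents an extension of a $\phi^{-1}(1)$-pair by a vertex mapped to $v$ from combining with already-forgotten structure to form a forbidden configuration not seen by $\phi$. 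A minor loose end, which I would handle by the convention that the empty partial embedding is always recorded as $1$, is that an embedding into $\widetilde{G}$ meeting $\mathcal{D}(t')$ only in $\{v\}$ restricts over $\mathcal{D}(t)$ to the empty embedding.
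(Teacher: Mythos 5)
Your skeleton---establishing $\mathcal{E}(t,\sigma) = \bigcup_{\sigma' \in \sigma[t']_{\fgt}} \mathcal{E}(t',\sigma')$ and reading off the recurrence, with the forward inclusion handled by restricting embeddings from $\mathcal{D}(t')$ down to $\mathcal{D}(t)$---is exactly the paper's, and that half of your argument is complete. The gap is in the reverse inclusion, and it sits precisely where you say ``I expect that last verification to be the crux'': the construction of the witnessing $\phi'$ and the check that it lands in $\sigma[t']_{\fgt}$ are the entire content of this direction, and your closure-based definition makes that check fail rather than merely hard. Concretely, a valid state $\phi$ may set $\phi(F_1,\theta)=1$ for a pair that is not witnessed by any embedding into $\widetilde{G_t}$ while setting $\phi(F[V(F_1)\cup\{u\}],\theta)=0$: nothing in conditions 2(a)--(d) forbids this, since 2(c) only forces extensions in which $u$ is actually mapped to a vertex of the bag graph, and condition 3 of $\mathcal{E}(t,\sigma)$ only imposes \emph{lower} bounds on $\phi^{-1}(1)$. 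Now lift such a pair into $(\phi')^{-1}(1)$ as one of your type-(i) seeds. If the forgotten vertex $v$ happens to satisfy $\theta(w)v \in E(H')$ for every $F$-neighbour $w$ of $u$ in $F_1$, your closure under 2(c) inserts $(F[V(F_1)\cup\{u\}],\,\theta\cup\{u\mapsto v\})$ into $(\phi')^{-1}(1)$; its restriction to $\mathcal{D}(t)$ is $(F[V(F_1)\cup\{u\}],\theta)$, which is not in $\phi^{-1}(1)$. So the second defining property of $\sigma[t']_{\fgt}$ fails for your $\sigma'$. Your proposed rescue via the separator $\mathcal{D}(t')$ cannot apply here, because the offending seed corresponds to no structure in $G[V_t]$ at all---it is simply a consistent but unwitnessed $1$ in $\phi$, which the lemma must nevertheless accommodate.

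The paper avoids this by not closing anything: it defines $\phi'$ as the direct pullback of $\phi$ along restriction, $\phi'(F',\theta) := \phi(F',\theta|_{\mathcal{D}(t)})$ for every $(F',\theta) \in X_{\mathcal{F},H'}$. With that choice the second property of $\sigma[t']_{\fgt}$ holds essentially by definition in both directions (each pair of $\phi^{-1}(1)$ is its own restriction, since $H = H'[\mathcal{D}(t)]$ sits inside $H'$), and condition 3 of $\mathcal{E}(t',\sigma')$ follows in one line from condition 3 of $\mathcal{E}(t,\sigma)$ because restricting an embedding to $\mathcal{D}(t')$ and then to $\mathcal{D}(t)$ is the same as restricting to $\mathcal{D}(t)$ directly. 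You should replace your closure construction with this pullback; what then remains is the verification that the pullback is a valid state together with the $\theta(V(F'))\cap\mathcal{D}(t)=\emptyset$ edge case you correctly identified, which the paper itself treats only briefly.
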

\begin{proof}
We shall in fact demonstrate that 
$$\mathcal{E}(t,\sigma) = \bigcup_{\sigma' \in \sigma[t']_{\fgt}} \left\lbrace \mathcal{E}(t',\sigma') \right\rbrace,$$
and hence that
$$\mathcal{E}_k(t,\sigma) = \bigcup_{\sigma' \in \sigma[t']_{\fgt}} \left\lbrace \mathcal{E}_k(t',\sigma') \right\rbrace,$$
from which the result will follow immediately by taking the minimum cardinality of an element in each set.

We begin by showing that $\bigcup_{\sigma' \in \sigma[t']_{\fgt}} \left\lbrace \mathcal{E}(t',\sigma') \right\rbrace \subseteq \mathcal{E}(t,\sigma)$.  Suppose that $E' \in \mathcal{E}(t',\sigma')$ for some $\sigma' \in \sigma[t']_{\fgt}$; we must verify that $E'$ also satisfies the three  conditions of Definition \ref{def:gen-E}.  Note that $V_t = V_{t'}$, and set $\widetilde{G_t} = G[V_t] \setminus E'$.  The first two conditions follow trivially from the definition of $\sigma[t']_{\fgt}$ and the fact that $E' \in \mathcal{E}(t',\sigma')$.  For the third condition, observe that if $\theta$ is an embedding of $F'$ into $\widetilde{G_t}$ then we must have $\phi'(F',\theta|_{\mathcal{D}(t')}) = 1$ and hence, by definition of $\sigma[t']_{\fgt}$ we also have $\phi(F',\theta|_{\mathcal{D}(t)}) = 1$, as required.

Conversely, we now argue that $\mathcal{E}(t,\sigma) \subseteq \bigcup_{\sigma' \in \sigma[t']_{\fgt}} \left\lbrace \mathcal{E}(t',\sigma') \right\rbrace$. Let $\widehat{E} \in \mathcal{E}(t,\sigma)$.  We claim that there exists $\sigma' \in \sigma[t']_{\fgt}$ such that $\widehat{E} \in \mathcal{E}(t',\sigma')$.  Set $\widetilde{G_t} = G[V_t] \setminus \widehat{E} = G[V_{t'}] \setminus \widehat{E}$, and let $\sigma' = (H',\phi')$ where $H' = \widetilde{G_t}[\mathcal{D}(t')]$ and, for every $(F',\theta) \in X_{\mathcal{F},H}$, we have $\phi'(F',\theta) = 1$ if and only if $\phi(F',\theta|_{\mathcal{D}(t)})=1$.  It is straightforward to verify that $(H',\phi') \in \sigma[t']_{\fgt}$.  We must verify that $\widetilde{G_t}$ satisfies the three conditions of Definition \ref{def:gen-E}.  The first condition is trivially satisfied by our choice of $\sigma'$, and the second follows from the fact that $\widehat{E} \in \mathcal{E}(t,\sigma)$.  For the third condition, suppose that $\theta$ is an embedding of $F'$ into $\widetilde{G_t}$; it follows that $\phi(F',\theta|_{\mathcal{D}(t)}) = 1$ and also that $\theta|_{\mathcal{D}(t')}$ is an embedding of $F'$ into $H'$ so, by definition of $\phi'$, we also have $\phi'(F',\theta|_{\mathcal{D}(t')})=1$, as required.  Thus we see that $\widehat{E} \in \mathcal{E}(t',\sigma')$, completing the argument that $\mathcal{E}(t,\sigma) \subseteq \bigcup_{\sigma' \in \sigma[t']_{\fgt}} \left\lbrace \mathcal{E}(t',\sigma') \right\rbrace$ and hence the proof.
\end{proof}

\subsubsection*{Join nodes}

Suppose that the join node $t$ has children $t_1$ and $t_2$.  Note that $\st(t_1) = \st(t_2) = \st(t)$.  In this case, we need to consider a set of pairs of inherited states for the two children $t_1$ and $t_2$.  We write $\st(t_1,t_2)$ for the Cartesian product $\st(t_1) \times \st(t_2)$ and, given any state $\sigma = (H,\phi) \in \st(t)$, we write $\sigma[t_1,t_2]_{\join}$ for the set of pairs of \emph{join-inherited states} $((H_1,\phi_1),(H_2,\phi_2)) \in \st(t_1,t_2)$ satisfying:
\begin{enumerate}
\item $H_1 = H_2 = H$, and
\item for every $(F',\theta) \in \phi^{-1}(0)$, and all induced subgraphs $F_1$ and $F_2$ of $F'$ such that $V(F_1) \cup V(F_2) = V(F')$ and $V(F_1) \cap V(F_2) = \dom(\theta)$, either $\phi_1(F_1,\theta|_{V(F_1)})=0$ or $\phi_2(F_2,\theta|_{V(F_2)})=0$.
\end{enumerate}

We now make the following claim about the signature of $t$ and those of $t_1$ and $t_2$.
\begin{lma}
Let $t$ be a join node with children $t_1$ and $t_2$, and let $\sigma = (H,\phi) \in \st(t)$.  Then
\begin{align*}
\del_k(t,\sigma) & = \left[ \min_{\substack{(\sigma_1,\sigma_2) \in \\ \sigma[t_1,t_2]_{\join}}} \{\del_k(t_1,\sigma_1) + \del_k(t_2,\sigma_2) - \left(e(G[\mathcal{D}(t)]) - e(H)\right) \right]_{\leq k}.
\end{align*}
\label{lma:gen-join}
\end{lma}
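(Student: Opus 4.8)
The plan is to mimic the structure of the proofs of Lemmas \ref{lma:gen-introduce} and \ref{lma:gen-forget}: establish a precise correspondence between $\mathcal{E}(t,\sigma)$ and the sets $\mathcal{E}(t_1,\sigma_1)$, $\mathcal{E}(t_2,\sigma_2)$ ranging over $(\sigma_1,\sigma_2) \in \sigma[t_1,t_2]_{\join}$, and then read off the formula for $\del_k(t,\sigma)$ by taking minimum cardinalities. Concretely, I would show that $\widehat{E} \in \mathcal{E}(t,\sigma)$ if and only if $\widehat{E} = E_1 \cup E_2$ where $E_i = \widehat{E} \cap E(G[V_{t_i}])$, each $E_i$ (suitably ``completed'' to delete the right edges inside $\mathcal{D}(t)$) lies in $\mathcal{E}(t_i,\sigma_i)$ for an appropriate join-inherited pair $(\sigma_1,\sigma_2)$, and $E_1 \cap E_2$ is exactly the set of edges of $G[\mathcal{D}(t)]$ absent from $H$ (this overlap has size $e(G[\mathcal{D}(t)]) - e(H)$, which accounts for the correction term). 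Since $V_t = V_{t_1} \cup V_{t_2}$ with $V_{t_1} \cap V_{t_2} = \mathcal{D}(t)$, and since $E(G[V_t])$ is the union of $E(G[V_{t_1}])$ and $E(G[V_{t_2}])$ (no edges of $G$ run between $V_{t_1}\setminus \mathcal{D}(t)$ and $V_{t_2}\setminus\mathcal{D}(t)$, by property 2 of tree decompositions listed in Section \ref{sec:treewidth}), the decomposition $\widehat{E} = E_1 \cup E_2$ is forced, and $|\widehat{E}| = |E_1| + |E_2| - (e(G[\mathcal{D}(t)]) - e(H))$.

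For the forward direction ($\mathcal{E}(t,\sigma)$ is contained in the union of products), given $\widehat{E} \in \mathcal{E}(t,\sigma)$ and $\widetilde{G_t} = G[V_t]\setminus\widehat{E}$, I define $E_i = \widehat{E}\cap E(G[V_{t_i}])$ and let $\sigma_i = (H, \phi_i)$ where $\phi_i(F',\theta) = 1$ iff $\theta$ is (up to the consistency-closure forced by condition 2 of a valid state) realised as an embedding of some subgraph into $\widetilde{G_t}[V_{t_i}]$ — more carefully, I would simply define $\phi_i$ by the analogue of condition 3 in the definition of $\mathcal{E}(t_i,\sigma_i)$ and then check $\sigma_i \in \st(t_i)$. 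The three defining conditions of $\mathcal{E}(t_i,\sigma_i)$ are then routine: condition 1 holds since $\widetilde{G_t}[\mathcal{D}(t)] = H$; condition 2 (no $F\in\mathcal{F}$ in $\widetilde{G_t}[V_{t_i}]$) follows because $\widetilde{G_t}[V_{t_i}]$ is a subgraph of $\widetilde{G_t}$; condition 3 holds by construction of $\phi_i$. The key point is checking that $(\sigma_1,\sigma_2)\in\sigma[t_1,t_2]_{\join}$: for condition 1 we have $H_1 = H_2 = H$ by construction, and for condition 2, suppose $(F',\theta)\in\phi^{-1}(0)$ with induced subgraphs $F_1,F_2$ of $F'$ with $V(F_1)\cup V(F_2) = V(F)$ and $V(F_1)\cap V(F_2) = \dom(\theta)$; if both $\phi_1(F_1,\theta|_{V(F_1)}) = 1$ and $\phi_2(F_2,\theta|_{V(F_2)}) = 1$ then, using that any path of $G$ from $V_{t_1}\setminus\mathcal{D}(t)$ to $V_{t_2}\setminus\mathcal{D}(t)$ meets $\mathcal{D}(t)$, these two partial embeddings glue along $\mathcal{D}(t)$ into an embedding of $F'$ into $\widetilde{G_t}$, forcing $\phi(F',\theta) = 1$ by condition 3 in the definition of $\mathcal{E}(t,\sigma)$ — a contradiction.

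For the converse, given $(\sigma_1,\sigma_2)\in\sigma[t_1,t_2]_{\join}$ and $E_i\in\mathcal{E}(t_i,\sigma_i)$, I set $\widehat{E} = E_1\cup E_2$ and $\widetilde{G_t} = G[V_t]\setminus\widehat{E}$ and verify the three conditions defining $\mathcal{E}(t,\sigma)$. Condition 1 is immediate since $E_1,E_2$ each restrict $G[\mathcal{D}(t)]$ to $H$. For condition 2, suppose $\psi$ embeds some $F\in\mathcal{F}$ into $\widetilde{G_t}$; split $V(F)$ as $\psi^{-1}(V_{t_1})$ and $\psi^{-1}(V_{t_2})$ (overlapping in $\psi^{-1}(\mathcal{D}(t))$), let $F_i$ be the induced subgraph of $F$ on $\psi^{-1}(V_{t_i})$, so $\psi|_{F_i}$ embeds $F_i$ into $\widetilde{G_t}[V_{t_i}]$; since $E_i\in\mathcal{E}(t_i,\sigma_i)$, condition 3 there gives $\phi_i(F_i,\psi|_{\psi^{-1}(\mathcal{D}(t))}) = 1$ (using that $F_i$ meets $\mathcal{D}(t)$ — handling degenerate cases where it does not separately, via condition 2 of $\mathcal{E}(t_i,\sigma_i)$), and then condition 2 of $\sigma[t_1,t_2]_{\join}$ forces $\phi(F,\psi|_{\mathcal{D}(t)})\neq 0$, i.e. $=1$, contradicting condition 2(d) of a valid state for $\sigma$. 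For condition 3, a partial embedding $\theta$ of $F'$ into $\widetilde{G_t}$ meeting $\mathcal{D}(t)$ is split the same way and the value $\phi(F',\theta|_{\mathcal{D}(t)}) = 1$ is deduced from condition 3 applied in each $t_i$ together with condition 2(c) (closure under extension) and 2(b) (closure under restriction) in the definition of a valid state for $\sigma$. Finally, taking the minimum over $(\sigma_1,\sigma_2)\in\sigma[t_1,t_2]_{\join}$ and over $E_i\in\mathcal{E}_k(t_i,\sigma_i)$, and using $|E_1\cup E_2| = |E_1| + |E_2| - (e(G[\mathcal{D}(t)]) - e(H))$, yields the stated formula, with the outer $[\,\cdot\,]_{\leq k}$ recording that any value exceeding $k$ is reported as $\infty$.

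The main obstacle I anticipate is the bookkeeping in the gluing argument for condition 2 of $\sigma[t_1,t_2]_{\join}$: one must be careful that when an embedding of $F'$ into $\widetilde{G_t}$ is cut along $\mathcal{D}(t)$, the two halves are \emph{induced} subgraphs $F_1,F_2$ of $F'$ with intersection exactly $\dom(\theta) = \psi^{-1}(\mathcal{D}(t))$ and union $V(F')$, which is precisely the combinatorial situation quantified over in the definition of $\sigma[t_1,t_2]_{\join}$ — and conversely that the definition's quantification over \emph{all} such pairs $(F_1,F_2)$ is exactly what is needed to rule out every possible way a forbidden subgraph could be assembled from the two sides. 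Getting the quantifier alternation to line up (for every bad $(F',\theta)$, for all valid splits, one side fails) with the logic of the gluing (if some split has both sides succeeding, a forbidden configuration appears) is the delicate part; the rest is a direct transcription of the arguments already given for introduce and forget nodes.
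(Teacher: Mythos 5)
Your proposal follows essentially the same route as the paper's own proof: both directions are established by splitting/gluing edge sets and partial embeddings along the separator $\mathcal{D}(t)$, with the child states $\sigma_i=(H,\phi_i)$ defined by extendability of partial embeddings into $\widetilde{G_t}[V_{t_i}]$, the contradiction for condition 2 of $\sigma[t_1,t_2]_{\join}$ obtained by gluing two compatible partial embeddings into one embedding of $F'$ into $\widetilde{G_t}$, and the cardinality identity $|E_1\cup E_2|=|E_1|+|E_2|-\left(e(G[\mathcal{D}(t)])-e(H)\right)$ yielding the stated formula. The quantifier bookkeeping you flag as the delicate point is exactly the content of the paper's argument, and your treatment of it is correct.
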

\begin{proof}
We will demonstrate that 
$$\mathcal{E}(t,\sigma) = \bigcup_{(\sigma_1,\sigma_2) \in \sigma[t_1,t_2]_{\join}} \lbrace E_1 \cup E_2: E_1 \in \mathcal{E}(t_1,\sigma_1), E_2 \in \mathcal{E}(t_2,\sigma_2) \rbrace,$$
and hence that
$$\mathcal{E}_k(t,\sigma) = \bigcup_{(\sigma_1,\sigma_2) \in \sigma[t_1,t_2]_{\join}} \lbrace E_1 \cup E_2: E_1 \in \mathcal{E}_k(t_1,\sigma_1), E_2 \in \mathcal{E}_k(t_2,\sigma_2), |E_1 \cup E_2| \leq k \rbrace.$$
Notice that, for any $E_1 \in \mathcal{E}(t_1,\sigma_1)$ and $E_2 \in \mathcal{E}(t_2,\sigma_2)$ the properties of the tree decomposition ensure that $E_1 \cap E_2 = E(G[\mathcal{D}(t)]) \setminus E(H)$, so the result will follow immediately by taking the minimum cardinality of an element from each set.

We first show that $\mathcal{E}(t,\sigma) \supseteq \bigcup_{(\sigma_1,\sigma_2) \in \sigma[t_1,t_2]_{\join}} \lbrace E_1 \cup E_2: E_1 \in \mathcal{E}(t_1,\sigma_1), E_2 \in \mathcal{E}(t_2,\sigma_2) \rbrace.$  Let $(\sigma_1,\sigma_2) \in \sigma[t_1,t_2]_{\join}$, and suppose that $E_1 \in \mathcal{E}(t_1,\sigma_1)$ and $E_2 \in \mathcal{E}(t_2,\sigma_2)$; we claim that $E_1 \cup E_2 \in \mathcal{E}(t,\sigma)$.  It suffices to show that $\widetilde{G_t} = G[V_t] \setminus (E_1 \cup E_2)$ satisfies the three conditions of Definition \ref{def:gen-E}.

The first condition is trivially satisfied.  For the second and third conditions suppose that, for some $F \in \mathcal{F}$, $F'$ is an induced subgraph of $F$ and $\theta$ is an embedding of $F'$ into $\widetilde{G_t}$.  Write $F_i'$ for $F'[V_{t_i}]$ (for $i \in \{1,2\}$) and note that $\theta|_{V_{t_i}}$ is an embedding of $F_i'$ into $G[V_{t_i}]$; thus, as $E_i \in \mathcal{E}(t_i,\sigma_i)$ we must have $\phi_i(F_i',\theta|_{V_{t_i}}) = 1$.  Since $V(F') = V(F_1') \cup V(F_2')$, and $\theta^{-1}(\theta(V(F')) \cap \mathcal{D}(t)) = V(F_1') \cap V(F_2')$, it follows from the definition of $\sigma[t_1,t_2]_{\join}$ that $\phi(F',\theta|_{\mathcal{D}(t)}) = 1$. This demonstrates that the third condition is satisfied.  To see that the second condition is also satisfied, observe that if $F' = F$ then we would by the reasoning above have $\phi(F,\theta|_{\mathcal{D}(t)})=1$, contradicting the fact that $\sigma$ is a valid state, so we see that the second condition must also be satisfied.  Thus we see that $E_1 \cup E_2 \in \mathcal{E}(t,\sigma)$, as required.

Conversely, we now show that $\mathcal{E}(t,\sigma) \subseteq \bigcup_{(\sigma_1,\sigma_2) \in \sigma[t_1,t_2]_{\join}} \lbrace E_1 \cup E_2: E_1 \in \mathcal{E}(t_1,\sigma_1), E_2 \in \mathcal{E}(t_2,\sigma_2) \rbrace$.  Let $\widehat{E} \in \mathcal{E}(t,\sigma)$, and set $\widetilde{G_t} = G[V_t] \setminus \widehat{E}$; we also let $E_i = \widehat{E} \cap E(G[V_{t_i}])$ for $i \in \{1,2\}$.  Now set $\sigma_i = (H,\phi_i)$ where $\phi_i(F',\theta) = 1$ if and only if $\theta$ can be extended to an embedding of $F'$ into $\widetilde{G_t}[V_{t_i}]$.  

To verify that $(\sigma_1,\sigma_2) \in \sigma[t_1,t_2]_{\join}$ suppose, for a contradiction, that there exists $(F',\theta) \in \phi^{-1}(0)$, and induced subgraphs $F_1$ and $F_2$ of $F'$ such that $V(F_1) \cup V(F_2) = V(F')$ and $V(F_1) \cap V(F_2) = \dom(\theta)$, and $\phi_1(F_1,\theta|_{V(F_1)}) = \phi_2(F_2,\theta|_{V(F_2)})=1$.  By definition of $\phi_1, \phi_2$ it then follows (for $i \in \{1,2\}$) that $\theta|_{V(F_i)}$ can be extended to an embedding $\theta_i$ of $F_i$ into $\widetilde{G_t}[V_{t_i}]$; note that $\theta_1$ and $\theta_2$ agree on $V(F_1) \cap V(F_2)$ and hence on all vertices of $\theta^{-1}(\mathcal{D}(t))$.  We can therefore define $\widehat{\theta}$ to be the injective function which agrees with both $\theta_1$ and $\theta_2$ on their respective domains.  It is clear that $\widehat{\theta}$ defines an embedding of $F'$ into $\widetilde{G_t}$ and so (as we are assuming that $\widetilde{G_t}$ satisfies the conditions of Definition \ref{def:gen-E}) we must have $\phi(F',\widehat{\theta}|_{\mathcal{D}(t)})=1$; however, by definition, $\widehat{\theta}|_{\mathcal{D}(t)} = \theta$, so we have  $\phi(F',\theta)=1$, contradicting our initial assumption.  We may therefore conclude that $(\sigma_1,\sigma_2) \in \sigma[t_1,t_2]_{\join}$.

We now proceed to show that $E_i \in \mathcal{E}(t_i,\sigma_i)$ for each $i \in \{1,2\}$.  It suffices to demonstrate that each $G[V_{t_i}] \setminus E_i = \widetilde{G_t}[V_{t_i}]$ satisfies the conditions of Definition \ref{def:gen-E}.  The first two conditions are trivially satisfied.  For the third condition, suppose that $F'$ is an induced subgraph of some $F \in \mathcal{F}$ and that $\theta$ is an embedding of $F'$ into $\widetilde{G_t}[V_{t_i}]$; it follows immediately from the definition of $\phi_i$ that $\phi_i(F',\theta|_{\mathcal{D}(t_1)}) = 1$, as required.

This completes the argument that $\bigcup_{(\sigma_1,\sigma_2) \in \sigma[t_1,t_2]_{\join}} \lbrace E_1 \cup E_2: E_1 \in \mathcal{E}(t_1,\sigma_1), E_2 \in \mathcal{E}(t_2,\sigma_2) \rbrace \subseteq \mathcal{E}(t,\sigma)$, and hence the proof.
\end{proof}

\subsection{Running time and extensions}
\label{sec:gen-runtime}

Observe that we can precompute the list of all valid states for a given node $t$ of the tree decomposition, as this depends only on the subgraph $G[\mathcal{D}(t)]$, for which there are at most $2^{\binom{w+1}{2}} < 2^{w^2}$ possibilities.  To find the set of valid states for a bag which induces a given subgraph in $G$, we first determine all possibilities for $H$: this must be a spanning subgraph of $G[\mathcal{D}(t)]$, and there are most $2^{w^2}$ possibilities.  Next, we determine the set $X_{\mathcal{F},H}$ of pairs $(F',\theta)$ such that $F'$ is an induced subgraph of some $F \in \mathcal{F}$ and $\theta$ is an embedding into $H$ of some induced subgraph $F''$ of $F'$; to do this, we consider each of the at most $|\mathcal{F}| (w+3)^r$ possible pairs $(F',\theta)$ (as described in Section \ref{sec:gen-signature}), and verify in time at most $O(w^2)$ for each such possibility whether $\theta$ does indeed define an embedding.  Finally, for each possible function $\phi: X_{\mathcal{F},H} \rightarrow \{0,1\}$ (of which there are at most $2^{|\mathcal{F}| (w+3)^r}$), we must determine whether conditions 2(a)-(d) in the definition of a valid state are satisfied: conditions (a) and (d) can each be checked in constant time, whereas (b) requires time at most $O(2^r)$ (to consider every possible induced subgraph of some $F_1 \in \mathcal{F}$, which can have at most $r$ vertices) and (c) requires time at most $O(wr^2)$ (to consider every possible choice of a pair of vertices $u \in V(F) \setminus V(F_1)$ and $v \in V(H) \setminus \image(\theta)$, and to verify that extending the mapping preserves all edges in $F$ that are incident with $u$).  Thus we can generate a lookup table of the valid states for any given node in time 
$$O \left(2^{w^2} \cdot 2^{w^2} \left(|\mathcal{F}|(w+3)^r \cdot w^2 + 2^{|\mathcal{F}| (w+3)^r} \left(2^r + wr^2\right)\right)\right) = 2^{O \left(|\mathcal{F}|w^r\right)}.$$

Moreover, we can also precompute, for each possible state $\sigma = (H, \phi)$, the corresponding introduce-inherited state and sets of forget-inherited and join-inherited states.  

For an introduce node, there is a single introduce-inherited state for the child node, and this can clearly be computed in time $O(|X_{\mathcal{F},H'}|) = O(|\mathcal{F}(w+3)^r)$.

For a forget node, we need to consider all valid states $\sigma' = (H', \phi')$ for the child.  We can first verify that $H$ and $H'$ have the correct relationship (taking time at most $O(w^2)$), then we compute the set $\{(F',\theta|_{\mathcal{D}(t)}):(F',\theta) \in (\phi')^{-1}(1)\}$ (taking time at most $O(|X_{\mathcal{F},H'}|) = O(|\mathcal{F}|(w+3)^r)$) and verify (in time $O(|X_{\mathcal{F},H}|)$) that this is equal to $\phi^{-1}(1)$.  Thus the total time required to compute the set of forget-inherited states with respect to $\sigma$ is $O(2^{w^2 + |\mathcal{F}| (w+3)^r} w^2 |\mathcal{F}|(w+3)^r) = 2^{O(|\mathcal{F}|w^r)}$.

For a join node, we need to consider all pairs of valid states $(\sigma_1,\sigma_2)$ for the children, where $\sigma_i = (H_i,\phi_i)$.  We can first verify that $H_1 = H_2 = H$ in time $O(w^2)$.  For the condition on $\phi$, $\phi_1$ and $\phi_2$, we consider each $(F',\theta) \in X_{\mathcal{F},H}$, determine all possibilities for $F_1$ and $F_2$ so that $V(F_1) \cup V(F_2) = V(F)$ (of which there are at most $3^r$, as each vertex of $F'$ belongs to either $F_1$, $F_2$ or both) and then verify in constant time whether the condition is satisfied for this choice of $(F',\theta)$ and $(F_1,F_2)$.  Thus the total time required to compute the set of join-inherited states with respect to $\sigma$ is $O((2^{w^2 + |\mathcal{F}| (w+3)^r})^2 w^2 |\mathcal{F}|(w+3)^r 3^r) = 2^{O(|\mathcal{F}|w^r)}$.

Hence we can perform all precomputation in time $2^{O(|\mathcal{F}|w^r)}$.  Having precomputed the sets of states we then, at each of the $O(n)$ nodes of the nice tree decomposition, iterate over all of the valid states for the node, of which there are at most $2^{w^2 + |\mathcal{F}|(w+3)^r}$.  For each possible state of a given node, we will need to consider a collection of at most $2^{2(w^2 + |\mathcal{F}|(w+3)^r)}$ inherited states (or pairs of states, in the case of a join node) of the child node(s), and to perform various constant-time operations for each such state.  At each of the $O(n)$ nodes we therefore do
$$O\left(2^{3(w^2 + |\mathcal{F}|(w+3)^r)}\right) = 2^{O(|\mathcal{F}|w^r)}$$
work, so this phase of the computation requires time $2^{O(|\mathcal{F}|w^r)}n$, and the overall time complexity of the algorithm is therefore $2^{O(|\mathcal{F}|w^r)}n$.  We can bound the size of $\mathcal{F}$ by $2^{2^r}$, to obtain an upper bound on the running time that depends only on $n$, $w$ and $r$.

It is straightforward to adapt the algorithm to solve the related problem in which we wish to delete edges to that the resulting graph contains no \emph{induced} copy of any $F \in \mathcal{F}$: we simply consider strong embeddings instead of embeddings, and make analogous changes to the definition of a valid state.  This does not change the running time as a function of $n$, $w$ and $|\mathcal{F}|$, but for monotone properties we would typically need to consider a larger set of forbidden \emph{induced} subgraphs than forbidden subgraphs, resulting in an increased running time.

For simplicity, we have only described the most basic version of the algorithm.  However, it is straightforward to extend it to deal with a more realistic situation in which the cost of deleting different edges may vary: in this case, given a cost function $f:E(G) \rightarrow \mathbb{N}$, we instead define $\del_k(t,\sigma)$ to be $\min_{E' \in \mathcal{E}(\sigma,t)} \sum_{e \in E'} f(e)$.  Additionally, if we wish to output an optimal set of edges to delete, we can simply record, for each node $t$ and each state $\sigma \in \st(t)$, a set of edges $E' \in \mathcal{E}_k(t,\sigma)$ such that $|E'| = \del_k(t,\sigma)$ (note that in general there may be many such optimal sets); computing such a set from the relevant sets for the node's children requires only basic set operations.  An element of $\mathcal{E}(t_r,\sigma)$, where $t_r$ is the root of the tree decomposition and $\del_k(t_r,\sigma) = \min_{\sigma \in \st(t_r)}$ is then an optimal solution for the problem.  Neither of these adaptations changes the asymptotic running time of the algorithm.

\section{A specialised algorithm for \compprobname}
\label{sec:cpt-alg}

In this section, we turn our attention to the special case of \compprobname ~and describe a more efficient algorithm for this important special case of the problem.  Specifically, we prove the following theorem.

\begin{thm}
There exists an algorithm to solve \compprobname ~in time $O((wh)^{2w}n)$ on an input graph with $n$ vertices whose treewidth is at most $w$.
\end{thm}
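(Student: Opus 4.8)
The plan is to specialise the dynamic programme of Section~\ref{sec:gen-alg} to the family $\mathcal{F}=\mathcal{T}_{h+1}$, for which the state of a node collapses dramatically. The key observation — already used in Proposition~\ref{prop:hardness} — is that a graph contains some tree on $h+1$ vertices as a subgraph if and only if some connected component has more than $h$ vertices; and since deleting an edge lying inside a component never makes any component larger, the minimum number of edges one must delete from $G$ so that every component has at most $h$ vertices equals $\min_{\mathcal{Q}}\bigl|\{uv\in E(G):u,v\ \text{lie in different blocks of}\ \mathcal{Q}\}\bigr|$, the minimum being over all partitions $\mathcal{Q}$ of $V(G)$ into blocks of size at most $h$. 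In particular we never need to track a spanning subgraph of $G[\mathcal{D}(t)]$ together with a catalogue of partial embeddings of forbidden subgraphs (the source of the $2^{\Theta(w^2)}$ factor in Theorem~\ref{thm:gen-alg}); it suffices to record how the current bag is split between the blocks of $\mathcal{Q}$, and the size so far of each of those blocks.

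Accordingly, define a \emph{state} of a node $t$ to be a pair $(\mathcal{P},c)$ where $\mathcal{P}$ is a partition of $\mathcal{D}(t)$ and $c:\mathcal{P}\to\{1,\dots,h\}$ with $|B|\le c(B)$ for every block $B$. For such a state $\sigma$, let $A_t(\sigma)$ be the minimum, over all partitions $\mathcal{Q}$ of $V_t$ in which every block has size at most $h$, the blocks meeting $\mathcal{D}(t)$ induce exactly the partition $\mathcal{P}$ on $\mathcal{D}(t)$, and each such block $B$ has $|B|=c(B\cap\mathcal{D}(t))$, of the number of edges of $G[V_t]$ joining distinct blocks of $\mathcal{Q}$ (with the usual convention that the minimum over an empty set is $\infty$, and capping at $k$ as in Section~\ref{sec:gen-recursive}). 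By the identity above applied at the root $t_r$ (where $V_{t_r}=V(G)$), the input is a \textbf{yes}-instance iff $\min_{\sigma}A_{t_r}(\sigma)\le k$. Since a partition of $\mathcal{D}(t)$ is specified by a map from $\mathcal{D}(t)$ to itself and its at most $w+1$ blocks each carry a label from $\{1,\dots,h\}$, there are at most $((w+1)h)^{w+1}=(wh)^{O(w)}$ states per node.

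The recurrences for the four node types then mirror those of Section~\ref{sec:gen-recursive} with the natural simplifications. \emph{Leaf:} since $V_t=\mathcal{D}(t)$, $A_t(\mathcal{P},c)$ is $e(G[\mathcal{D}(t)])-\sum_{B\in\mathcal{P}}e(G[B])$ when $c(B)=|B|$ for all $B$ and $\infty$ otherwise. \emph{Introduce} of $v$: the tree-decomposition axioms force every $G$-neighbour of $v$ in $V_t$ to lie in $\mathcal{D}(t)$, so $v$ simply joins one block $B_v$ of $\mathcal{P}$; the child state is obtained by removing $v$ from $B_v$ and decreasing $c(B_v)$ by one (requiring $c(B_v)\le h$), and the cost rises by $d_{G[\mathcal{D}(t)]}(v)-|N_G(v)\cap B_v|$, the number of edges from $v$ to other blocks. \emph{Forget} of $v$: $A_t(\mathcal{P},c)=\min A_{t'}(\mathcal{P}',c')$ over child states projecting onto $(\mathcal{P},c)$ — either $v$ lay in a block of size at least two (so $\mathcal{P}=\mathcal{P}'\setminus v$, sizes unchanged) or $v$ was a singleton block of $\mathcal{P}'$, whose $\mathcal{Q}$-block is thereby completed, which is legitimate precisely because every state already satisfies $c(\{v\})\le h$; no edges are added. \emph{Join} with children $t_1,t_2$ is the only substantive case, and is where the state design pays off: a valid $\mathcal{Q}$ for $t$ restricts to valid partitions of $V_{t_1}$ and $V_{t_2}$ which, agreeing with $\mathcal{Q}$ on $\mathcal{D}(t)=V_{t_1}\cap V_{t_2}$, must carry the \emph{same} bag-partition $\mathcal{P}$; by inclusion–exclusion $c(B)=c_1(B)+c_2(B)-|B|$ for each block $B$; and since every edge of $G[V_t]$ has both ends in $V_{t_1}$ or both ends in $V_{t_2}$, with an edge inside $\mathcal{D}(t)$ counted on both sides, the number of cross edges is $A_{t_1}(\mathcal{P},c_1)+A_{t_2}(\mathcal{P},c_2)$ minus the number of bag-edges between distinct blocks of $\mathcal{P}$, i.e.\ minus $e(G[\mathcal{D}(t)])-\sum_{B\in\mathcal{P}}e(G[B])$. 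Thus
\[
A_t(\mathcal{P},c)=\Bigl[\ \min_{\substack{c_1,c_2:\ c_1(B)+c_2(B)=c(B)+|B|\\ |B|\le c_i(B)\le h}}\bigl(A_{t_1}(\mathcal{P},c_1)+A_{t_2}(\mathcal{P},c_2)\bigr)\ -\ \bigl(e(G[\mathcal{D}(t)])-\textstyle\sum_{B\in\mathcal{P}}e(G[B])\bigr)\ \Bigr]_{\le k},
\]
which is exactly the join recurrence of Lemma~\ref{lma:gen-join} with $e(H)$ replaced by $\sum_{B}e(G[B])$. The correctness proof then follows the template of the lemmas in Section~\ref{sec:gen-recursive}; I expect the only place calling for genuine care to be this join step — the ``same $\mathcal{P}$ on both sides'' observation together with the inclusion–exclusion on block sizes and the single edge-correction term — with the remaining cases being routine bookkeeping.

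For the running time, first compute a nice tree decomposition of width at most $w$ with $O(n)$ nodes in linear time via Bodlaender's theorem and Kloks's lemma. At each non-join node we do $(wh)^{O(w)}$ work; at each join node we iterate over all triples $(\mathcal{P},c_1,c_2)$ — of which there are $\sum_{\mathcal{P}}h^{2|\mathcal{P}|}=O((wh)^{2w})$ — doing $O(w)$ work per triple, and this step dominates. Summing over the $O(n)$ nodes gives the claimed bound $O((wh)^{2w}n)$, and an optimal deletion set is recovered by the usual traceback, as in Section~\ref{sec:gen-runtime}. Finally, the two real-world extensions need only cosmetic changes: for weighted vertices we let $c(B)$ record the total weight rather than the number of vertices of the partial block, so that $h$ is replaced throughout by the weight budget; for per-vertex component-size limits we additionally store, for each block, the smallest limit among the vertices assigned to it so far, and forbid $c(B)$ from exceeding it.
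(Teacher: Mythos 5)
Your proposal is correct and is essentially the paper's own proof of this theorem: you use the identical component-states $(\mathcal{P},c)$ with $|B|\le c(B)\le h$, the identical leaf, forget and join recurrences (including the inclusion--exclusion condition $c(B)=c_1(B)+c_2(B)-|B|$ and the subtraction of the cross-block bag edges at a join), and the identical $(wh)^{O(w)}$ states times $(wh)^{O(w)}$ inherited states per node giving $O((wh)^{2w}n)$. The only real deviation is your exact-size partition semantics for $\mathcal{Q}$ in place of the paper's ``component size at most $c(X)$'' semantics over edge-deletion sets, which buys you a single child state at an introduce node rather than the paper's set of refinements of $X_r\setminus\{v\}$ with counts summing to $c(X_r)-1$, at the price of a small unaddressed edge case (introducing $v$ as a singleton block with $c(B_v)>1$, where the count $c(B_v)-1$ has no block of $\mathcal{D}(t')$ to live on); this is harmless since any such state is dominated by the corresponding state with $c(B_v)=1$, but it deserves a sentence.
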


The general strategy is very similar to that employed in our general algorithm, but in this special case we can determine whether our input graph is a yes- or no-instance without using all of the information contained in the signature of the root of the tree indexing the decomposition.  Instead we can record an abridged \emph{component-signature} for each node, reducing the overall running time of the algorithm.  

This relies on two key observations.  Firstly, an optimal solution will never delete an edge $uv$ if there remains another path from $u$ to $v$, so it suffices to record which vertices in a bag are permitted to belong to the same component, rather than the exact subgraph induced by the bag.  Secondly, we do not need to know precisely which partial forbidden subgraphs intersect the bag and how, rather just the size of the component that contains each vertex of the bag.

We define the component-signature of a node in Section \ref{sec:cpt-signature}, before describing mathematically how we compute the component-signature of a bag indexed by a given node from the signatures of its children in Section \ref{sec:cpt-recursive}, and discussing the running time and a number of extensions in Section \ref{sec:cpt-runtime}.  As this algorithm achieves a faster running time and answers a question of particular interest from the point of view of our epidemiological application, the specialised algorithm is likely to be of more practical use than the general algorithm described in Section \ref{sec:gen-alg}; we express the procedure in pseudocode in Section \ref{sec:cpt-algorithmic}, and give some initial experimental results on its application to cattle trading networks in Section \ref{sec:expt}.

\subsection{The component-signature of a node}
\label{sec:cpt-signature}

In this section, we describe the information we compute for each node, and define the \emph{component-signature} of a node.

Throughout the algorithm, we need to record the possible component-states corresponding to a given bag.  A valid \emph{component-state} of a bag $\mathcal{D}(t)$ is a pair consisting of:
\begin{enumerate}
\item a partition $\mathcal{P}$ of $\mathcal{D}(t)$ into disjoint, non-empty subsets or \emph{blocks} of size at most $h$, and
\item a function $c: \mathcal{P} \rightarrow [h]$ such that, for each $X \in \mathcal{P}$, $|X| \leq c(X)$.
\end{enumerate}
We will write $u \sim_\mathcal{P} v$ to indicate that $u$ and $v$ belong to the same block of $\mathcal{P}$.

Intuitively, $\mathcal{P}$ tells us which vertices are allowed to belong to the same component of the graph we obtain after deleting edges and $c$ tells us the maximum number of vertices which are permitted in components corresponding to a given block of the partition.

For any bag $\mathcal{D}(t)$, we denote by $\cst(t)$ the set of possible component-states of $\mathcal{D}(t)$.  Note that there are at most $B_w$ partitions of a set of size $w$ (where $B_w$ is the $w^{th}$ Bell number) and at most $h^w$ functions from a set of size at most $w$ to $[h]$; thus the total number of valid component-states for $\mathcal{D}(t)$ is at most $B_wh^w < (wh)^w$ (although not all possible combinations of a partition and a function will give rise to a valid component-state).

Given any component-state, we make the following definition.
\begin{adef}
\label{def:comp-E}
Let $\sigma = (\mathcal{P},c) \in \cst(t)$.  Then $\mathcal{E}^c(t,\sigma)$ is the set of edge-sets $E' \subset E(G[V_t])$ such that $\widetilde{G_t} = G[V_t] \setminus E'$ has the following properties:
\begin{enumerate}
\item for each connected component $C$ of $\widetilde{G_t}$:
\begin{enumerate}
\item $|V(C)| \leq h$, and
\item if $C_t = V(C) \cap \mathcal{D}(t) \neq \emptyset$, then $C_t$ is contained in a single block $X_C$ of $\mathcal{P}$, 
\end{enumerate}
\item for each block $X$ in $\mathcal{P}$, the total number of vertices in connected components of $\widetilde{G_t}$ that intersect $X$ is at most $c(X)$.
\end{enumerate}
\end{adef}
Note that, whenever $\sigma$ is a valid component-state for $t$, the set $\mathcal{E}^c(t,\sigma)$ will be non-empty: setting $E' = E(G[V_t])$ will always satisfy both conditions.  Since we are interested in determining whether it is possible to delete at most $k$ edges to obtain a graph with maximum component size $h$, we will (as in Section \ref{sec:gen-alg}) primarily be interested in a subset of $\mathcal{E}^c(t,\sigma)$: for any node $t$ and $\sigma \in \cst(t)$ we define this subset as
$$\mathcal{E}_k^c(t,\sigma) = \{E' \in \mathcal{E}^c(t,\sigma): |E'| \leq k\}.$$
We then define
$$\cdel_k(t,\sigma) = \min_{E' \in \mathcal{E}_k^c(t,\sigma)} |E'|.$$
Finally, we define the \emph{component-signature} of a node $t$ to be the function $\csig_t: \cst(t) \rightarrow \{0,1,\ldots,k,\infty\}$ such that $\csig_t(\sigma) = \cdel_k(t,\sigma)$.  

Just as in Section \ref{sec:gen-alg}, our input graph is a yes-instance to \compprobname ~if and only if there exists some $\sigma \in \cst(t_r)$ such that $\csig_{t_r}(\sigma) \leq k$, where $t_r$ is the root of the tree indexing the decomposition.

\subsection{Computing component-signatures recursively}
\label{sec:cpt-recursive}

In this section we describe how the component-signature of any given node in a nice tree decomposition can be determined from the component-signatures of its children (if any).  As before, this relationship depends on the type of node under consideration, and we discuss each type of node in turn, describing how to compute $\cdel_k(t,\sigma)$ for an arbitrary component-state $\sigma$.

\subsubsection*{Leaf nodes}

Note that, in this case, we have $V_t = \mathcal{D}(t)$, and hence for any component-state $\sigma = (\mathcal{P}, c)$ a graph $\widetilde{G_t}$ satisfies the conditions of Definition \ref{def:comp-E} if and only if the only edges in $\widetilde{G_t}$ are between vertices that belong to the same block of $\mathcal{P}$.  Thus
$$\cdel_k(t,\sigma) = \left[ \left| \{uv \in E \left( G[\mathcal{D}(t)] \right) : u \not\sim_{\mathcal{P}} v\}\right| \right]_{\leq k}.$$

\subsubsection*{Introduce nodes}

Suppose that the introduce node $t$ has child $t'$, and that $\mathcal{D}(t) \setminus \mathcal{D}(t') = \{v\}$.  Given a component-state $\sigma = (\mathcal{P},c) \in \cst(t)$, where $\mathcal{P} = \{X_1,\ldots,X_r\}$ and $v \in X_r$, we denote by $\sigma[t']_{\cintr}$ the set of \emph{introduce-inherited component-states} of $t'$.  A component-state $(\mathcal{P}',c')$ belongs to $\sigma[t']_{\cintr}$ if and only if it is a valid component-state for $\mathcal{D}(t')$ which additionally satisfies the following conditions:
\begin{enumerate}
\item $\mathcal{P}' = \{X_1,\ldots,X_{r-1}, Y_{1},\ldots,Y_s\}$, with $s \geq 1$ and $Y_{1} \cup \cdots \cup Y_s = X_r \setminus \{v\}$,
\item $c'(X_i) = c(X_i)$ for $1 \leq i \leq r-1$, and $\sum_{i=1}^s c'(Y_i) = c(X_r) - 1$.
\end{enumerate}

We make the following claim concerning the component-signatures of $t$ and $t'$.

\begin{lma}
Let $t$ be an introduce node with child $t'$, and let $\sigma = (H,\mathcal{P},c) \in \cst(t)$. Then
$$\cdel_k(t,\sigma) = \left[ \min_{\sigma' \in \sigma[t']_{\cintr}} \cdel_k(t',\sigma') + |\{uv \in G[V_t]: u \not\sim_{\mathcal{P}} v\}| \right]_{\leq k}.$$
\label{lma:introduce}
\end{lma}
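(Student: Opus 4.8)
The plan is to mirror the structure of the proof of Lemma~\ref{lma:gen-introduce}: establish an exact decomposition of the edge-set family $\mathcal{E}^c(t,\sigma)$ in terms of $\mathcal{E}^c(t',\sigma')$ as $\sigma'$ ranges over $\sigma[t']_{\cintr}$, then take minimum cardinalities. Write $E_v = \{uv \in E(G[V_t]) : u \not\sim_{\mathcal{P}} v\}$ for the set of edges incident with $v$ that must be deleted (these are exactly the edges from $v$ to a vertex of $\mathcal{D}(t)$ lying in a different block of $\mathcal{P}$; note that by the properties of the tree decomposition, $v$'s only neighbours in $V_t$ lie in $\mathcal{D}(t)$, since $v \notin \mathcal{D}(t')$). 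The claim to prove is
\[
\mathcal{E}^c(t,\sigma) = \bigcup_{\sigma' \in \sigma[t']_{\cintr}} \{E_v \cup E' : E' \in \mathcal{E}^c(t',\sigma')\},
\]
from which, since $E_v$ is disjoint from every such $E' \subseteq E(G[V_{t'}])$ and $|E_v| = |\{uv \in E(G[V_t]) : u \not\sim_{\mathcal{P}} v\}|$, the stated formula for $\cdel_k$ follows by taking minimum cardinalities (and truncating at $k$).

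For the forward inclusion, take $\widehat{E} \in \mathcal{E}^c(t,\sigma)$ and set $\widetilde{G_t} = G[V_t] \setminus \widehat{E}$. First observe $E_v \subseteq \widehat{E}$: if some $uv$ with $u \not\sim_{\mathcal{P}} v$ survived, then $u$ and $v$ would lie in the same component of $\widetilde{G_t}$, violating condition~1(b). Let $E' = \widehat{E} \setminus E_v = \widehat{E} \cap E(G[V_{t'}])$ and let $\widetilde{G_{t'}} = G[V_{t'}] \setminus E'$. Since $v$ is isolated in $\widetilde{G_t}$ from the rest (all its edges are in $E_v$), $\widetilde{G_t}$ is the disjoint union of $\widetilde{G_{t'}}$ and the isolated vertex $v$. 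I then define $\sigma' = (\mathcal{P}', c')$ by reading off, from the components of $\widetilde{G_{t'}}$, how the block $X_r \setminus \{v\}$ of $\mathcal{D}(t')$ is split: $Y_1, \ldots, Y_s$ are the (non-empty) intersections with $\mathcal{D}(t')$ of those components of $\widetilde{G_{t'}}$ that meet $X_r \setminus \{v\}$, with $c'(Y_j)$ set to the total number of vertices in those components; for $i < r$ keep $X_i$ with $c'(X_i) = c(X_i)$. One checks $\sigma'$ is a valid component-state satisfying conditions~1 and~2 of $\sigma[t']_{\cintr}$, and that $E' \in \mathcal{E}^c(t',\sigma')$: conditions 1(a), 1(b), and 2 for $(t',\sigma')$ are all inherited from the corresponding conditions for $(t,\sigma)$ (using that removing the isolated vertex $v$ does not change the component structure on $V_{t'}$, and that $\sum_j c'(Y_j) = c(X_r) - 1$ because the vertex $v$ accounted for one unit of the budget $c(X_r)$).

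For the reverse inclusion, fix $\sigma' = (\mathcal{P}', c') \in \sigma[t']_{\cintr}$ and $E' \in \mathcal{E}^c(t',\sigma')$; I must show $E_v \cup E' \in \mathcal{E}^c(t,\sigma)$. Again $\widetilde{G_t} := G[V_t] \setminus (E_v \cup E')$ is the disjoint union of $\widetilde{G_{t'}} := G[V_{t'}] \setminus E'$ with the isolated vertex $v$. Condition~1(a) holds for all components inside $\widetilde{G_{t'}}$ by hypothesis, and $\{v\}$ has size $1 \le h$. For condition~1(b): a component $C$ of $\widetilde{G_t}$ meeting $\mathcal{D}(t)$ is either $\{v\} \subseteq X_r$, or a component of $\widetilde{G_{t'}}$ whose intersection with $\mathcal{D}(t')$ lies in a single block of $\mathcal{P}'$; by condition~1 of $\sigma[t']_{\cintr}$, that block is either some $X_i$ ($i<r$) or one of the $Y_j$'s, and in the latter case $Y_j \subseteq X_r$, so in all cases $C \cap \mathcal{D}(t)$ lies in a single block of $\mathcal{P}$. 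For condition~2: for $X_i$ with $i < r$ the vertices in components meeting $X_i$ are unchanged, bounded by $c'(X_i) = c(X_i)$; for $X_r$, the components meeting $X_r$ are exactly $\{v\}$ together with the components meeting some $Y_j$, contributing $1 + \sum_{j=1}^s c'(Y_j) \le 1 + (c(X_r) - 1) = c(X_r)$ — here I need that components meeting distinct $Y_j$'s are distinct components of $\widetilde{G_{t'}}$, which holds because each $Y_j$ is, by construction in a maximal such decomposition, the full trace on $\mathcal{D}(t')$ of a single component; to make this clean I will take $\mathcal{P}'$ in $\sigma[t']_{\cintr}$ to be as fine as the component structure dictates, or more simply argue that if two $Y_j$'s lay in one component we could merge them to a valid inherited state with no larger $\cdel_k$, so the minimum is unaffected.

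The main obstacle is the bookkeeping around condition~2 when the blocks $Y_1, \ldots, Y_s$ of $X_r \setminus \{v\}$ might share a component in $\widetilde{G_{t'}}$ — that is, ensuring that the sum $\sum_j c'(Y_j)$ does not double-count vertices, and conversely that every legitimate edge-deletion set $\widehat{E}$ for $(t,\sigma)$ is captured by \emph{some} inherited state. This is handled by being careful in the forward direction to define $Y_1, \ldots, Y_s$ as exactly the distinct traces of distinct components (so no double-counting arises), and in the reverse direction by noting that any ``coarser'' inherited state whose $Y_j$'s are forced to merge yields a larger-or-equal value and hence does not affect the minimum, so the union over all of $\sigma[t']_{\cintr}$ is correct. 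The remaining verifications (validity of $\sigma'$ as a component-state, the disjointness $E_v \cap E' = \emptyset$, and the truncation at $k$) are routine and parallel to Lemma~\ref{lma:gen-introduce}.
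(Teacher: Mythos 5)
There is a genuine error at the foundation of your argument: you claim that once $E_v = \{uv \in E(G[V_t]) : u \not\sim_{\mathcal{P}} v\}$ is deleted, $v$ becomes an isolated vertex, so that $\widetilde{G_t}$ is the disjoint union of $G[V_{t'}]\setminus E'$ and $\{v\}$. This is false unless $v$ is alone in its block: $E_v$ contains only the edges from $v$ to vertices in \emph{other} blocks, so $v$ retains all of its edges to vertices $u \in X_r$ with $u \sim_{\mathcal{P}} v$, and these are not touched by any $E' \subseteq E(G[V_{t'}])$. Consequently $v$ may glue several components of $G[V_{t'}] \setminus E'$ into a single component of $\widetilde{G_t}$ --- this is precisely the phenomenon that the definition of $\sigma[t']_{\cintr}$ (the refinement $Y_1,\ldots,Y_s$ of $X_r \setminus \{v\}$ with $\sum_i c'(Y_i) = c(X_r)-1$) exists to control; if $v$ really were isolated, no refinement would be needed at all. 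Your verification of the reverse inclusion therefore fails: condition 1(a) for the component containing $v$ does not follow from ``each component of $G[V_{t'}]\setminus E'$ has at most $h$ vertices and $\{v\}$ has one vertex''; one must bound the merged component by $1 + \sum_i c'(Y_i) = c(X_r) \leq h$, and conditions 1(b) and 2 likewise need the gluing argument. The paper does this by taking the components $C_1,\ldots,C_\ell$ of $C \setminus v$ (where $C$ is the component of $v$), observing that each meets $\mathcal{D}(t')$ because all neighbours of $v$ in $V_t$ lie in $\mathcal{D}(t)$, and summing the budgets of the blocks $Y_j$ they meet.

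A second, related defect: your claimed identity $\mathcal{E}^c(t,\sigma) = \bigcup_{\sigma'}\{E_v \cup E' : E' \in \mathcal{E}^c(t',\sigma')\}$ also fails in the forward direction, because a set $\widehat{E} \in \mathcal{E}^c(t,\sigma)$ may delete edges $uv$ with $u \sim_{\mathcal{P}} v$ in addition to $E_v$; such an $\widehat{E}$ is not of the form $E_v \cup E'$ with $E' \subseteq E(G[V_{t'}])$, and your identification $\widehat{E}\setminus E_v = \widehat{E} \cap E(G[V_{t'}])$ breaks. The paper's version of the decomposition allows an arbitrary superset $\widetilde{E_v} \supseteq E_v$ of deleted $v$-edges and only then takes minimum cardinalities. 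Your identification of the additive term $|E_v|$ and the intended construction of $\sigma'$ from the component structure are on the right track, but as written the proof rests on the isolated-vertex claim and does not go through.
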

\begin{proof}
Observe first that the edges incident with $v$ in any set $E' \in \mathcal{E}^c(t,\sigma)$ must contain the set $E_v = \{uv \in G[V_t]: u \not\sim_{\mathcal{P}} v\}$.  As in the proof of Lemma \ref{lma:gen-introduce}, it suffices to prove that 
$$\mathcal{E}^c(t,\sigma) = \left\lbrace \widetilde{E_v} \cup E': E' \in \bigcup_{\sigma' \in \sigma[t']_{\cintr}} \mathcal{E}^c(t',\sigma'), E_v \subseteq \widetilde{E_v} \right\rbrace.$$

We first show that $\mathcal{E}^c(t,\sigma) \subseteq \{\widetilde{E_v} \cup E': E' \in \bigcup_{\sigma' \in \sigma[t']_{\cintr}} \mathcal{E}^c(t',\sigma'), E_v \subseteq \widetilde{E_v} \}.$  Let $\widehat{E} \in \mathcal{E}^c(t,\sigma)$: we claim that there exists $\sigma' \in \sigma[t']_{\cintr}$ such that $\widehat{E} \setminus v \in \mathcal{E}^c(t',\sigma')$.

Set $\widetilde{G_t} = G[V_t] \setminus \widehat{E}$, and let $C$ be the component of $\widetilde{G_t}$ containing $v$.  Suppose that $\mathcal{P} = X_1,\ldots,X_r$, with $v \in X_r$, and that $C \setminus v$ has connected components $C_1,\ldots,C_{\ell}$.  We now define $\mathcal{P}'$ to be the partition $\{Y_1,\ldots,Y_{r+\ell-1}\}$ where $Y_i = X_i$ for $1 \leq i \leq r-1$, and $Y_{r+j-1} = V(C_j) \cap \mathcal{D}(t')$ for $1 \leq j \leq \ell$.  We also define $c':\mathcal{P}' \rightarrow [h]$ by setting
\begin{equation*}
c(Y_i) = \begin{cases}
			c(X_i)	& \text{if $1 \leq i \leq r-1$} \\
			|V(C_{i-r+1})| & \text{if $r \leq i \leq r + \ell - 1,$}
	   \end{cases}
\end{equation*}
and then set $\sigma' = (\mathcal{P}',c')$.  It is straightforward to verify that $\sigma' \in \sigma[t']_{\cintr}$.  We now claim that $\widehat{E} \setminus v \in \mathcal{E}^c(t',\sigma')$.  To prove this claim, we set $\widetilde{G_t}' = G[V_{t'}] \setminus (\widehat{E} \setminus v)$; we need to verify that $\widetilde{G_t}'$ satisfies the conditions of Definition \ref{def:comp-E}.  Condition 1(a) is trivially satisfied, so we consider the remaining conditions.

For condition 1(b), suppose that $C$ is a connected component of $\widetilde{G_t}'$ such that $C_t = V(C) \cap \mathcal{D}(t') \neq \emptyset$.  Since $C$ is contained in some connected component $\widehat{C}$ of $\widetilde{G_t}$, and we know that every connected component of $\widetilde{G_t}$ is contained in a single block of $\mathcal{P}$, it follows that $C$ must be contained in a single block $X_i$ of $\mathcal{P}$.  There are now two cases to consider, depending on whether $i = r$.  If we have $i \neq r$ (implying that $v \notin \widehat{C}$, so $\widehat{C} = C$), it follows from the definition of $\mathcal{P}'$ that $C$ is contained in the block $Y_i$ of $\mathcal{P}'$.  If, on the other hand, we have $i=r$, it follows from the definition of $\mathcal{P}'$ that there is some $j \in \{1,\ldots,\ell\}$ such that $Y_{r+j-1} = C$.  Thus the condition is satisfied in either case.

For condition 2, consider a block $Y$ of $\mathcal{P}'$.  If $Y$ is also a block of $\mathcal{P}$, then (as $\widetilde{G_t}'$ is a subgraph of $\widetilde{G_t}$) this condition is trivially true, so suppose that this is not the case.  Then, by construction of $\mathcal{P}'$, $Y = V(C') \cap \mathcal{D}(t')$ for some (maximal) connected component $C'$ of $\widetilde{G_t}'$, and we have $c'(Y) = |V(C')|$, so condition 2 is indeed satisfied.

This completes the argument that $\widetilde{G_t}'$ satisfies the conditions of Definition \ref{def:comp-E}, implying that $\widehat{E} \setminus v \in \mathcal{E}^c(t',\sigma')$ and hence $\widehat{E} \in \{\widetilde{E_v} \cup E': E' \in \bigcup_{\sigma' \in \sigma[t']_{\cintr}} \mathcal{E}^c(t',\sigma'), E_v \subseteq \widetilde{E_v}\}$, as required.

Conversely, we now show that $\{\widetilde{E_v} \cup E': E' \in \bigcup_{\sigma' \in \sigma[t']_{\cintr}} \mathcal{E}^c(t',\sigma'), E_v \subseteq \widetilde{E_v}\} \subseteq \mathcal{E}^c(t,\sigma)$.  Let $\widetilde{E_v} \supseteq E_v$, and let $E' \in \mathcal{E}^c(t',\sigma')$ for some $\sigma' = (\mathcal{P}',c') \in \sigma[t']_{\cintr}$ (where $\mathcal{P}' = \{Y_1,\ldots,Y_p\}$, with $Y_i = X_i$ for $1 \leq i \leq r-1$ and $Y_r \cup \cdots \cup Y_p = X_r \setminus \{v\}$).  We claim that $\widetilde{E_v} \cup E' \in \mathcal{E}^c(t,\sigma)$.

Set $\widetilde{G_t}' = G[V_{t'}] \setminus E'$, and let $\widetilde{G_t} = G[V_t] \setminus (E' \cup E_v)$.  It suffices to demonstrate that $\widetilde{G_t}$ satisfies the conditions of Definition \ref{def:comp-E}.  

For the first condition, let $C$ be a (maximal) connected component of $\widetilde{G_t}$.  If $v \notin C$, then $C$ is a connected component of $\widetilde{G_t}$ so we know that $|V(C)| \leq h$, and moreover that if $C_t = V(C) \cap \mathcal{D}(t) \neq \emptyset$ (note that $V(C) \cap \mathcal{D}(t) = V(C)\cap \mathcal{D}(t')$ as $v \notin C$) then $C_t$ is contained in a single block $X_c$ of $\mathcal{P}'$ and hence a single block of $\mathcal{P}$ (as $\mathcal{P}'$ refines $\mathcal{P}$).  So it remains to consider the case that $v \in C$.  Let $C_1,\ldots,C_{\ell}$ be the (maximal) connected components of $C \setminus v$.  Note that it follows from the properties of a tree decomposition that $C_i \cap \mathcal{D}(t') \neq \emptyset$ for each $i$, as there must be an edge from $v$ to at least one vertex in each $C_i$; suppose that $u_i \in C_i \cap \mathcal{D}(t')$ for each $i$.  Then $u_1,\ldots,u_{\ell}$ all belong to the same component of $H$ as $v$, so as $\sigma$ is a valid component-state for $t$ we must have $u_1,\ldots,u_{\ell} \in X_r$.  We know that each $C_i$ is a connected component of $\widetilde{G_t}'$ and so $V(C_i) \cap \mathcal{D}(t)$ must be contained in a single block of $\mathcal{P}'$; since $u_i \in X_r$, it must be that $V(C_i) \cap \mathcal{D}(t') \subseteq Y_j$ for some $r \leq j \leq p$.  Thus
$$V(C) \cap \mathcal{D}(t) = \bigcup_{1 \leq i \leq \ell} (C_i \cap \mathcal{D}(t')) \cup \{v\} \subseteq \bigcup_{r \leq j \leq s} Y_j \cup \{v\} = X_r,$$
so $V(C) \cap \mathcal{D}(t)$ is contained in a single block of $\mathcal{P}$, as required.

For the second condition, let $X$ be a block of $\mathcal{P}$.  If $v \notin X$, then $X$ is also a block of $\mathcal{P}'$, and moreover the vertices belonging to components of $\widetilde{G_t}$ that intersect $X$ are the same as those belonging to components of $\widetilde{G_t}'$ that intersect $X$, so we are done in this case.  Thus we may assume that $v \in X$, and that $X \setminus \{v\} = Y_r \cup \cdots \cup Y_p$.  Then the number of vertices belonging to components of $\widetilde{G_t}$ that intersect $X$ is exactly one more than the number of vertices belonging to components of $\widetilde{G_t}'$ that intersect $Y_r \cup \cdots \cup Y_s$.  Thus, the number of vertices belonging to components of $\widetilde{G_t}$ that intersect $X$ is at most
$$1 + \sum_{i=r}^s c(Y_i) = 1 + c(X) - 1 = c(X), $$
as required.

This completes the argument that $\{E_v \cup E': E' \in \bigcup_{\sigma' \in \sigma[t']_{\cintr}} \mathcal{E}^c(t',\sigma'), E_v \subseteq \widetilde{E_v}\} \subseteq \mathcal{E}^c(t,\sigma)$, and hence the proof.
\end{proof}

\subsubsection*{Forget nodes}

Suppose that the forget node $t$ has child $t'$, and that $\mathcal{D}(t') \setminus \mathcal{D}(t) = \{v\}$.  In this case, we need to consider the set of \emph{forget-inherited component-states} for $t'$.  Given $\sigma = (\mathcal{P},c) \in \cst(t)$, we write $\sigma[t']_{\cfgt}$ for the set of component-states $(\mathcal{P}',c') \in \cst(t')$ satisfying:
\begin{enumerate}
\item $\mathcal{P} = \mathcal{P}' \setminus \{v\}$, and 
\item for any block $Y$ in $\mathcal{P}$, if $Y \setminus \{v\} \neq \emptyset$, we have $c'(Y) = c(Y \setminus \{v\})$.
\end{enumerate}

We make the following claim concerning the component-signatures of $t$ and $t'$.

\begin{lma}
Let $t$ be a forget node with child $t'$, let $\sigma = (\mathcal{P},c) \in \cst(t)$, and let $\sigma[t']_{\cfgt}$ be the set of inherited component-states of $t'$. Then
$$\cdel_k(t,\sigma) = \min_{\sigma' \in \sigma[t']_{\cfgt}} \cdel_k(t',\sigma').$$
\label{lma:forget}
\end{lma}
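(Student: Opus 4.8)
The plan is to mirror the proof of Lemma~\ref{lma:gen-forget}. Since $t$ is a forget node, $V_t = V_{t'}$, so it suffices to prove the set identity
$$\mathcal{E}^c(t,\sigma) = \bigcup_{\sigma' \in \sigma[t']_{\cfgt}} \mathcal{E}^c(t',\sigma'),$$
which at once gives the same identity for the bounded sets $\mathcal{E}^c_k$ and hence the claimed formula on taking minimum cardinalities (using the convention $\min\emptyset = \infty$). The argument is essentially bookkeeping about how the partition $\mathcal{P}$ and the capacity function $c$ change when $v$ is forgotten; throughout I will use that the blocks of $\mathcal{P} = \mathcal{P}'\setminus\{v\}$ are precisely the sets $Y\setminus\{v\}$ for $Y$ a block of $\mathcal{P}'$ with $Y\setminus\{v\}\neq\emptyset$, and that this correspondence is a bijection onto the blocks of $\mathcal{P}$.

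For the inclusion $\bigcup_{\sigma'}\mathcal{E}^c(t',\sigma')\subseteq\mathcal{E}^c(t,\sigma)$, I would take $E'\in\mathcal{E}^c(t',\sigma')$ for some $\sigma'=(\mathcal{P}',c')\in\sigma[t']_{\cfgt}$ and check the two conditions defining $\mathcal{E}^c(t,\sigma)$ for $\widetilde{G_t}=G[V_t]\setminus E'$. Condition~1(a) is immediate since the graph is unchanged. For 1(b), a component $C$ meeting $\mathcal{D}(t)$ also meets $\mathcal{D}(t')$, so $V(C)\cap\mathcal{D}(t')$ lies in a single block $Y$ of $\mathcal{P}'$, whence $V(C)\cap\mathcal{D}(t)\subseteq Y\setminus\{v\}$, a block of $\mathcal{P}$. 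For condition~2, each block $X$ of $\mathcal{P}$ equals $Y\setminus\{v\}$ for a unique block $Y$ of $\mathcal{P}'$; since $X\subseteq Y$, the components meeting $X$ are among those meeting $Y$, so the corresponding vertex total is at most $c'(Y)=c(X)$.

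For the reverse inclusion I would argue as follows. Given $\widehat{E}\in\mathcal{E}^c(t,\sigma)$, let $C$ be the component of $\widetilde{G_t}=G[V_{t'}]\setminus\widehat{E}$ containing $v$ and $C_t=V(C)\cap\mathcal{D}(t)$. If $C_t=\emptyset$, set $\mathcal{P}'=\mathcal{P}\cup\{\{v\}\}$ with $c'(\{v\})=|V(C)|$; if $C_t\neq\emptyset$, then condition~1(b) for $\widehat E$ puts $C_t$ inside a single block $X$ of $\mathcal{P}$, and I set $\mathcal{P}'=(\mathcal{P}\setminus\{X\})\cup\{X\cup\{v\}\}$. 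In either case put $c'(Y)=c(Y\setminus\{v\})$ on every block $Y$ of $\mathcal{P}'$ with $Y\setminus\{v\}\neq\emptyset$. One checks $\mathcal{P}'\setminus\{v\}=\mathcal{P}$, so $\sigma':=(\mathcal{P}',c')\in\sigma[t']_{\cfgt}$ once it is known to be a valid component-state; this is the one point needing care. In the second case we must have $|X\cup\{v\}|\le c(X)$: since $v$ lies in $C$, which meets $X$, the set $X\cup\{v\}$ is contained in the set of vertices lying in components of $\widetilde{G_t}$ that meet $X$, and condition~2 for $\widehat E$ bounds the size of that set by $c(X)$; as $c$ takes values in $[h]$ this also gives $|X\cup\{v\}|\le h$, and in the first case $c'(\{v\})=|V(C)|\le h$ by 1(a). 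It then remains to verify $\widehat E\in\mathcal{E}^c(t',\sigma')$: condition~1(a) is unchanged; for 1(b), a component meeting $\mathcal{D}(t')$ either avoids $v$, reducing to 1(b) for $\widehat E$ (with $X$ possibly replaced by $X\cup\{v\}$), or equals $C$, whose trace on $\mathcal{D}(t')$ is $C_t\cup\{v\}$, a block of $\mathcal{P}'$; and for condition~2 the components meeting a block $Y$ of $\mathcal{P}'$ coincide with those meeting the corresponding block of $\mathcal{P}$ (here using that $C$ already meets $X$), so their vertex total is bounded by the appropriate value of $c$, which equals $c'(Y)$. I expect the only real obstacle to be keeping the case analysis around ``which block of $\mathcal{P}$ absorbs $v$'' straight; the rest is routine verification.
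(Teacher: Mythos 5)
Your proposal is correct and follows essentially the same route as the paper: reduce to the set identity $\mathcal{E}^c(t,\sigma) = \bigcup_{\sigma' \in \sigma[t']_{\cfgt}} \mathcal{E}^c(t',\sigma')$, verify the easy inclusion directly, and for the converse construct $\sigma'$ by placing $v$ either in the block of $\mathcal{P}$ meeting its component or in a new singleton block with $c'(\{v\}) = |V(C)|$. You in fact supply slightly more detail than the paper at the points it labels ``straightforward'' (notably the check that $|X \cup \{v\}| \le c(X)$ so that $\sigma'$ is a valid component-state), and the case analysis is handled correctly.
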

\begin{proof}
As in the proof of Lemma \ref{lma:gen-forget}, it suffices to prove that 
$$\mathcal{E}^c(t,\sigma) = \bigcup_{\sigma' \in \sigma[t']_{\cfgt}} \left\lbrace \mathcal{E}^c(t',\sigma') \right\rbrace.$$
It is straightforward to verify that, whenever $E' \in \mathcal{E}^c(t',\sigma')$ for some $\sigma' \in \sigma[t']_{\cfgt}$, we also have $E' \in \mathcal{E}^c(t,\sigma)$, implying that $\bigcup_{\sigma' \in \sigma[t']_{\cfgt}} \left\lbrace \mathcal{E}^c(t',\sigma') \right\rbrace \subseteq \mathcal{E}^c(t,\sigma)$.  We now argue that we also have $\mathcal{E}^c(t,\sigma) \subseteq \bigcup_{\sigma' \in \sigma[t']_{\cfgt}} \left\lbrace \mathcal{E}^c(t',\sigma') \right\rbrace$.  Let $\widehat{E} \in \mathcal{E}^c(t,\sigma)$.  We claim that there exists $\sigma' \in \sigma[t']_{\cfgt}$ such that $\widehat{E} \in \mathcal{E}^c(t',\sigma')$.

Set $\widetilde{G_t} = G[V_t] \setminus \widehat{E} = G[V_{t'}] \setminus \widehat{E}$.  Let $C$ be the (maximal) connected component of $\widetilde{G_t}$ that contains $v$.  If $C \cap \mathcal{D}(t) = \emptyset$, we set $\mathcal{P}'$ to be the partition obtained from $\mathcal{P}$ by adding one additional block containing only $v$; if $C \cap \mathcal{D}(t) \neq \emptyset$ then this set of vertices must be contained in a single block $X$ of $\mathcal{P}$, and we set $\mathcal{P}'$ to be the partition obtained from $\mathcal{P}$ by adding $v$ to the block $X$.   

In the case $C \cap \mathcal{D}(t) \neq \emptyset$, there is a unique choice of $c':\mathcal{P}' \rightarrow [h]$ which will satisfy the definition of $\sigma[t']_{\cfgt}$: we set $c'(X') = c(X')$ for $X' \neq X$, and $c'(X \cup \{v\}) = c(X)$.  If $C \cap \mathcal{D}(t) = \emptyset$, we define for any $Y \in \mathcal{P}'$
\begin{equation*}
c'(Y) = \begin{cases}
			c(Y) & \text{if $Y \in \mathcal{P}$} \\
			|V(C)|  & \text{if $Y = \{v\}$.}
	   \end{cases}
\end{equation*}

We now set $\sigma' = (\mathcal{P}',c')$.  It is straightforward to verify that $\sigma' \in \sigma[t']_{\cfgt}$.  It remains to demonstrate that $\widehat{E} \in \mathcal{E}^c(t',\sigma')$; to do so we will argue that $\widetilde{G_t}$ satisfies all the conditions of Definition \ref{def:comp-E}.  Condition 1(a) is trivially satisfied.

For condition 1(b), let $C$ be a (maximal) connected component of $\widetilde{G_t}$, and suppose that $C_t'=V(C) \cap \mathcal{D}(t') \neq \emptyset$.  If $C_t' = \{v\}$ then this condition is trivially satisfied, so we may assume that $V(C) \cap \mathcal{D}(t) \neq \emptyset$.  It then follows from the fact that $\widehat{E} \in \mathcal{E}^c(t,\sigma)$ that $V(C) \cap \mathcal{D}(t)$ is contained in a single block of $\mathcal{P}$ and hence of $\mathcal{P}'$; if we additionally have $v \in C_t'$, the construction of $\mathcal{P}'$ ensures that $v$ will also belong to this same block of $\mathcal{P}'$.

For the second condition, let $X$ be a block of $\mathcal{P}'$.  If $v \notin X$ then the condition is immediately satisfied due to the conditions for $\widehat{E}$ to belong to $\mathcal{E}^c(t,\sigma)$, so we may assume that $v \in X$.  If $X \setminus v \neq \emptyset$ then, by construction of $\mathcal{P}'$,  there is some vertex $u \in \mathcal{D}(t)$ such that $u$ and $v$ lie in the same component of $\widetilde{G_t}$ (note also that $u \in X$); hence the total number of vertices in connected components that intersect $X \setminus v$ is the same as the number in components that intersect $X$, and so as $X \setminus \{v\}$ is a block of $\mathcal{P}$ this number must be at most $c(X \setminus \{v\}) = c'(X)$. Finally, if $X = \{v\}$, the only connected component of $\widetilde{G_t}$ that intersects $X$ is the component $C$ that contains $v$, and by definition we have $c'(X) = |V(C)|$.  So the second condition is satisfied in all cases.

Thus we see that $\widehat{E} \in \mathcal{E}^c(t',\sigma')$, completing the argument that $\mathcal{E}^c(t,\sigma) \subseteq \bigcup_{\sigma' \in \sigma[t']_{\cfgt}} \left\lbrace \mathcal{E}^c(t',\sigma') \right\rbrace$ and hence the proof.
\end{proof}

\subsubsection*{Join nodes}

Suppose that the join node $t$ has children $t_1$ and $t_2$.  Note that $\cst(t_1) = \cst(t_2) = \cst(t)$.  In this case, we need to consider a set of pairs of inherited component-states for the two children $t_1$ and $t_2$.  We write $\cst(t_1,t_2)$ for the Cartesian product $\cst(t_1) \times \cst(t_2)$ and, given any component-state $\sigma = (\mathcal{P},c) \in \cst(t)$, we write $\sigma[t_1,t_2]_{\cjoin}$ for the set of pairs of \emph{join-inherited component-states} $((\mathcal{P}_1,c_1),(\mathcal{P}_2,c_2)) \in \cst(t_1,t_2)$ satisfying:
\begin{enumerate}
\item $\mathcal{P}_1 = \mathcal{P}_2 = \mathcal{P}$, and
\item for every block $X$ of $\mathcal{P}$, $c(X) = c_1(X) + c_2(X) - |X|.$
\end{enumerate}

We now make the following claim about the component-signature of $t$ and those of $t_1$ and $t_2$.
\begin{lma}
Let $t$ be a join node with children $t_1$ and $t_2$, and let $\sigma = (\mathcal{P},c) \in \cst(t)$.  Then
\begin{align*}
\cdel_k(t,\sigma) & = \left[ \min_{\substack{(\sigma_1,\sigma_2) \in \\ \sigma[t_1,t_2]_{\cjoin}}} \{\cdel_k(t_1,\sigma_1) + \cdel_k(t_2,\sigma_2) - |\{uw \in E(G[\mathcal{D}(t)]): u \not\sim_{\mathcal{P}} w\}| \right]_{\leq k}.
\end{align*}
\label{lma:join}
\end{lma}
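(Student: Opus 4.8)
The plan is to mirror the proof of Lemma~\ref{lma:gen-join}. I would first establish the set identity
$$\mathcal{E}^c(t,\sigma) = \bigcup_{(\sigma_1,\sigma_2) \in \sigma[t_1,t_2]_{\cjoin}} \left\lbrace E_1 \cup E_2 : E_1 \in \mathcal{E}^c(t_1,\sigma_1),\ E_2 \in \mathcal{E}^c(t_2,\sigma_2),\ E_1 \cap E(G[\mathcal{D}(t)]) = E_2 \cap E(G[\mathcal{D}(t)]) \right\rbrace,$$
restricting the union to pairs agreeing on the bag (harmless, since every $\widehat E\in\mathcal{E}^c(t,\sigma)$ splits as $(\widehat E\cap E(G[V_{t_1}]))\cup(\widehat E\cap E(G[V_{t_2}]))$, two sets that automatically agree there), so that the associated graph $\widetilde{G_t}=G[V_t]\setminus(E_1\cup E_2)$ satisfies $\widetilde{G_t}[V_{t_i}]=G[V_{t_i}]\setminus E_i$. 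Intersecting both sides with the edge-sets of size at most $k$ and taking minimum cardinalities would then give the formula. Throughout I would use the two properties of the decomposition recorded in Section~\ref{sec:treewidth}: $V_{t_1}\cap V_{t_2}=\mathcal{D}(t)$, and $G$ has no edge between $V_{t_1}\setminus\mathcal{D}(t)$ and $V_{t_2}\setminus\mathcal{D}(t)$; hence $\widetilde{G_t}$ is the union of $\widetilde{G_t}[V_{t_1}]$ and $\widetilde{G_t}[V_{t_2}]$ glued along $\widetilde{G_t}[\mathcal{D}(t)]$.

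The technical core shared by both inclusions is a counting identity. Suppose $\widetilde{G_t}$ is a graph of this shape in which every connected component meeting $\mathcal{D}(t)$ meets it inside a single block of $\mathcal{P}$ (condition~1(b)), and fix a block $X$. Using that a component of $\widetilde{G_t}[V_{t_i}]$ disjoint from $\mathcal{D}(t)$ is already a component of $\widetilde{G_t}$, one shows that each component of $\widetilde{G_t}[V_{t_i}]$ meeting $X$ is contained in a component of $\widetilde{G_t}$ meeting $X$, that conversely each component $C$ of $\widetilde{G_t}$ meeting $X$ has $C\cap V_{t_i}$ equal to a disjoint union of components of $\widetilde{G_t}[V_{t_i}]$ each meeting $X$, and that the sets $C\cap\mathcal{D}(t)$, over components $C$ of $\widetilde{G_t}$ meeting $X$, partition $X$. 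A component-by-component inclusion--exclusion then gives that the number of vertices in components of $\widetilde{G_t}$ meeting $X$ equals $a_1(X)+a_2(X)-|X|$, where $a_i(X)$ counts the vertices in components of $\widetilde{G_t}[V_{t_i}]$ meeting $X$. This is exactly the arithmetic built into condition~2 of the definition of $\sigma[t_1,t_2]_{\cjoin}$.

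For the inclusion from right to left, given $(\sigma_1,\sigma_2)\in\sigma[t_1,t_2]_{\cjoin}$ and $E_i\in\mathcal{E}^c(t_i,\sigma_i)$ agreeing on the bag, I would check the three conditions of $\mathcal{E}^c(t,\sigma)$ for $\widetilde{G_t}=G[V_t]\setminus(E_1\cup E_2)$: a component missing $\mathcal{D}(t)$ is a component of some $\widetilde{G_t}[V_{t_i}]$, hence has at most $h$ vertices; condition~1(b) follows by splitting a path of $\widetilde{G_t}$ at its visits to $\mathcal{D}(t)$, each segment living in one of $\widetilde{G_t}[V_{t_1}],\widetilde{G_t}[V_{t_2}]$ and so, since $\mathcal{P}_1=\mathcal{P}_2=\mathcal{P}$, keeping its bag-endpoints in one block; and condition~2 together with the ``$|V(C)|\le h$'' half of condition~1 follows from the counting identity, because $a_i(X)\le c_i(X)$ forces a total of at most $c_1(X)+c_2(X)-|X|=c(X)\le h$. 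For the reverse inclusion, given $\widehat E\in\mathcal{E}^c(t,\sigma)$ I would set $E_i=\widehat E\cap E(G[V_{t_i}])$, compute $a_i(X)$ as above for $\widetilde{G_t}=G[V_t]\setminus\widehat E$, and set $c_2(X)=a_2(X)$, $c_1(X)=a_1(X)+\bigl(c(X)-(a_1(X)+a_2(X)-|X|)\bigr)$, absorbing into $c_1$ the slack $c(X)-(a_1(X)+a_2(X)-|X|)\ge 0$ (nonnegative by condition~2 for $\widehat E$). One then verifies that $\sigma_i=(\mathcal{P},c_i)$ are valid component-states, that $(\sigma_1,\sigma_2)\in\sigma[t_1,t_2]_{\cjoin}$, and that each $\widetilde{G_t}[V_{t_i}]$ witnesses $E_i\in\mathcal{E}^c(t_i,\sigma_i)$ (conditions~1(a), 1(b) since $\widetilde{G_t}[V_{t_i}]\subseteq\widetilde{G_t}$, and condition~2 with equality by the choice of $c_i$).

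With the identity in hand, the formula follows by taking minimum cardinalities, but here --- unlike in Lemma~\ref{lma:gen-join}, where the state fixes the subgraph $H$ induced on the bag --- there is a genuine subtlety, which I expect to be the main obstacle: the $E_i$ agreeing on the bag could both delete ``non-crossing'' bag edges, so a priori $|E_1\cup E_2|=|E_1|+|E_2|-|E_1\cap E_2|$ with $|E_1\cap E_2|$ larger than $|\{uw\in E(G[\mathcal{D}(t)]):u\not\sim_{\mathcal{P}}w\}|$. This is resolved by the first of the two observations from the start of this section: an optimal deletion set never removes an edge $uv$ of $G[\mathcal{D}(t)]$ with $u\sim_{\mathcal{P}}v$, because re-inserting such an edge merges at most two components of the current graph --- each meeting the block containing $u$ and $v$, so with sizes summing to at most $c(X)\le h$ --- leaving every condition of $\mathcal{E}^c$ intact while strictly lowering the cost. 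Hence a minimum-size member of $\mathcal{E}^c_k(t,\sigma)$, and of each $\mathcal{E}^c_k(t_i,\sigma_i)$, meets $E(G[\mathcal{D}(t)])$ in exactly $\{uw\in E(G[\mathcal{D}(t)]):u\not\sim_{\mathcal{P}}w\}$; since $V_{t_1}\cap V_{t_2}=\mathcal{D}(t)$ forces $E_1\cap E_2\subseteq E(G[\mathcal{D}(t)])$, for these minimisers $E_1\cap E_2$ is exactly that crossing-edge set, so $|E_1\cup E_2|=|E_1|+|E_2|-|\{uw\in E(G[\mathcal{D}(t)]):u\not\sim_{\mathcal{P}}w\}|$, and the $[\,\cdot\,]_{\le k}$ truncation absorbs the cases with no admissible deletion set on one side or combined cost exceeding $k$.
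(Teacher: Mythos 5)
Your proof is correct and follows essentially the same route as the paper's: the same set identity for $\mathcal{E}^c(t,\sigma)$, the same path-splitting argument for condition 1(b), and the same inclusion--exclusion count $a_1(X)+a_2(X)-|X|$ for condition 2. You are in fact more careful than the published proof in the two places it glosses over: the paper's choice $c_i(X)=a_i(X)$ in the reverse inclusion need not satisfy the exact equality $c(X)=c_1(X)+c_2(X)-|X|$ required by the definition of $\sigma[t_1,t_2]_{\cjoin}$ (your slack-absorbing choice of $c_1$ repairs this), and the passage from the set identity to the cardinality formula is genuinely non-trivial here because, unlike in Lemma~\ref{lma:gen-join}, the component-state does not pin down $E_i\cap E(G[\mathcal{D}(t)])$ --- your observation that an optimal deletion set never removes a bag edge joining two vertices in the same block of $\mathcal{P}$ is exactly the missing ingredient.
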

\begin{proof}
As in the proof of Lemma \ref{lma:gen-join}, it suffices to prove that
$$\mathcal{E}^c(t,\sigma) = \bigcup_{(\sigma_1,\sigma_2) \in \sigma[t_1,t_2]_{\cjoin}} \lbrace E_1 \cup E_2: E_1 \in \mathcal{E}^c(t_1,\sigma_1), E_2 \in \mathcal{E}^c(t_2,\sigma_2) \rbrace.$$
To do this, we will exploit some simple observations about $V_t$, $V_{t_1}$ and $V_{t_2}$.  Note that $V_t = V_{t_1} \cup V_{t_2}$, and that by the properties of tree decompositions we have $V_{t_1} \cap V_{t_2} = \mathcal{D}(t)$.  Thus, for any set of vertices $U \subseteq V_t$, we have
\begin{equation}
|U| = |U \cap V_{t_1}| + |U \cap V_{t_2}| - |U \cap \mathcal{D}(t)|.
\label{join-vertex-set}
\end{equation}
Similarly, for any set of edges $F \subseteq E(G[V_t])$, we have
\begin{equation}
|F| = |F \cap E(G[V_{t_1}])| + |F \cap E(G[V_{t_2}])| - |F \cap E(G[\mathcal{D}(t)])|.
\label{join-edge-set}
\end{equation}

We first show that $\mathcal{E}^c(t,\sigma) \subseteq \bigcup_{(\sigma_1,\sigma_2) \in \sigma[t_1,t_2]_{\cjoin}} \lbrace E_1 \cup E_2: E_1 \in \mathcal{E}^c(t_1,\sigma_1), E_2 \in \mathcal{E}^c(t_2,\sigma_2) \rbrace$.  Let $\widehat{E} \in \mathcal{E}^c(t,\sigma)$, and let $\widetilde{G_t} = G[V_t] \setminus \widehat{E}$.  We set $\sigma_1 = (\mathcal{P},c_1)$ and $\sigma_2 = (\mathcal{P},c_2)$ where, for $X \in \mathcal{P}$ and $i \in \{1,2\}$, $c_i(X)$ is defined to be the total number of vertices in connected components of $\widetilde{G_t}[V_{t_i}]$ that intersect $X$.  It is then straightforward to verify that $(\sigma_1,\sigma_2) \in \sigma[t_1,t_2]_{\cjoin}$.  Moreover, if we set $E_1 = \widehat{E} \cap G[V_{t_1}]$ and $E_2 = \widehat{E} \cap G[V_{t_2}]$, then it follows easily that $E_1 \in \mathcal{E}^c(t_1,\sigma_1)$ and $E_2 \in \mathcal{E}^c(t_2,\sigma_2)$.  Since it is clear that $\widehat{E} = E_1 \cup E_2$, this shows that $\mathcal{E}^c(t,\sigma) \subseteq \bigcup_{(\sigma_1,\sigma_2) \in \sigma[t_1,t_2]_{\cjoin}} \lbrace E_1 \cup E_2: E_1 \in \mathcal{E}^c(t_1,\sigma_1), E_2 \in \mathcal{E}^c(t_2,\sigma_2) \rbrace$.  

Conversely, we will now demonstrate that $\bigcup_{(\sigma_1,\sigma_2) \in \sigma[t_1,t_2]_{\cjoin}} \lbrace E_1 \cup E_2: E_1 \in \mathcal{E}^c(t_1,\sigma_1), E_2 \in \mathcal{E}^c(t_2,\sigma_2) \rbrace \subseteq \mathcal{E}^c(t,\sigma)$.  To do this, fix $(\sigma_1,\sigma_2) \in \sigma[t_1,t_2]_{\cjoin}$, and let $E_1 \in \mathcal{E}^c(t_1,\sigma_1)$ and $E_2 \in \mathcal{E}^c(t_2,\sigma_2)$; setting $\widetilde{G_t} = G[V_t] \setminus (E_1 \cup E_2)$, we then need to demonstrate that $\widetilde{G_t}$ satisfies the three conditions of Definition \ref{def:comp-E}.  We also set $\widetilde{G_1} = G[V_{t_1}] \setminus E_1$ and $\widetilde{G_2} = G[V_{t_2}] \setminus E_2$.

Observe that, if $C$ is a connected component of $\widetilde{G_t}$ with $C \cap \mathcal{D}(t) = \emptyset$ then $C$ is contained entirely in either $\widetilde{G_1}$ or $\widetilde{G_2}$, and so $|C| \leq h$.  This fact, combined with the fact (demonstrated below) that condition 2 is satisfied, shows that condition 1(a) is satisfied. 

For condition 1(b), consider a connected component $C$ of $\widetilde{G_t}$, where $C_t = V(C) \cap \mathcal{D}(t) \neq \emptyset$.  We will argue that, for any two vertices $u,w \in V(C)$, we must have $u$ and $w$ belonging to the same block of $\mathcal{P}$.  Recall that $\sim_{\mathcal{P}}$ defines an equivalence relation on $\mathcal{D}(t)$.  Whenever there is a path from $u_1$ to $u_2$ in $\widetilde{G_1}$, $u_1$ and $u_2$ belong to the same component of $\widetilde{G_1}$ and hence $u_1 \sim_{\mathcal{P}} u_2$; similarly, if there is a path from $u_1$ to $u_2$ in $\widetilde{G_2}$, we have $u_1 \sim_{\mathcal{P}} u_2$.  Now consider $u,w \in V(C)$.  Since $u$ and $w$ belong to the same connected component of $\widetilde{G_t}$, there is a path $P$ from $u$ to $w$ in $\widetilde{G_t}$.  Let $u_1 = u, u_2, \ldots, u_r = w$ be the vertices of $V(P) \cap \mathcal{D}(t)$, listed in the order in which they occur as $P$ is traversed from $u$ to $w$. Recall from the properties of a tree decomposition that any path from a vertex in $V_{t_1} \setminus \mathcal{D}(t)$ to a vertex in $V_{t_2} \setminus \mathcal{D}(t)$ must pass through $\mathcal{D}(t)$.  Thus, for $1 \leq i \leq r-1$, it follows that the segment of $P$ from $u_i$ to $u_{i+1}$ is entirely contained in either $\widetilde{G_1}$ or $\widetilde{G_2}$, and hence that $u_i \sim_{\mathcal{P}} u_{i+1}$.  By transitivity, this implies that $u \sim_{\mathcal{P}} w$, or in other words that $u$ and $w$ belong to the same block of $\mathcal{P}$, as required.

For the second condition, let $X$ be a block of $\mathcal{P}$, and let $C_1,\ldots,C_r$ be the components of $\widetilde{G_t}$ that intersect $X$.  We want to show that 
$$\sum_{i=1}^r |V(C_i)| \leq c(X).$$
For each $i$, set $C_i^{(1)} = V(C_i) \cap V_{t_1}$ and $C_i^{(2)} = V(C_i) \cap V_{t_2}$.  Notice that $\{C_1 \cap X, \ldots, C_r \cap X\}$ is a partition of $X$, and moreover that $C_i \cap X = C_i^{(1)} \cap C_i^{(2)}$.  Hence
\begin{align*}
\sum_{i=1}^r |V(C_i)| & = \sum_{i=1}^r |C_i^{(1)} \cup C_i^{(2)}| \\
					  & = \sum_{i=1}^r \left(|C_i^{(1)}| + |C_i^{(2)}| - |C_i^{(1)} \cap C_i^{(2)}| \right) \\
					  & = \sum_{i=1}^r |C_i^{(1)}| + \sum_{i=1}^r |C_i^{(2)}| - \sum_{i=1}^r|C_i^{(1)} \cap C_i^{(2)}| \\
					  & = \sum_{i=1}^r |C_i^{(1)}| + \sum_{i=1}^r |C_i^{(2)}| - |X|,
\end{align*}
Moreover, it is clear that $\bigcup_{i=1}^r C_i^{(1)}$ is the set of vertices in components of $\widetilde{G_1}$ that intersect $X$, so $|\bigcup_{i=1}^r C_i^{(1)}| \leq c_1(X)$; since the sets $C_1^{(1)},\ldots,C_r^{(1)}$ are clearly disjoint, this implies that $\sum_{i=1}^r |C_i^{(1)}| \leq c_1(X)$.  Similarly, we have $\sum_{i=1}^r |C_i^{(2)}| \leq c_2(X)$.  Thus we see that
$$\sum_{i=1}^r |V(C_i)| \leq c_1(X) + c_2(X) - |X| \leq c(X),$$
by definition of $\sigma[t_1,t_2]_{\cjoin}$, so the second condition is satisfied.

This completes the argument that $\bigcup_{(\sigma_1,\sigma_2) \in \sigma[t_1,t_2]_{\cjoin}} \lbrace E_1 \cup E_2: E_1 \in \mathcal{E}^c(t_1,\sigma_1), E_2 \in \mathcal{E}^c(t_2,\sigma_2) \rbrace \subseteq \mathcal{E}^c(t,\sigma)$, and hence the proof.
\end{proof}

\subsection{Algorithmic formulation}
\label{sec:cpt-algorithmic}

In this section, we have expressed our lemmas for recursively calculating component-signatures at nodes in a tree decomposition as pseudocode (Algorithms \ref{alg:genStates} to \ref{alg:joinNode}).  For convenience, ``infinity" is a single arbitrarily large number; in an implementation this might be the maximum number possible in the programming system.

\begin{algorithm}
\caption{Algorithm generating the set of possible component-states $st(t)$ for bag $\mathcal{D}(t)$ }
\label{alg:genStates}
\begin{algorithmic}

\smallskip

\State {\bf Input}: A node $t$ of the nice tree decomposition $\mathcal{T}$, the bag at that node $\mathcal{D}(t)$, the graph $G$, integer $h$
\State {\bf Output}:  A set of component-states $st(t)$
\State ---------------------------------------------------------------------------------------------------
  \State $states \leftarrow \emptyset$ 
   \State $allPartition \leftarrow$ all partitions of $\mathcal{D}(t)$ such that each block is of size at most $h$
             \For {$\mathcal{P} \in allPartition$}
            \State $allFunctions \leftarrow $ all functions $c$ from $\mathcal{P} = \{X_1,\ldots X_{|\mathcal{P}|}\}$ to $[h]$ such that $c(X) \geq |X|$
            
            \For {$c \in allFunctions$ }
               \State add $(\mathcal{P}, c)$ to $states$
            \EndFor
       \EndFor       

   \State \textbf{return} $states$
    
  \end{algorithmic}
\end{algorithm}

\begin{algorithm}
\caption{Algorithm for finding component-signatures of leaf nodes}
\label{alg:leafNode}
\begin{algorithmic}

\smallskip

\State {\bf Input}: A leaf node $t$ of the nice tree decomposition $\mathcal{T}$, the bag at that node $\mathcal{D}(t)$, the graph $G$
\State {\bf Output}:  A mapping from component-states to $del$ values for component-states of the node
\State ---------------------------------------------------------------------------------------------------
   \State delValues $\leftarrow$ empty dictionary
   \State allStates $\leftarrow$ get all the valid $(\mathcal{P}, c)$ component-states using Algorithm \ref{alg:genStates}
   \For {$(\mathcal{P}, c)$ in allStates}
       \If {$|\{uw \in e(G[\mathcal{D}(t)]): u \not\sim_{\mathcal{P}} w\}| \leq k$}
          \State delValues[$t, (\mathcal{P}, c)$] $\leftarrow$ $|\{uw \in e(G[\mathcal{D}(t)]): u \not\sim_{\mathcal{P}} w\}|$
        \Else
             \State delValues[$t, (\mathcal{P}, c)$] $\leftarrow$ infinity
        \EndIf
   \EndFor
   \State \textbf{return} delValues

  \end{algorithmic}
\end{algorithm}

\begin{algorithm}
\caption{Algorithm for finding component-signatures of introduce nodes}
\label{alg:introduceNode}
\begin{algorithmic}

\smallskip

\State {\bf Input}: An introduce node $t$ of the nice tree decomposition $\mathcal{T}$, the bag at that node $\mathcal{D}(t)$, the graph $G$, the child $t'$ of $t$, and a table $delValuesChild$ of $\cdel_k$ values for component-states of $t'$
\State {\bf Output}:  A mapping from component-states to $\cdel_k$ values for component-states of the node
\State ---------------------------------------------------------------------------------------------------
   \State delValues $\leftarrow$ empty dictionary
   \State allStates $\leftarrow$ get all the valid $(\mathcal{P}, c)$ component-states using Algorithm \ref{alg:genStates}
   \For {$(\mathcal{P} = \{X_1,\ldots,X_r\}, c)$ in allStates}
%\\      
%      \State \textit{Comment: generating the $\sigma[t']_{intr}$}
%      \\
      \State  $\sigma[t']_{intr} \leftarrow \emptyset$
      \State $refinements \leftarrow$ all partitions of $X_r \backslash {v}$     
       \For{$\{Y_1,\ldots,Y_s\} \in refinements$ }  
           \State $\mathcal{P}' \leftarrow \{X_1,\ldots,X_{r-1},Y_1,\ldots,Y_s\}$
           \State $allCs \leftarrow$ set of all functions $c':\mathcal{P}'\rightarrow[h]$ such that $c'(X_i) = c(X_i)$ if $1 \leq i \leq r-1$ and $\sum_{i=1}^{s}{c'(Y_i)} = c(X_r) - 1$
            \For {$c' \in allCs$}
                \State add $(\mathcal{P}', c')$ to $\sigma[t']_{intr}$
            \EndFor
         \EndFor
      \State minValue $\leftarrow $ infinity
      \For {$\sigma' \in \sigma[t']_{intr}$}
          \State value $\leftarrow$ delValuesChild[$t', \sigma'$] + $|\{uv \in E(G[\mathcal{D}(t)]): u \not\sim_{\mathcal{P}} v\}|$
          \If {value $<$ minValue}
              \State minValue $\leftarrow$ value
          \EndIf
      \EndFor
      \\
       \If {$minValue \leq k$}
          \State delValues[$t, (\mathcal{P}, c)$] $\leftarrow$ $minValue$
        \Else
             \State delValues[$t, (\mathcal{P}, c)$] $\leftarrow$ infinity
        \EndIf
   \EndFor
   \State \textbf{return} delValues

  \end{algorithmic}
\end{algorithm}

\begin{algorithm}
\caption{Algorithm for finding component-signatures of forget nodes}
\label{alg:forgetNode}
\begin{algorithmic}

\smallskip

\State {\bf Input}: A forget node $t$ of the nice tree decomposition $\mathcal{T}$, the bag at that node $\mathcal{D}(t)$, the graph $G$, the child $t'$ of $t$, and a table $delValuesChild$ of $\cdel_k$ values for component-states of $t'$
\State {\bf Output}:  A mapping from component-states to $\cdel_k$ values for component-states of the node
\State ---------------------------------------------------------------------------------------------------
   \State delValues $\leftarrow$ empty dictionary
   \State allStates $\leftarrow$ get all the valid $(\mathcal{P}, c)$ component-states using Algorithm \ref{alg:genStates}
   \For {$(\mathcal{P}, c)$ in allStates}
       %  \State \textit{Comment: generating the $\sigma[t']_{fgt}$}
         \State $\sigma[t']_{fgt} \leftarrow \emptyset$
         \State $allPartitions \leftarrow $ all partitions $\mathcal{P}'$ such that $\mathcal{P} = \mathcal{P}' \backslash \{v\}$
        \For {$\mathcal{P}' \in allPartitions$}
         \State $allc' \leftarrow \emptyset$
         \State $c' \leftarrow $ empty function
         \State vSingleton $\leftarrow$ false
         \For {$Y \in \mathcal{P}'$}
            \If {$Y \backslash \{v\} \neq \emptyset$}
              \State $c'(Y) \leftarrow c(Y \backslash \{v\})$
            \Else 
            \State vSingleton $\leftarrow$ true
            \EndIf
         \EndFor
         \If {not vSingleton}
         	\State add $c'$ to $allc'$
         \Else
	         \For {$i=1$ to $h$}
         		\State $c'(\{v\}) \leftarrow i$
         		\State add $c'$ to $allc'$
         	\EndFor
         \EndIf
         \For {$\mathcal{P}' \in allPartitions$}
             	\For {$c' \in allc'$} 
                   \State add $(\mathcal{P}', c')$ to $\sigma[t']_{fgt}$
                \EndFor
          \EndFor 
        \EndFor
      \State minValue $\leftarrow $ infinity
      \For {$\sigma' \in \sigma[t']_{fgt}$}
          \State value $\leftarrow$ delValuesChild[$t', \sigma'$] 
          \If {value $<$ minValue}
              \State minValue $\leftarrow$ value
          \EndIf
      \EndFor
             \If {$minValue \leq k$}
          \State delValues[$t, (\mathcal{P}, c)$] $\leftarrow$ $minValue$
        \Else
             \State delValues[$t, (\mathcal{P}, c)$] $\leftarrow$ infinity
        \EndIf
   \EndFor
   \State \textbf{return} delValues

  \end{algorithmic}
\end{algorithm}

\begin{algorithm}
\caption{Algorithm for finding component-signatures of join nodes}
\label{alg:joinNode}
\begin{algorithmic}

\smallskip

\State {\bf Input}: A join node $t$ of the nice tree decomposition $\mathcal{T}$, the bag at that node $\mathcal{D}(t)$, the graph $G$, the join node's children $t_1, t_2$, and a table $delValuesChild$ of $\cdel_k$ values for component-states of $t_1$ and $t_2$
\State {\bf Output}:  A mapping from component-states to $\cdel_k$ values for component-states of the node
\State ---------------------------------------------------------------------------------------------------
   \State delValues $\leftarrow$ empty dictionary
   \State allStates $\leftarrow$ get all the valid $(\mathcal{P}, c)$ component-states using Algorithm \ref{alg:genStates}
   \For {$(\mathcal{P}, c)$ in allStates}
%   \\      
%      \State \textit{Comment: generating the $\sigma[t_1, t_2]_{join}$}
%      \\
      \State $ \sigma[t_1, t_2]_{join} \leftarrow \emptyset$ 
      \State $\mathcal{P}_1 \leftarrow \mathcal{P}$
      \State $\mathcal{P}_2 \leftarrow \mathcal{P}$
      \State $allFunctions \leftarrow $ set of all function pairs $(c_1, c_2)$ such that for every block $X \in \mathcal{P}$, $c(X) = c_1(X) + c_2(X) - |X|$
      \For {$(c_1, c_2) \in allFunctions$}
          \State add $((\mathcal{P}_1, c_1), (\mathcal{P}_2, c_2))$ to $\sigma[t_1, t_2]_{join}$
      \EndFor
       \State minValue $\leftarrow $ infinity
      \For {$(\sigma_1, \sigma_2) \in \sigma[t_1, t_2]_{join}$}
          \State value $\leftarrow$ delValuesChild[$t_1, \sigma_1$] + delValuesChild[$t_2, \sigma_2$] - $|\{uw \in E(G[\mathcal{D}(t)]): u \not\sim_{\mathcal{P}} w\}|$
          \If {value $<$ minValue}
              \State minValue $\leftarrow$ value
          \EndIf
      \EndFor
       \If {minValue $\leq k$}
          \State delValues[$t, (\mathcal{P}, c)$] $\leftarrow$ minValue
        \Else
             \State delValues[$t, (\mathcal{P}, c)$] $\leftarrow$ infinity
        \EndIf
   \EndFor
   \State \textbf{return} delValues

  \end{algorithmic}
\end{algorithm}

\subsection{Running time and extensions}
\label{sec:cpt-runtime}

At each of the $O(n)$ nodes of the nice tree decomposition, we will generate, and then iterate over, fewer than $(wh)^w$ component-states for that node.  For each of those component-states, we will need to consider a collection of at most $(wh)^w$ inherited component-states (or pairs of component-states, in the case of a join node) of the child node(s).  In the algorithm, we first generate each of the component-states for a given node, and the corresponding set of inherited component-states for its children, then iterate over each relevant combination of component-states, performing various constant-time operations.  Thus, at each of $O(n)$ nodes we do $O\left((wh)^{2w}\right)$ work, giving an overall time complexity of $O((wh)^{2w}n)$. 

For simplicity, we have only described the most basic version of the algorithm; however, it is straightforward to extend it to deal with more complicated situations, involving any or all of the following.
\\

\noindent
\emph{Deleting edges so that the sum of weights of vertices in any component is at most $h$, where a weight function $w:V(G) \rightarrow \mathbb{N}$ is given:} change condition 1(a) in the definition of $\mathcal{E}^c(t,\sigma)$ to $\sum_{v \in V(C)} w(v) \leq h$, and add to the definition of the set of valid component-states for a node the condition that, for each block $X$ of $\mathcal{P}$, we have $\sum_{v \in X} w(X) \leq c(X)$.\\

\noindent
\emph{Deleting edges so that each vertex $v$ belongs to a component containing at most $\ell(v)$ vertices, where a limit function $\ell: V(G) \rightarrow \mathbb{N}$ is given:} change condition 1(a) in the definition of $\mathcal{E}^c(t,\sigma)$ to $|V(C)| \leq \min_{v \in V(C)} \ell(v)$, and add to the definition of the set of valid component-states for a node the condition that, for each block $X$ of $\mathcal{P}$, we have $c(X) \leq \min_{v \in X} \ell(v)$.\\

\noindent
Neither of these adaptations changes the asymptotic running time of the algorithm.  It is also straightforward to deal with different deletion costs for different edges, and to output an optimal set of edges to delete, as discussed for the general algorithm in Section \ref{sec:gen-runtime}.

\subsection{Experimental results}
\label{sec:expt}

As an example of its practical use, we have tested an implementation of our algorithm and a constraint satisfaction programming (CP) formulation of the problem to find minimum deletions to maximum components of five vertices on graphs derived from the persistent Scottish trade links.

We report early preliminary results in Table \ref{tab:experiments}, giving the minimum deletions found by our algorithm and by the CP, and the time required.  For the CP, most of the deletions were not confirmed to be minimum within 2 hours, but we report the time to this confirmation if it occurred within the 2 hour running time limit.  
We use an ensemble CP approach, running two separate solvers, and reporting the best result within the two hour limit.  The first CP uses the MiniZinc \cite{minizinc} modelling language as a front-end to the Gecode solver \cite{schultemodeling,gecode}.   Gecode was chosen for its ease of use and fast performance: it participates regularly in benchmarking challenges, and came first in all categories in the MiniZinc Challenge in 2009, 2010, 2011, and 2012 \cite{stuckeyMinizinc}.  The second CP uses a custom-written program in the Choco solver \cite{choco}, a medalist in the last three MiniZinc Challenges, chosen for its speed and the presence of a local expert.  

Neither our algorithm nor the CP solver were successful in finding a minimum deletion within 24 hours on several of the graphs with higher treewidths (15 and 12), and we do not report these in the table.  
In all cases our algorithm is much faster in clock-time than the CP approach: both could likely be improved by careful optimisation of the implementations.

The implementation of our algorithm takes as input a tree decomposition (generated using  LibTW \cite{libtw}) of a graph and the graph itself. Our implementation of the tree decomposition-based method can be found at github.com/magicicada/fpt-edge-deletion.

\begin{table}
	\caption{Experimental results for deletion to a maximum component size of 5 comparing the performance of our tree decomposition based algorithm and the CP method.  $v(G)$ is the number of vertices in the graph, $e(G)$ the number of edges, and $tw(G)$ the treewidth.  For our tree decomposition-based method we report the minimum deletion found and the approximate clock time to find it.  For the CP method we report the minimum deletion found within 2 hours, the approximate clock time to find it, and the approximate clock time to confirm that deletion is minimum: a dash in this column indicates that the confirmation that the deletion was minimum did not complete within 2 hours. The graphs used are a selection of anonymised cattle trading graphs from Scotland. Times are given in seconds, with times over 30 minutes rounded up to the nearest hour.}
	\label{tab:experiments}
	\begin{tabular}{p{1.3cm}p{0.6cm}p{0.6cm}p{1.cm}|p{2cm}p{1cm}|p{2cm}p{1cm}p{1.7cm}}
%
%\hline
\multicolumn{4}{c}{Graph information} & \multicolumn{2}{c}{Tree decomposition method} & \multicolumn{3}{c}{CP method} \\
\hline
	Graph ID & v(G) & e(G) & tw(G) & Minimum deletion found & Time & Minimum deletion found & Time & Time to confirmation  \\
	\hline
2010-0 & 104 & 110 & 4 & 38 & 53  & 82 & 7200 &-\\ %7h37
2010-3 & 45 & 45 & 3 & 11 & 4  & 21 & 7200  &-\\ %5h55 & 
2010-4 & 38 & 38 & 3 & 20 & 7  & 24 & 3600 &-\\ % 7h 3m
2010-5 & 37 & 40 & 4 & 7 & 64 & 7 & 1032 & 1046 \\
2012-0 & 97 & 119 & 5 & 58 & 7200 & 93 & 7200  &-\\ %1 55
2012-1 & 72 & 74 & 3 & 20 & 11  & 37 & 7200  &-\\ %9h4
2012-4 & 31 & 30 & 2 & 12 & 3  & 12 & 1082  &-\\
2012-5 & 49 & 52 & 3 & 15 & 7 & 30  & 7200 &-\\% 5hr42
2013-1 & 45 & 47 & 4 & 20 & 35  & 25 & 3600 &-\\
2013-3 & 61 & 62 & 4 & 19 & 38  & 30 & 7200  &-\\ % 2h10m
2013-4 & 35 & 38 & 4 & 15 & 5  & 15 & 3600 &-\\ %1hr 10m
%2013-5 & 41 & 50 & 4 & 17 & 52  & - & - &-\\ 27 at 32 min 48
2013-6 & 39 & 41 & 3 & 14 & 4  & 17 & 7200 &-\\ %1h58
2014-0 & 32 & 49 & 4 & 28 & 445  & 28 & 3600 &-\\ %
2014-1 & 47 & 47 & 3 & 18 & 6  & 24 & 7200  &-\\ %1h 47m
2014-2 & 57 & 59 & 4 & 17 & 10  & 33 & 7200 &-\\ %5h 41m 
2014-3 & 41 & 40 & 2 & 15 & 5  & 18 & 3600 &-\\
2014-5 & 31 & 35 & 4 & 8 & 18  & 8 & 123 & 123\\
2014-6 & 48 & 49 & 3 & 21 & 11  & 29 &  7200 &-\\ %3h
2014-7 & 31 & 32 & 3 & 16 & 4  & 16 & 7 &-\\
\end{tabular}
\end{table}

\section{Conclusions and open problems}

We have investigated the relevance of the well-studied graph parameter treewidth to the structure of real-world animal trade networks, and have provided evidence that this parameter is likely to be small for many networks of interest for epidemiological applications.  Motivated by this observation, we have derived an algorithm to solve \genprobname ~on input graphs having $n$ vertices and treewidth bounded by some fixed constant $w$ in time \genruntime, if no graph in $\mathcal{F}$ has more than $h$ vertices.  The special case of this problem in which $\mathcal{F}$ is the set of all trees on at most $h+1$ vertices is of particular interest from the point of view of the control of disease in livestock, and we have derived an improved algorithm for this special case, running in time $O((wh)^{2w}n)$.  It is straightforward to adapt both algorithms to deal with more complicated situations likely to arise in the application.

Many open questions remain concerning the complexity of this problem more generally, as we are far from having a complete complexity classification.  We know that useful structure in the input graph is required to give an fpt-algorithm: we demonstrated that it is not sufficient to parameterise by the maximum component size $h$ alone (unless P=NP).  However, it remains open whether the problem might belong to FPT when parameterised only by the treewidth $w$; we conjecture that treewidth alone is \emph{not} enough, and that the problem is W[1]-hard with respect to this parameterisation.  Considering other potentially useful structural properties of input graphs, one question of particular relevance to epidemiology would be the complexity of the problem on planar graphs: this would be relevant for considering the spread of a disease based on the geographic location of animal holdings (in situations where a disease is likely to be transmitted between animals in adjacent fields).  

Furthermore, animal movement networks can capture more information on real-world activity when considered as \emph{directed} graphs, and the natural generalisation of \compprobname ~to directed graphs in this context would be to consider whether it is possible to delete at most $k$ edges from a given directed graph so that the maximum number of vertices \emph{reachable} from any given starting vertex is at most $h$.  Exploiting information on the direction of movements might allow more efficient algorithms for this problem when the underlying undirected graph does not have very low treewidth; a natural first question would be to consider whether there exists an efficient algorithm to solve this problem on directed acyclic graphs.

\section*{Acknowledgements}

The authors would like to thank the following: Ivaylo Valkov for his assistance in developing an initial implementation of this algorithm as part of a summer research project, and the Engineering and Physical Sciences Research Council for providing funding for this summer project; EPIC: Scotland's Centre of Expertise on Animal Disease Outbreaks, which supported JE for part of her work on this project; the Royal Society of Edinburgh which supported KM for part of her work on this project through a Personal Research Fellowship funded by the Scottish Government; Ciaran McCreesh and Patrick Prosser for their assistance in developing the CP formulation against which we compared the performance of our algorithm.


\begin{thebibliography}{10}
\providecommand{\url}[1]{{#1}}
\providecommand{\urlprefix}{URL }
\expandafter\ifx\csname urlstyle\endcsname\relax
  \providecommand{\doi}[1]{DOI~\discretionary{}{}{}#1}\else
  \providecommand{\doi}{DOI~\discretionary{}{}{}\begingroup
  \urlstyle{rm}\Url}\fi

\bibitem{andersonAIDS}
Anderson, R., Gupta, S., Ng, W.: The significance of sexual partner contact
  networks for the transmission dynamics of {HIV}.
\newblock Journal of Acquired Immune Deficiency Syndromes and Human
  Retrovirology \textbf{3}, 417--429 (1990)

\bibitem{arnborg87}
Arnborg, S., Corneil, D.G., Proskurowski, A.: Complexity of finding embeddings
  in a $k$-tree.
\newblock SIAM J.~Alg.~Disc.~Meth. \textbf{8}, 277--284 (1987)

\bibitem{arnborg91}
Arnborg, S., Lagergren, J., Sesse, D.: Easy problems for tree-decomposable
  graphs.
\newblock Journal of Algorithms \textbf{12}, 308--340 (1991)

\bibitem{bodlaender93}
Bodlaender, H.L.: A linear time algorithm for finding tree-decompositions of
  small treewidth.
\newblock In: Proceedings of the Twenty-fifth Annual ACM Symposium on Theory of
  Computing, STOC '93, pp. 226--234. ACM, New York, NY, USA (1993).
\newblock \doi{10.1145/167088.167161}

\bibitem{caiApprox}
Cai, L.: Fixed-parameter tractability of graph modification problems for
  hereditary properties.
\newblock Inf. Process. Lett. \textbf{58}(4), 171--176 (1996).
\newblock \doi{10.1016/0020-0190(96)00050-6}

\bibitem{courcelle93}
Courcelle, B., Mosbah, M.: Monadic second-order evaluations on
  tree-decomposable graphs.
\newblock Theoretical Computer Science \textbf{109}(12), 49 -- 82 (1993).
\newblock \doi{10.1016/0304-3975(93)90064-Z}

\bibitem{libtw}
van Dijk, T., van~den Heuvel, J.P., Slob, W.: Computing treewidth with libtw
  (2006), \texttt{http://www.treewidth.com/treewidth}

\bibitem{drange14}
Drange, P.G., Dregi, M.S., van~'t Hof, P.: On the computational complexity of
  vertex integrity and component order connectivity.
\newblock In: H.K. Ahn, C.S. Shin (eds.) Algorithms and Computation,
  \emph{Lecture Notes in Computer Science}, vol. 8889, pp. 285--297. Springer
  International Publishing (2014).
\newblock \doi{10.1007/978-3-319-13075-023}

\bibitem{elmallah1988}
El-Mallah, E.S., Colbourn, C.J.: The complexity of some edge deletion problems.
\newblock IEEE Trans. Circuits and Systems \textbf{3}, 354--362 (1988)

\bibitem{Gao2012566}
Gao, Y.: Treewidth of {E}rd{\H{o}}s-{R}enyi random graphs, random intersection
  graphs, and scale-free random graphs.
\newblock Discrete Applied Mathematics \textbf{160}(45), 566 -- 578 (2012).
\newblock \doi{10.1016/j.dam.2011.10.013}

\bibitem{gareyJohnson}
Garey, M.R., Johnson, D.S.: Computers and Intractability: A Guide to the Theory
  of {NP}-Completeness.
\newblock W. H. Freeman \& Co., New York, NY, USA (1979)

\bibitem{gecode}
{Gecode Team}: Gecode: Generic constraint development environment (2016).
\newblock Available from \texttt{http://www.gecode.org}

\bibitem{ghoshSplit2012}
Ghosh, E., Kolay, S., Kumar, M., Misra, P., Panolan, F., Rai, A., Ramanujan,
  M.: Faster parameterized algorithms for deletion to split graphs.
\newblock In: F.~Fomin, P.~Kaski (eds.) Algorithm Theory – SWAT 2012,
  \emph{Lecture Notes in Computer Science}, vol. 7357, pp. 107--118. Springer
  Berlin Heidelberg (2012).
\newblock \doi{10.1007/978-3-642-31155-010}

\bibitem{goldberg1993}
Goldberg, P.W., Golumbic, M.C., Kaplan, H., Shamir, R.: Four strikes against
  physical mapping of {DNA}.
\newblock Journal of Computational Biology \textbf{2}, 139--152 (1993)

\bibitem{golumbicBook}
Golumbic, M.C.: Algorithmic graph theory and perfect graphs, vol.~57.
\newblock Elsevier (2004)

\bibitem{gross13}
Gross, D., Heinig, M., Iswara, L., Kazmierczak, L.W., Luttrell, K., Saccoman,
  J.T., Suffel, C.: A survey of component order connectivity models of graph
  theoretic networks.
\newblock SWEAS Trans.~Math. \textbf{12}(9) (2013)

\bibitem{guoKernels}
Guo, J.: Problem kernels for {NP}-complete edge deletion problems: Split and
  related graphs.
\newblock In: T.~Tokuyama (ed.) Algorithms and Computation, \emph{Lecture Notes
  in Computer Science}, vol. 4835, pp. 915--926. Springer Berlin Heidelberg
  (2007).
\newblock \doi{10.1007/978-3-540-77120-379}

\bibitem{fmdNetwork}
Kao, R.R., Green, D.M., Johnson, J., Kiss, I.Z.: Disease dynamics over very
  different time-scales: foot-and-mouth disease and scrapie on the network of
  livestock movements in the {UK}.
\newblock Journal of The Royal Society Interface \textbf{4}(16), 907--916
  (2007).
\newblock \doi{10.1098/rsif.2007.1129}

\bibitem{kloks94}
Kloks, T.: Treewidth, \emph{Lecture Notes in Computer Science}, vol. 842.
\newblock Springer-Verlag, Berlin (1994).
\newblock \doi{10.1007/BFb0045375}.
\newblock Computations and approximations

\bibitem{Lewis80}
Lewis, J.M., Yannakakis, M.: The node-deletion problem for hereditary
  properties is {NP}-complete.
\newblock Journal of Computer and System Sciences \textbf{20}(2), 219 -- 230
  (1980).
\newblock \doi{10.1016/0022-0000(80)90060-4}

\bibitem{li11}
Li, A., Tang, L.: The complexity and approximability of minimum contamination
  problems.
\newblock In: M.~Ogihara, J.~Tarui (eds.) Theory and Applications of Models of
  Computation, \emph{Lecture Notes in Computer Science}, vol. 6648, pp.
  298--307. Springer Berlin Heidelberg (2011).
\newblock \doi{10.1007/978-3-642-20877-5\_30}

\bibitem{margot1994}
Margot, F.: Some complexity results about threshold graphs.
\newblock Discrete Applied Mathematics \textbf{49}(1:3), 299 -- 308 (1994).
\newblock \doi{10.1016/0166-218X(94)90214-3}.
\newblock Special Volume Viewpoints on Optimization

\bibitem{bcmsDescription}
Mitchell, A., Bourn, D., Mawdsley, J., Wint, W., Clifton-Hadley, R., Gilbert,
  M.: Characteristics of cattle movements in britain: an analysis of records
  from the cattle tracing system.
\newblock Animal Science \textbf{80}, 265--273 (2005).
\newblock \doi{10.1079/ASC50020265}

\bibitem{natanzon2001}
Natanzon, A., Shamir, R., Sharan, R.: Complexity classification of some edge
  modification problems.
\newblock Discrete Applied Mathematics \textbf{113}(1), 109 -- 128 (2001).
\newblock \doi{10.1016/S0166-218X(00)00391-7}.
\newblock Selected Papers: 12th Workshop on Graph-Theoretic Concepts in
  Computer Science

\bibitem{minizinc}
Nethercote, N., Stuckey, P.J., Becket, R., Brand, S., Duck, G.J., Tack, G.:
  Mini{Z}inc: Towards a standard {CP} modelling language.
\newblock In: Proceedings of the 13th International Conference on Principles
  and Practice of Constraint Programming, CP'07, pp. 529--543. Springer-Verlag,
  Berlin, Heidelberg (2007)

\bibitem{choco}
Prud'homme, C.,  Fages, J-G.,  Lorca, X.: Choco Documentation
\newblock Web site http://www.choco-solver.org (2016)


\bibitem{robertsonSeymourGM2}
Robertson, N., Seymour, P.: Graph minors. {III.} planar tree-width.
\newblock Journal of Combinatorial Theory, Series B \textbf{36}(1), 49 -- 64
  (1984).
\newblock \doi{10.1016/0095-8956(84)90013-3}.
\newblock
  \urlprefix\url{http://www.sciencedirect.com/science/article/pii/0095895684900133}

\bibitem{schultemodeling}
Schulte, C., Tack, G., Lagerkvist, M.Z.: Modeling and programming with
  {G}ecode.
\newblock Web site: http://www. gecode. org  (2013)

\bibitem{stuckeyMinizinc}
Stuckey, P.J., Feydy, T., Schutt, A., Tack, G., Fischer, J.: 
The Mini{Z}inc Challenge 2008-2013.
\newblock {AI} Magazine \textbf{35}(2), 55--60 (2014).

\bibitem{watanabe1983}
Watanabe, T., Ae, T., Nakamura, A.: On the {NP}-hardness of edge-deletion and
  -contraction problems.
\newblock Discrete Applied Mathematics \textbf{6}(1), 63 -- 78 (1983).
\newblock \doi{10.1016/0166-218X(83)90101-4}

\bibitem{yannaSTOC1978}
Yannakakis, M.: Node-and edge-deletion {NP}-complete problems.
\newblock In: Proceedings of the Tenth Annual ACM Symposium on Theory of
  Computing, STOC '78, pp. 253--264. ACM, New York, NY, USA (1978).
\newblock \doi{10.1145/800133.804355}

\end{thebibliography}
\end{document}